\newtheorem{theorem}{Theorem}
\newtheorem{lemma}{Lemma}
\newtheorem{postulate}{Postulate}
\newtheorem*{law*}{Law}
\newtheorem{proposition}{Proposition}
\theoremstyle{definition}
\newtheorem{definition}{Definition}
\newcommand{\pr}{\ensuremath{\mathrm{P}}}
\newcommand{\piu}%
{\textrm{\greektext p}}
\newcommand{\eu}%
{\ensuremath{\mathrm{e}}}
\newcommand{\iu}%
{\ensuremath{\mathrm{i}}}
\providecommand{\newoperator}[3]{%
\newcommand*{#1}{\mathop{#2}#3}}
\newcommand{\tran}%
{\textsf{T}}
\newcommand{\herm}%
{\textsf{H}}
\newcommand{\deltau}%
{\textrm{\greektext d}}
\newcommand{\Deltau}%
{\textrm{\greektext D}}
\providecommand*{\diff}%
{\@ifnextchar^{\DIfF}{\DIfF^{}}}
\def\DIfF^#1{%
\mathop{\mathrm{\mathstrut d}}
\nolimits^{#1}\gobblespace}
\def\gobblespace{%
\futurelet\diffarg\opspace}
\def\opspace{%
\let\DiffSpace\!%
\ifx\diffarg(%
\let\DiffSpace\relax
\else
\ifx\diffarg[%
\let\DiffSpace\relax
\else
\ifx\diffarg\{%
\let\DiffSpace\relax
\fi\fi\fi\DiffSpace}
\newcommand{\Schroedinger}{Schr\"{o}dinger\xspace}
\newcommand{\normalx}[3]{\ensuremath{\mathscr{N}\left(#1 \mid #2,#3\right)}\xspace}
 \newcommand{\bos}{\ensuremath{\mathrm{b}}\xspace}
 \newcommand{\fer}{\ensuremath{\mathrm{f}}\xspace}
\renewcommand{\piu}{\uppi}
\renewcommand{\deltau}{\updelta}
\let\originalpartial\partial
\let\partial\relax
\newrobustcmd*{\partial}{\text{\rotatebox[origin=t]{10}{\scalebox{0.95}[1]{\ensuremath{\originalpartial}}}}\hspace{-0.05em}}
\newcommand{\CC}{\ensuremath{C}\xspace}
\newcommand{\electron}{\ensuremath{\text{electron}}\xspace}
\newcommand{\photon}{\ensuremath{\text{photon}}\xspace}
\newcommand{\ee}{\ensuremath{\text{e}}\xspace}
\newcommand{\planckbar}{\ensuremath{\hbar}\xspace}
\newcommand{\setConstant}{\ensuremath{\mathscr{C}}\xspace}
\newcommand{\setDecreasing}{\ensuremath{\mathscr{D}}\xspace}
\newcommand{\setIncreasing}{\ensuremath{\mathscr{I}}\xspace}
\newcommand{\setOscillating}{\ensuremath{\mathscr{O}}\xspace}
\newcommand{\hessian}{\ensuremath{\mathbf{H}}\xspace}
\newcommand{\Identitymatrix}{\ensuremath{\mathbf{I}}\xspace}
\newcommand{\Sigmabold}{\ensuremath{\bm{\mathrm{\Sigma}}}\xspace}
\newcommand{\kernBeforeIntegral}{\ensuremath{\kern-0.17em}\xspace}
\newcommand{\physics}{physics\xspace}
\newcommand{\mechanics}{mechanics\xspace}
\newcommand{\eqdefA}{\overset{\Updelta}{=}\xspace}
\newcommand{\entropyS}{\ensuremath{\mathrm{S}\xspace}}
\newcommand{\setEvolutions}{\ensuremath{\mathscr{E}}\xspace}
\newcommand{\QCurve}{QCurve\xspace}
\newcommand{\QCurves}{QCurves\xspace}
\newcommand{\setAll}{\ensuremath{\mathscr{E}}\xspace}
\newcommand{\CPT}{\ensuremath{\mathrm{CPT}}\xspace}
\newcommand{\QFT}{\ensuremath{\mathrm{QFT}}\xspace}
\begin{document}

\title{
Quantum-Entropy Physics
}
\author{{Davi Geiger} and {Zvi M.\ Kedem}
\\
Courant Institute of Mathematical Sciences\\
New York University}

\maketitle

\begin{abstract}

All the laws of physics are time-reversible. Time arrow emerges only when ensembles of classical particles are treated probabilistically, outside of physics laws, and the entropy and the second law of thermodynamics are introduced. In quantum physics, no mechanism for a time arrow has been proposed despite its intrinsic probabilistic nature. In consequence, one cannot explain why an electron in an excited state will ``spontaneously'' transition into a ground state as a photon is created and emitted, instead of continuing in its reversible unitary evolution. To address such phenomena, we introduce an entropy for quantum physics, which will conduce to the emergence of a time arrow.

The entropy is a measure of randomness over the degrees of freedom of a quantum state. It is dimensionless; it is a relativistic scalar, it is invariant under coordinate transformation of position and momentum that maintain conjugate properties and under CPT transformations; and its minimum is positive due to the uncertainty principle.

To excogitate why some quantum physical processes cannot take place even though they obey conservation laws, we partition the set of all evolutions of an initial state into four blocks, based on whether the entropy is (i) increasing but not a constant, (ii) decreasing but not a constant, (iii) a constant, (iv) oscillating. We propose a law that in quantum physics entropy (weakly) increases over time. Thus, evolutions in the set (ii) are disallowed, and evolutions in set (iv) are barred from completing an oscillation period by instantaneously transitioning  to a new state. This law for quantum physics limits physical scenarios beyond conservation laws, providing causality reasoning by defining a time arrow.

\end{abstract}

\tableofcontents

\section{Introduction}

Today's classical and quantum \physics laws are time-reversible and a time arrow emerges in \physics  only when a probabilistic behavior of ensembles of particles is considered and physical  causes for various phenomena are set. However, for many physical events that do obey the time arrow, such as ``an excited electron in the hydrogen atom  jumps to the ground state while emitting radiation,'' described as ``a spontaneous emission,'' no physical explanations for
their causes are known. While transition probabilities obtained from Fermi's golden rule \cite{dirac1927quantum,fermi1950nuclear} for the hydrogen atoms are highly accurate, this rule, derived from an energy perturbation method, cannot be a source of the time arrow or causality.   More generally, although conservation laws must be obeyed, they do not provide a physical account for the instant particles are created. For example, in an excited hydrogen, when a photon is emitted and the electron jumps to the ground state, we argue, an instantaneous irreversible process occurs where a photon is created.
No physical explanation for the  cause of such phenomena is known,  and if it were known, it could shed light on the time arrow.

Quantum \physics introduces probability  as intrinsic to the description of a single-particle system. A probability $\pr(a)$ is assigned to each value of $a \in \mathscr{A}$, where $\mathscr{A}$ is an observable.  For a finite set $\mathscr{A}=\{a_i :  i = 1, 2, \dots, N\}$, the Shannon entropy, $\text{H}=-\sum_{i=1}^N \pr(a_i) \log_2 \pr(a_i)$, is a measure of information about $ \mathscr{A}$. The more concentrated is the probability around a few values of $\mathscr{A}$, the more information is provided about an observable, and the lower is the entropy.  Entropy is a measure of such information, or of lack of information.

Extending the concept of entropy to continuous variables, continuous distributions, and to quantum \mechanics has been challenging. For example, von Neumann's entropy \cite{von2018mathematical} requires the existence of classical statistics elements (mixed states) in order not to vanish, and consequently it assigns zero entropy to all one-particle systems. Our goal is to assign an entropy measure for a one-particle system that can be extended to multiple particles. Therefore, we cannot consider von Neumann's entropy as a starting point for an entropy measure in the quantum domain.

Another challenge for proposing an entropy is that in quantum \mechanics a one-particle system is described by a quantum state $\ket{\psi_t}$, which is a ray in Hilbert space, while in quantum field theory such a state is described by an operator $\Psi$ acting on the vacuum state $\ket{0}$  and written as a linear combination of a creation and an annihilation operator. We do propose an entropy that is applicable in both scenarios: the  quantum \mechanics and  the quantum field  theory.

We require the entropy  (i) to account for all the degrees of freedom of a state, (ii)  to be a measure of randomness of such a state, (iii) to be  invariant under the applicable continuous and discrete transformations.    In classical physics, Boltzmann entropy and Gibbs entropy, and their respective H-theorems  \cite{gibbs2014elementary}, are formulated in phase space, reflecting the degrees of freedom of a system. In quantum \mechanics position and momentum are conjugate operators and their eigenstates  will allow to describe a phase space,  while in quantum field theory the position is ``demoted'' to a  variable, and the  Fourier transform of the fields introduces a spatial frequency variable that together with position compose a quantum field phase space coordinate system.
In addition to position and momentum  we  must also consider the degree of freedom associated with the spin operator and  as one expands  to the standard model, other degrees of freedom, such as flavor, will need to be incorporated as well. These internal degrees of freedom are captured by representing the states with more complex structure, such as  Dirac spinors for fermions and  the two polarization components for photons, as well as providing the groups of transformations  that such structures follow. Applicable continuous transformations to the states include change of coordinates and special relativity, and applicable discrete transformations include Charge Conjugation (C), Parity(P), Time Reversal (T), and their concatenation (CPT).

We  propose an entropy  that satisfies all the above requirements. We note that in contrast to  the entropy in classical physics, the minimum value of our entropy must be positive due to the standard uncertainty principle~\cite{Robertson1929}.

Having defined an entropy, we then propose an entropy law, inspired by the second law of thermodynamics, stating that only states that evolve with (weakly) increasing entropy are allowed. This law provides the time arrow as the arrow of information loss.
Exploring the consequences of this law, we analyze physical phenomena  reported as spontaneous transitions  or particle transformation that may be caused by the proposed entropy law.

The paper is organized as follows.

In Section~\ref{sec:quantum-entropy-def} we propose an entropy measure of randomness of a quantum state.

In Section~\ref{sec:entropy}  we examine various properties of our proposed entropy. First, in Section~\ref{subsec:minimum-entropy}  we prove its minimum. Then,  we prove its invariant properties under coordinate transformations in phase space in Section~\ref{subsec:coordinate-transformation}, and under CPT transformations in Section~\ref{subsec:cpt}. In Section~\ref{subsec:Lorentz-transformation} we show  that the entropy is a scalar in special relativity.   We conclude Section~\ref{sec:entropy}  with a proposal for a \QCurve structure to analyze particles' evolution by partitioning the space of such \QCurves according to the entropy behavior during the evolution.

In Section~\ref{sec:dispersion-coherent-states} we prove that because of the dispersion property of the fermion Hamiltonian the entropy of  \QCurves of  coherent states increases with time.

In Section~\ref{sec:oscillation} we study scenarios where the entropy oscillates. In Section~\ref{subsec:Fermi} we revisit the foundations of Fermi's golden rule and derive an exact formula for the transition of states when an eigenstate $\psi_0$ of a Hamiltonian $H^0$ evolves under a Hamiltonian $H^0+H^{\mathrm{I}}$. To obtain Fermi's golden rule, we simply rely on  $H^{\mathrm{I}}$ contributing little  compared to $H^0$.  We then show that the entropy oscillates during the evolution of $\psi_0$. In    Section~\ref{subsec:time-reflection} we expand the time reversal transformation by adding time translation, obtaining  time reflection. By combining it with CP transformation, we create scenarios where a decreasing-entropy \QCurve can be transformed into an increasing-entropy \QCurve, while neutral particles transform into their anti-particles. Finally in Section~\ref{subsec:two-particle-system}, we study the collision of two particles, described by coherent states the entropy of each particle alone is increasing, and show that an oscillation may occur due to the entanglement interference between the two particles, when they are close to each other.

We propose in Section~\ref{sec:entropy-law} an entropy law for quantum \mechanics and quantum field theory: the entropy is an increasing function of time. Then we analyze the consequences of this law. In Section~\ref{subsec:stab-hydr-ground} we show that the entropy increases when in a hydrogen atom an electron transitions from an excited state to the ground state while emitting a photon. In Section~\ref{subsec:kaons} we speculate that kaons and neutrinos, due to their entropy-oscillating trajectories, transform into anti-particles so that the entropy can increase. We conclude with a speculation in Section~\ref{subsec:collisions} that a collision of particles creates to new particles emission in order for the entropy to increase, when at the collision the entropy enters an oscillating period.

Section~\ref{sec:conclusion} concludes the paper with some thoughts about future topics of investigation.

\section{Quantum Entropy}
\label{sec:quantum-entropy-def}

We propose a definition of a one-particle entropy, a  measure of randomness of all the degrees of freedom,  to be
\begin{align}
\entropyS&= -\kernBeforeIntegral\int  \, \rho_{\mathrm{r}} (\mathbf{r},t)  \rho_p (\mathbf{p},t) \ln \left ( \rho_{\mathrm{r}} (\mathbf{r},t)  \rho_p (\mathbf{p},t) \, \hbar^3\right) \,
 \diff^3\mathbf{r}\, \diff^3\mathbf{p}\, ,\\
 &= \entropyS_{\mathrm{r}}+ \entropyS_{p}-3\ln \hbar\,,
\label{eq:Relative-entropy}
\end{align}
where by the Born's rule $\rho_{\mathrm{r}}(\mathbf{r},t)=|\psi(\mathbf{r},t)|^2$, $\rho_p(\mathbf{p},t)=|\tilde \phi(\mathbf{p},t)|^2$, and $\entropyS_{\mathrm{r}}\eqdefA  -\kernBeforeIntegral\int  \rho_{\mathrm{r}} (\mathbf{r},t) \ln \rho_{\mathrm{r}} (\mathbf{r},t) \, \diff^3\mathbf{r}$, and analogously for $\entropyS_{\mathrm{p}}$. The reduced Planck constant $\hbar$ is used, so the entropy is dimensionless, and thus we propose a dimensionless phase space  volume element  to be $\frac{1}{\hbar^3}\diff^3\mathbf{r}\, \diff^3\mathbf{p}$,   and a  dimensionless  probability density in phase space to be  $\rho_{\mathrm{r}} (\mathbf{r},t)  \rho_p (\mathbf{p},t)\, \hbar^3$. Thus, the entropy is dimensionless and changing the units of measurements will not change it. It is clear that the proposed  entropy has some of its foundations in the work of  Gibbs~\cite{gibbs2014elementary} and  Jaynes~\cite{jaynes1965gibbs}.

In a quantum \mechanics setting, the degrees of freedom of the conjugate operators $\hat{\mathbf {r}}$ and $\hat{\mathbf {p}}$ are measured through the projection of the state $\ket{\psi_t}$ on their eigenvectors $\ket{\mathbf {r}}$ and $\ket{\mathbf {p}}$, respectively, that is, the state of a particle in phase space is described by  $\left ( \psi(\mathbf{r},t)=\bra{\mathbf{r}}\ket{\psi_t},\, \phi(\mathbf{k},t)=\bra{\mathbf{k}}\ket{\psi_t}\right)$ and the entropy is a function of the state.

In quantum field theory (QFT),  the states in the phase space  are described through  the operators $\Psi(\mathbf{r},t)$ and  $\Phi(\mathbf{k},t)$, where $\mathbf{r}$ and $t$ become parameters describing space-time, and the spatial frequency $\mathbf{k}$ is the Fourier transform of $\mathbf{r}$, and can be interpreted as a momentum variable $\mathbf{p}=\hbar\mathbf{k} $.   These operators act on the vacuum sate $\ket{0}$ (we use the Fock space occupancy representation), that is, a one-particle state in phase space is given by $\left ( \Psi(\mathbf{r},t)\ket{0},\, \Phi(\mathbf{k},t)\ket{0}\right) $, where $\Phi(\mathbf{k},t)$ is the Fourier transform of $\Psi(\mathbf{r},t)$.  To describe the entropy of a single particle in phase space, the density functions used to compute the  entropy are the magnitude square of the quantum fields  states, that is,  $\rho^{\QFT}_{\mathrm{r}}(\mathbf{r},t)=|\Psi(\mathbf{r},t)\ket{0}|^2=\bra{0}\Psi^{\dagger}(\mathbf{r},t)\Psi(\mathbf{r},t)\ket{0}$ and  $\rho^{\QFT}_{\mathrm{k}}(\mathbf{k},t)=|\Phi(\mathbf{k},t)\ket{0}|^2=\bra{0}\Phi^{\dagger}(\mathbf{k},t)\Phi(\mathbf{k},t)\ket{0}$. Note that in this case, $\entropyS_{k}=\entropyS_{p}-3\ln \hbar$, so that, using the variables $\mathbf{r},\mathbf{k}$ already make the phase space units dimensionless.

Note that for fermions the conjugate momentum field  operator becomes $\piu_{\psi}=\iu \Psi^{\dagger}(\mathbf{r},t)$, and so all the degrees of freedom of a fermion are expressed by the  bi-spinor $\Psi(\mathbf{r},t)$. Nevertheless, the entropy is  described above in phase space coordinates  $(\mathbf {r},\mathbf {k})$,  with both densities, $\rho^{\QFT}_{\mathrm{r}}(\mathbf{r},t)$ and $ \rho^{\QFT}_{\mathrm{k}}(\mathbf{k},t)$. In the rest of the paper the superscript $\QFT$ will generally be dropped as it will be clear whether we are using the representation of quantum \mechanics or of QFT.

The degrees of freedom associated with the spin operator are captured by the entropy, since they are represented as `` internal degrees of freedom of the state''  in both quantum \mechanics and QFT. More precisely,   $\psi(\mathbf{r},t)$ or  $\Psi(\mathbf{r},t)$ and $\phi(\mathbf{p},t)$ or $\Phi(\mathbf{p},t)$ are described by  Dirac spinors, and the Gauge fields have the polarization degree of freedom, and the densities used to compute the entropy depend on such internal degrees of freedom.

A natural extension of this  entropy to an $N$-particle system is
 \begin{align}
     S &= - \kernBeforeIntegral\int \frac{\diff^3 \mathbf {r}_1 \diff^3\mathbf {p}_1}{\hbar^3} \hdots \frac{\diff^3 \mathbf {r}_N \diff^3\mathbf {p}_N}{\hbar^3}  \,   \rho_{\mathrm{r}}(\mathbf {r}_1,\hdots, \mathbf {r}_N ,t)  \rho_{\mathrm{k}}(\mathbf {k}_1,\hdots, \mathbf {k}_N ,t)
     \\
     & \qquad \times    \hbar^{3N}\,  \ln \left (  \rho_{\mathrm{r}}(\mathbf {r}_1,\hdots, \mathbf {r}_N ,t) \rho_{\mathrm{k}}(\mathbf {k}_1,\hdots, \mathbf {k}_N ,t) \hbar^{3N}\right)
     \\
     &= -\kernBeforeIntegral\int \diff^3 \mathbf {r}_1 \hdots  \kernBeforeIntegral\int \diff^3 \mathbf {r}_N \,   \rho_{\mathrm{r}}(\mathbf {r}_1,\hdots, \mathbf {r}_N ,t)
     \ln \rho_{\mathrm{r}}(\mathbf {r}_1,\hdots, \mathbf {r}_N ,t)
     \\
     & \quad -  \kernBeforeIntegral\int \diff^3 \mathbf {p}_1 \hdots  \kernBeforeIntegral\int \diff^3 \mathbf {p}_N\,   \rho_{\mathrm{p}}(\mathbf {p}_1,\hdots, \mathbf {p}_N ,t)
     \ln \rho_{\mathrm{p}}(\mathbf {p}_1,\hdots, \mathbf {p}_N ,t)
     \\
     &\quad - 3N  \ln \hbar\, ,
     \label{eq:entropy-many-particles}
 \end{align}
 where $\rho_{\mathrm{r}}(\mathbf {r}_1,\hdots, \mathbf {r}_N ,t)=|\psi(\mathbf {r}_1,\hdots, \mathbf {r}_N ,t)|^2$ and $\rho_{\mathrm{p}}(\mathbf {p}_1,\hdots, \mathbf {p}_N ,t)=|\phi(\mathbf {p}_1,\hdots, \mathbf {p}_N ,t)|^2 $ are defined in quantum \mechanics via the projection of the state $\ket{\psi_t}^N$ of $N$ particles, fermions (f) or bosons (b), on the position $\bra{\mathbf {r}_1}\hdots\bra{\mathbf {r}_N} $ and the momentum $\bra{\mathbf {p}_1}\hdots\bra{\mathbf {p}_N} $ coordinate systems. The state $\ket{\psi_t}^N$ is defined in  Fock spaces,  the product of $N$ Hilbert spaces, and  requiring  the construction of combinatorics to describe  indistinguishable particles. For fermions that is done through the permanent, to guarantee that two fermions with the same spin do not occupy the same set of coordinates. In QFT, $\rho_{\mathrm{r}}(\mathbf {r}_1,\hdots, \mathbf {r}_N ,t)$ and $\rho_{\mathrm{k}}(\mathbf {k}_1,\hdots, \mathbf {k}_N ,t)$  are defined through the same combinatorics to describe indistinguishable particles applied to the quantum fields of each particle and then create $N$ particles from the vacuum. The use of the spatial frequency representation  $\mathbf {k}$ indicates the use of QFT.  For example, for two fermions  the combinatorics yields $\Psi_1(\mathbf{r}_1,\mathbf{r}_2,t)\ket{0}= \frac{1}{\sqrt{2}}\left (\Psi_1(\mathbf{r}_1,t)\Psi_2(\mathbf{r}_2,t)-\Psi_1(\mathbf{r}_2,t)\Psi_2(\mathbf{r}_1,t)\right)\ket{0}$ and $\Phi_1(\mathbf{k}_1,\mathbf{k}_2,t)\ket{0}= \frac{1}{\sqrt{2}}\left (\Phi_1(\mathbf{k}_1,t)\Phi_2(\mathbf{k}_2,t)-\Phi_1(\mathbf{k}_2,t)\Phi_2(\mathbf{k}_1,t)\right)\ket{0}$, where $\Phi_{1,2}(\mathbf{k}_{1,2},t)$  are the Fourier transform of $\Psi_{1,2}(\mathbf{k}_{1,2},t)$. For  bosons, instead of a subtraction of the two terms the addition of the same two terms is applied. In QFT the term $-3N\ln \hbar$ is absorbed by the entropy term of  the spatial frequency, that is, using the variables $\mathbf{r}_1\hdots \mathbf{r}_N,\mathbf{k}_1\hdots \mathbf{k}_N$ already makes the phase space units dimensionless.

\section{Entropy Properties and Behaviors}
\label{sec:entropy}

In this section we  establish the properties of the entropy that support its relevance in analyzing quantum systems. In each section we use either the quantum \mechanics setting or QFT settings, but the results that follow are valid for both settings.  We will clarify in each section which settings we are working with. In Section~\ref{subsec:minimum-entropy} the lowest possible entropy value is established.  We then show that the entropy is invariant under  (i) continuous coordinate transformations  in Section~\ref{subsec:coordinate-transformation}, (ii) discrete CPT transformations, in Section~\ref{subsec:cpt}, (iii) special relativity, in Section~\ref{subsec:Lorentz-transformation}.  In Section~\ref{subsec:entropy-partition}
we construct the  structure of \QCurves to partition the space of  the state evolutions according to the behavior of the entropy of such \QCurves.

\subsection{The Minimum Value of the Entropy}
\label{subsec:minimum-entropy}
The third law of thermodynamics establishes the value of $0$ as the minimum entropy. However, the minimum of the  quantum entropy must be positive because due to the uncertainty principle, particles can not be localized with zero velocity.
\begin{proposition}
 \label{proposition:lower-bound-S}
{The minimum of the entropy is $3 (1+\ln \piu)$.}
\end{proposition}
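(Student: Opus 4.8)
The plan is to reduce the statement to the entropic uncertainty relation of Bia{\l}ynicki-Birula and Mycielski and then to exhibit an extremal state. Starting from the decomposition $\entropyS=\entropyS_{\mathrm{r}}+\entropyS_{p}-3\ln\hbar$ recorded in~\eqref{eq:Relative-entropy}, I would pass to the spatial-frequency variable $\mathbf{k}=\mathbf{p}/\hbar$: since the momentum wavefunction and the Fourier transform $\tilde\psi$ of $\psi$ are related by $\phi(\mathbf{p})=\hbar^{-3/2}\tilde\psi(\mathbf{p}/\hbar)$, a change of variables gives $\entropyS_{p}=\entropyS_{k}+3\ln\hbar$, where $\entropyS_{k}=-\kernBeforeIntegral\int\rho_{\mathrm{k}}(\mathbf{k})\ln\rho_{\mathrm{k}}(\mathbf{k})\,\diff^3\mathbf{k}$ and $\rho_{\mathrm{k}}=|\tilde\psi|^{2}$. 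Hence $\entropyS=\entropyS_{\mathrm{r}}+\entropyS_{k}$, and the proposition becomes the claim that $\entropyS_{\mathrm{r}}+\entropyS_{k}\ge 3(1+\ln\piu)=3\ln(\piu\eu)$ for every normalized state, with equality attained by at least one state. This also makes transparent why the $-3\ln\hbar$ appearing in the definition is exactly the term needed to render the bound independent of the choice of units.

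For the lower bound I would invoke the sharp (Babenko--Beckner) Hausdorff--Young inequality for the unitary Fourier transform on $\mathbb{R}^{3}$, namely $\|\tilde\psi\|_{q}\le C_{p}\,\|\psi\|_{p}$ for $1\le p\le 2$ and $\tfrac1p+\tfrac1q=1$, whose sharp constant satisfies $C_{2}=1$. Because $\|\tilde\psi\|_{2}=\|\psi\|_{2}=1$, the function $p\mapsto \ln\|\tilde\psi\|_{q(p)}-\ln\|\psi\|_{p}-\ln C_{p}$ is nonpositive on $(1,2]$ and vanishes at the self-dual exponent $p=2$, so its left derivative there is $\ge 0$. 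Differentiating and using $\tfrac{d}{dp}\ln\|\psi\|_{p}\big|_{p=2}=-\tfrac14\entropyS_{\mathrm{r}}$, $\tfrac{d}{dp}\ln\|\tilde\psi\|_{q(p)}\big|_{p=2}=+\tfrac14\entropyS_{k}$, and $\tfrac{d}{dp}\ln C_{p}\big|_{p=2}=\tfrac34\ln(\piu\eu)$, one obtains $\tfrac14(\entropyS_{\mathrm{r}}+\entropyS_{k})-\tfrac34\ln(\piu\eu)\ge 0$, which is the desired bound; alternatively one may simply cite Bia{\l}ynicki-Birula--Mycielski directly. For states carrying internal spin or polarization indices the same estimate is applied with $\rho_{\mathrm{r}}(\mathbf{r})=\sum_{s}|\psi_{s}(\mathbf{r})|^{2}$, using the $\mathbb{C}^{N}$-valued version of the entropic uncertainty relation.

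Attainment I would check by direct computation with the isotropic Gaussian $\psi(\mathbf{r})=(2\piu a^{2})^{-3/4}\exp\!\big(-|\mathbf{r}|^{2}/(4a^{2})\big)$, $a>0$. Its Fourier transform is again Gaussian, $\rho_{\mathrm{r}}$ has per-coordinate variance $a^{2}$ and $\rho_{\mathrm{k}}$ has per-coordinate variance $1/(4a^{2})$, so the Gaussian differential-entropy formula yields $\entropyS_{\mathrm{r}}=\tfrac32\ln(2\piu\eu a^{2})$ and $\entropyS_{k}=\tfrac32\ln\!\big(\piu\eu/(2a^{2})\big)$, hence $\entropyS=\entropyS_{\mathrm{r}}+\entropyS_{k}=\tfrac32\ln(\piu^{2}\eu^{2})=3(1+\ln\piu)$ for every $a$. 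Thus the bound is sharp, the infimum is attained, and the minimizers are exactly the Gaussian (coherent) states --- a fact used later in the paper.

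The step I expect to be the main obstacle is the lower bound itself. The entropic uncertainty relation is strictly stronger than the Heisenberg relation (the latter follows from it by maximizing differential entropy at fixed variance, but not conversely), so it cannot be obtained by elementary manipulations of second moments and genuinely requires the sharp Hausdorff--Young inequality, equivalently a sharp Beckner/logarithmic-Sobolev input. A secondary technical point is that $\entropyS_{\mathrm{r}}$ and $\entropyS_{k}$ are each only conditionally defined --- a very sharply peaked marginal can drive one of them to $-\infty$ --- so the argument must be run on the sum $\entropyS_{\mathrm{r}}+\entropyS_{k}$, which is what the inequality actually controls; finiteness at the optimum is confirmed by the explicit Gaussian.
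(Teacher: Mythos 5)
Your proposal is correct and follows essentially the same route as the paper: both reduce the claim to the entropic uncertainty relation $\entropyS_{\mathrm{r}}+\entropyS_{p}\ge 3\ln(\eu\piu\hbar)$ of Hirschman--Beckner--Bia{\l}ynicki-Birula--Mycielski and note that Gaussian (coherent) states attain the bound. The only difference is that you additionally sketch the standard Babenko--Beckner derivation of that inequality and verify attainment by explicit computation, whereas the paper simply cites the references.
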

\begin{proof}
The result follows from the entropy definition~\eqref{eq:Relative-entropy} and from the entropy uncertainty principle $\entropyS_{\mathrm{r}}+ \entropyS_p\ge 3 \ln \ee \piu \hbar$, see \cite{hirschman1957note,beckner1975inequalities,  bialynicki1975uncertainty}.
\end{proof}
This  lower bound of  the entropic uncertainty principle is tighter than the bound of the {standard} uncertainty principle  \cite{Robertson1929}, as shown in \cite{beckner1975inequalities,  bialynicki1975uncertainty}. The mathematical proof of \cite{beckner1975inequalities} does not depend on the physics setting, while the connection to the uncertainty principle by \cite{bialynicki1975uncertainty} uses the quantum \mechanics setting and can be straightforwardly extended to QFT.  Coherent states reach the minimum of the entropy as well as the minimum of the {standard} uncertainty principle.
One may notice that the dimensionless element of volume of integration to define the entropy  will not contain a particle unless  $\frac{1}{\hbar^3}\diff^3\mathbf{r}\, \diff^3\mathbf{p}\ge 1$, due to the  uncertainty principle. One may interpret  this as a necessity of discretizing the phase space. We  note that  the  minimum of the entropy for the discrete sum  is also $3 (1+\ln \piu)$ as shown in~\cite{dembo1991information}.

As the entropy is an additive property, it follows  from \eqref{eq:entropy-many-particles} that for an $N$-particle system the minimum entropy is $3N (1+\ln \piu)$.

\subsection{Entropy Invariance Under Phase-Space  Transformations}
\label{subsec:coordinate-transformation}

We investigate two types of transformations of the phase space. The first is a point transformation of coordinates and the second  is a translation in phase space of a  quantum reference frame \cite{aharonov1984quantum}. This section is described in quantum mechanics setting, and a QFT derivation is  sketched for completion.

Consider a  transformation of position coordinates $F: \mathbf{r} \mapsto \mathbf{r}'$. In quantum \mechanics, such a coordinate transformation must be a point transformation, which induces  the new conjugate momentum operator  (see DeWitt \cite{dewitt1952point}):
\begin{align}
    \label{eq:conjugate-momentum-DeWitt}
    \hat{\mathbf{p}}'=-\iu \hbar \left [\nabla_{\mathrm{r}'} + \frac{1}{2 }J^{-1}(\mathbf{r}')\nabla_{\mathrm{r}'}\cdot J(\mathbf{r}')\right]\, ,
\end{align}
where  $J(\mathbf{r}') = \frac{\partial \mathbf{r}(\mathbf{r}')}{\partial \mathbf{r}'}$ is the Jacobian  of $F^{-1}$.

\begin{proposition}
 \label{proposition:change-coordinates-S}
The entropy  is invariant under a  point transformation of  coordinates.
\end{proposition}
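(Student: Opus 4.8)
The plan is to show that the product density $\rho_{\mathrm{r}}(\mathbf{r},t)\rho_p(\mathbf{p},t)$ transforms as a genuine density on phase space under the point transformation together with its induced momentum transformation, so that the integral defining $\entropyS$ is literally unchanged. The key is that the entropy is the integral of a density against its own logarithm relative to the canonical measure $\frac{1}{\hbar^3}\diff^3\mathbf{r}\,\diff^3\mathbf{p}$, and this combination is invariant whenever (i) the measure is preserved and (ii) the dimensionless density $\rho_{\mathrm{r}}\rho_p\hbar^3$ is a scalar field transported along the change of variables.

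First I would treat the position factor. Under $F:\mathbf{r}\mapsto\mathbf{r}'$, the wavefunction transforms with the usual Jacobian factor needed to preserve the $L^2$ norm, $\psi'(\mathbf{r}',t) = |\det J(\mathbf{r}')|^{1/2}\,\psi(\mathbf{r}(\mathbf{r}'),t)$, so that $\rho_{\mathrm{r}}'(\mathbf{r}',t) = |\det J(\mathbf{r}')|\,\rho_{\mathrm{r}}(\mathbf{r},t)$. Then $\rho_{\mathrm{r}}'\,\diff^3\mathbf{r}' = \rho_{\mathrm{r}}\,\diff^3\mathbf{r}$, and a short calculation gives $\entropyS_{\mathrm{r}}' = \entropyS_{\mathrm{r}} - \int \rho_{\mathrm{r}}\ln|\det J|\,\diff^3\mathbf{r}$, i.e.\ the position entropy alone picks up the expected anomalous term $-\langle \ln|\det J|\rangle$. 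Second, I would do the momentum side: the point transformation induces the DeWitt conjugate momentum~\eqref{eq:conjugate-momentum-DeWitt}, and the associated momentum-space wavefunction $\tilde\phi'(\mathbf{p}',t)$ is obtained by the corresponding unitary, whose net effect on $\rho_p$ is to produce exactly the compensating Jacobian so that $\entropyS_p' = \entropyS_p + \langle \ln|\det J|\rangle$. The symmetry-restoring term here is precisely the extra piece $\tfrac12 J^{-1}\nabla_{\mathrm{r}'}\!\cdot J$ in~\eqref{eq:conjugate-momentum-DeWitt}: it is designed so that $\hat{\mathbf{p}}'$ is self-adjoint with respect to $\diff^3\mathbf{r}'$ rather than the old weighted measure, which is exactly the condition that makes $\rho_p$ transform with the inverse Jacobian weight. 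Adding the two pieces, the anomalous terms cancel and $\entropyS' = \entropyS_{\mathrm{r}}' + \entropyS_p' - 3\ln\hbar = \entropyS$.

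The main obstacle is the momentum step: unlike the position factor, which is elementary, one has to verify carefully that the operator in~\eqref{eq:conjugate-momentum-DeWitt} is generated by a well-defined unitary $U_F$ on $L^2(\diff^3\mathbf{r})$ intertwining $\hat{\mathbf{p}}$ with $\hat{\mathbf{p}}'$, and then track what $U_F$ does to $|\tilde\phi|^2$. Concretely, I expect $U_F$ to be the composition of the pullback by $F$ and multiplication by $|\det J|^{1/2}$, and the claim reduces to showing that conjugating $\hat{\mathbf{p}}$ by this unitary yields precisely the DeWitt expression — a computation that is in DeWitt~\cite{dewitt1952point} but which I would reproduce in the form needed here. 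Once that intertwining is in hand, $\rho_p'(\mathbf{p}',t)\,\diff^3\mathbf{p}'$ is unitarily conjugate to $\rho_p(\mathbf{p},t)\,\diff^3\mathbf{p}$ through a map whose Jacobian is $|\det J|^{-1}$ on the relevant leaf, giving the cancellation. A brief remark would then note that the QFT version is identical after replacing $\psi$ by $\Psi$ and $\ket{0}$-sandwiching, since the field operators transform by the same unitary conjugation and the vacuum is invariant, so $\rho^{\QFT}_{\mathrm{r}}\rho^{\QFT}_{\mathrm{k}}$ is again a phase-space scalar; this is the "sketch for completion" mentioned in the surrounding text.

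Finally, I would record the two clean invariance statements that make the argument transparent: the canonical phase-space volume $\frac{1}{\hbar^3}\diff^3\mathbf{r}\,\diff^3\mathbf{p}$ is invariant because the position Jacobian $|\det J|$ and the momentum Jacobian $|\det J|^{-1}$ are reciprocal (a canonical/symplectic-type statement), and the dimensionless density $\rho_{\mathrm{r}}\rho_p\hbar^3$ is invariant because the two reciprocal weights multiply to one. Since $\entropyS = -\int (\rho_{\mathrm{r}}\rho_p\hbar^3)\ln(\rho_{\mathrm{r}}\rho_p\hbar^3)\,\frac{\diff^3\mathbf{r}\,\diff^3\mathbf{p}}{\hbar^3}$ is built solely from these two invariants, invariance of $\entropyS$ is immediate. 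I would present the proof in this order: elementary position computation, the unitary-intertwining lemma for the momentum side (the hard part), the cancellation of anomalous terms, and the closing remark that the two reciprocal Jacobians are what the DeWitt correction secures, with the QFT case noted in parallel.
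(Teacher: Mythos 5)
Your position-space step matches the paper exactly: both you and the paper write $\psi'(\mathbf{r}')=\sqrt{\det J(\mathbf{r}')}\,\psi(\mathbf{r}(\mathbf{r}'))$ as in \eqref{eq:psi-prime} and extract the anomalous term $-\langle\ln\det J^{-1}(\mathbf{r}')\rangle_{\rho'_{r'}}$. The gap is in your momentum step, which you yourself flag as the hard part. You claim that the DeWitt correction in \eqref{eq:conjugate-momentum-DeWitt} forces $\rho_p$ to transform under a point map whose Jacobian is \emph{pointwise} the reciprocal $|\det J|^{-1}$, so that $\rho_{\mathrm{r}}\rho_p\hbar^3$ is a phase-space scalar and the cancellation is automatic. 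This cannot work for a general nonlinear point transformation: $\det J$ is a function of $\mathbf{r}'$, while the Jacobian of any induced momentum map $G:\mathbf{p}\mapsto\mathbf{p}'$ is a function of $\mathbf{p}'$, so the two can only be literal reciprocals when both are constants, i.e.\ for linear $F$. Moreover, the Fourier transform of $\sqrt{\det J(\mathbf{r}')}\,\psi(\mathbf{r}(\mathbf{r}'))$ is not, for nonlinear $F$, a reparametrization $\tilde\phi(\mathbf{p}(\mathbf{p}'))$ of the old momentum wavefunction by any point map, so the unitary intertwining you hope to extract from DeWitt does not deliver a pointwise Jacobian relation between $\rho_p$ and $\rho'_{p'}$.

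The paper's proof handles this differently, and more weakly than you assume. It observes (following DeWitt) that the induced momentum transformation $G$ is only determined up to an arbitrary function $g(\mathbf{p}')=\det J(G^{-1})(\mathbf{p}')$, sets $\bra{\mathbf{p}'}\ket{\psi}=\tilde\phi'(\mathbf{p}')/\sqrt{g(\mathbf{p}')}$ as in \eqref{eq:phi-prime}, carries $g$ through the entropy integral to get the two anomalous terms $-\langle\ln\det J^{-1}(\mathbf{r}')\rangle_{\rho'_{r'}}+\langle\ln g(\mathbf{p}')\rangle_{\rho'_{p'}}$, and then \emph{chooses} $g$ so that these two expectation values are equal. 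So the cancellation in the paper is an averaged condition imposed by exploiting the residual freedom in defining the momentum coordinates, not an automatic pointwise identity. Your closing picture of ``two reciprocal Jacobians multiplying to one'' is the right intuition only for linear (e.g.\ rotations, dilations) transformations; to repair your argument for the general case you would either have to restrict to that class or adopt the paper's device of fixing the free function $g$ by the expectation-value matching condition.
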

\begin{proof}
We are given an entropy $\entropyS$ in phase space relative to a conjugate Cartesian pair of coordinates $(\mathbf{r}, \mathbf{p})$. We can ignore  the entropy term $-3\ln \hbar$  because it does not vary under point transformations.  Consider a  point transformation  $F: \mathbf{r} \mapsto \mathbf{r}'$ and let $\mathbf{p}'$ denote the momentum conjugate to $\mathbf{r}'$.
As the probabilities in infinitesimal volumes are invariant under point transformations,
\begin{align}
    \label{eq:requirement}
    |\psi'(\mathbf{r}'(\mathbf{r} ))|^2 \diff^3 \mathbf{r}' =|\psi(\mathbf{r} )|^2 \diff^3 \mathbf{r}
  \qquad \!\!\text{and} \!\!\qquad
    |\tilde \phi'(\mathbf{p}'(\mathbf{p} ))|^2 \diff^3 \mathbf{p}'=|\tilde \phi(\mathbf{p} )|^2 \diff^3 \mathbf{p}\,.
\end{align}
Thus, the Born's rule that the probability density fuctions are $|\psi'(\mathbf{r}')|^2 $ and $|\tilde \phi'(\mathbf{p}')|^2$ remains valid.  Under coordinate transformation, the Jacobian satisfies
\begin{align}
\label{eq:position-volume-invariance}
    \det J(\mathbf{r}') \, \diff ^3 \mathbf{r}'=\diff ^3 \mathbf{r}\,.
\end{align}
Combining \eqref{eq:requirement} and \eqref{eq:position-volume-invariance} we have
\begin{align}
\label{eq:psi-prime}
   \bra{\mathbf{r}'}\ket{\psi}=\frac{1}{\sqrt{\det J(\mathbf{r}')}} \psi'(\mathbf{r}') = \psi(\mathbf{r}(\mathbf{r}'))\, ,
\end{align}
so that the inifinitesimal probabilty $|\bra{\mathbf{r}'}\ket{\psi}|^2  \det J(\mathbf{r}') \, \diff ^3 \mathbf{r}' = |\psi'(\mathbf{r}')|^2 \diff^3 \mathbf{r}'$ is an invariant.

Considering the Fourier basis $\bra{\mathbf{p}}\ket{\mathbf{r}}=\frac{1}{(2\pi)^{\frac{3}{2}}} \eu^{-\iu \mathbf{r}\cdot \mathbf{p}}$ combined with \eqref{eq:psi-prime} leads to
\begin{align}
 \tilde \phi(\mathbf{p})& =\bra{\mathbf{p}}\ket{\psi}=\int \det J(\mathbf{r}') \,\diff^3 \mathbf{r}'\, \bra{\mathbf{p}}\ket{\mathbf{r}'}\bra{\mathbf{r}'}\ket{\psi}
    \\
    &=  \frac{1}{(2\piu)^{\frac{3}{2}}}\int \sqrt{\det J(\mathbf{r}')}\,  \psi'(\mathbf{r}') \, \eu^{-\iu \mathbf{r}'\cdot \mathbf{p}}\,   \diff^3 \mathbf{r}' \,.
\end{align}

DeWitt noted that in the momentum space there is a transformation $G: \mathbf{p} \mapsto \mathbf{p}'$, specified by~\eqref{eq:conjugate-momentum-DeWitt}  up to an arbitrarily  function $g(\mathbf{p}')=\det J(G^{-1})(\mathbf{p}')$, that is, up to  the  determinant of the Jacobian of $G^{-1}$.  This degree of freedom to specify $g(\mathbf{p}')$ is clearly seen since the volume elements scale according to the determinant of the Jacobian, that is $g(\mathbf{p}')\, \diff ^3 \mathbf{p}'=\diff ^3 \mathbf{p}$, and similarly to~\eqref{eq:psi-prime} we define

\begin{align}
\label{eq:phi-prime}
  \bra{\mathbf{p}'}\ket{\psi} & =\frac{1}{\sqrt{ g(\mathbf{p}')}}\tilde \phi'(\mathbf{p}')=\tilde \phi(\mathbf{p}(\mathbf{p}'))
  \\
  &\Downarrow
  \\
  |\bra{\mathbf{p}'}\ket{\psi}|^2  g(\mathbf{p}')\, \diff ^3 \mathbf{p}' &=|\tilde \phi'(\mathbf{p}')|^2 \diff ^3 \mathbf{p}' \,,
\end{align}
so that we have an infinitesimal probability invariant in momentum space satisfying the Born's rule, that is,  satisfying~\eqref{eq:requirement}. Thus, we can scale $G$ by scaling $\det J(G^{-1})(\mathbf{p}')$ according to any function  $f(\mathbf{p}')$, while also scaling $\bra{\mathbf{p}'}\ket{\psi}$ according to $\frac{1}{\sqrt{f(\mathbf{p}')}}$, and that will satisfy the conjugate properties and be  a valid solution.

Thus
\begin{align}
    \entropyS_{\mathrm{r}}+\entropyS_{\mathrm{p}}   &= -\kernBeforeIntegral\int
 \diff^3\mathbf{r}\, \diff^3\mathbf{p}\, \rho_{\mathrm{r}} (\mathbf{r},t)  \rho_p (\mathbf{p},t) \ln \left ( \rho_{\mathrm{r}} (\mathbf{r},t)  \rho_p (\mathbf{p},t) \right)
 \\
 &=-\kernBeforeIntegral\int
 \diff^3\mathbf{r}'\, \diff^3\mathbf{p}' \,\rho'_{r'} (\mathbf{r}',t)  \rho'_{p'} (\mathbf{p}',t) \ln \left [  \frac{1}{g(\mathbf{p}')\, \det J(\mathbf{r}') }\, \rho'_{r'}(\mathbf{r}',t) \,\rho'_{p'} (\mathbf{p}', t)  \right]
 \\
 &=\entropyS_{\mathrm{r}'}+\entropyS_{\mathrm{p}'}-\kernBeforeIntegral\int
 \diff^3\mathbf{r}'\, \diff^3\mathbf{p}' \,\rho'_{r'} (\mathbf{r}',t)  \rho'_{p'} (\mathbf{p}',t) \ln  \left [  \frac{\det J^{-1}(\mathbf{r}')}{g(\mathbf{p}')\,  }\,  \right]
 \\
 &= \entropyS_{\mathrm{r}'}+\entropyS_{\mathrm{p}'} - \langle \ln \det J^{-1}(\mathbf{r}')\rangle_{\rho'_{r'}} + \langle \ln  g(\mathbf{p}')
 \rangle_{\rho'_{p'}}
 = \entropyS_{\mathrm{r}'}+\entropyS_{\mathrm{p}'} \,,
\end{align}
where
\begin{align}
\langle \ln \det J^{-1}(\mathbf{r}')\rangle_{\rho'_{r'}} &\eqdefA \int  \rho'_{r'} (\mathbf{r}',t)  \ln  \det J^{-1}(\mathbf{r}')\,,\\
\langle\ln  g(\mathbf{p}')\rangle_{\rho'_{p'}} &\eqdefA \int  \rho'_{p'} (\mathbf{p}',t)   \ln    g(\mathbf{p}')\, \diff^3\mathbf{p}' \,.
\end{align}
But  $g(\mathbf{p}')$ must be chosen as to satisfy
\begin{align}
\langle\ln  g(\mathbf{p}')\rangle_{\rho'_{p'}}  = \langle\ln \det J^{-1}(\mathbf{r}')\rangle_{\rho'_{r'}}=\langle\ln \frac{1}{\det J(\mathbf{r}')}\rangle_{\rho'_{r'}}\,.
\end{align}
\end{proof}

Proposition~\ref{proposition:change-coordinates-S} also applies to QFT, where we also consider the spatial coordinates transformations according to a function $F: \mathrm{r} \mapsto \mathrm{r}'$. The QFT operators $\Psi(\mathrm{r},t)$ and $\Phi(\mathrm{k},t)$ are related by the Fourier transform, and $\Psi'(\mathrm{r}',t)$ and $\Phi'(\mathrm{k}',t)$ are also  related by the Fourier transform.  The proposition uses the freedom in the spatial frequency space to choose a scaling of the transformation $G: \mathrm{k} \mapsto \mathrm{k}'$ preserving the invariant infinitesimal probabilities. The proposition elucidates that an invariant entropy  under phase space transformations requires that  $\langle\ln  \det G^{-1}(\mathbf{k}')\rangle_{\rho'_{k'}}  = \langle\ln \det J^{-1}(\mathbf{r}')\rangle_{\rho'_{r'}}$.

We now investigate another symmetry transformation.  When a  quantum reference frame \cite{aharonov1984quantum}  is translated by $x_0$ along
$x$, the state $\ket{\psi_t}$ in position representation becomes  $\psi(x-x_0,t) = \bra{x-x_0}\ket{\psi_t}= \bra{x}\hat T_P(-x_0)\ket{\psi_t}$, where  $\hat T_P(-x_0)=\eu^{\iu x_0 \, \hat{P}}$, and $\hat{P}$ is the momentum operator conjugate to $\hat{X}$.  When the reference frame is translated by $p_0$ along  $p$, the  state $\ket{\psi_t}$  in momentum representation becomes  $\tilde \phi(p-p_0,t)= \bra{p-p_0}\ket{\psi_t}= \bra{p}\hat T_X(-p_0)\ket{\psi_t}$, where  $\hat T_X(-p_0)=\eu^{\iu p_0 \, \hat{X}}$, and $\hat X$ is the position operator conjugate to $\hat{P}$.

 \begin{lemma}
 \label{lemma:frame-invariance}
 Consider a state $\ket{\psi_t}$. The entropy $\entropyS$ is invariant under a change of quantum reference frame by translations along $x$ and along $p$.
 \end{lemma}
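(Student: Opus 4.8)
The key observation is that translations in position space and translations in momentum space each act on the relevant probability density as a rigid shift of the integration variable, and differential entropy (Shannon entropy of a continuous density) is manifestly invariant under such shifts. So the whole proof reduces to spelling out that the two pieces $\entropyS_{\mathrm r}$ and $\entropyS_p$ transform independently and each is shift-invariant, while the constant $-3\ln\hbar$ is untouched.

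First I would recall from the setup just above the lemma that a translation of the quantum reference frame by $x_0$ along $x$ sends the position wavefunction $\psi(\mathbf r,t)\mapsto \psi(\mathbf r-\mathbf x_0,t)$ (via the unitary $\hat T_P(-\mathbf x_0)=\eu^{\iu \mathbf x_0\cdot\hat{\mathbf P}}$), hence by Born's rule $\rho_{\mathrm r}(\mathbf r,t)\mapsto \rho_{\mathrm r}(\mathbf r-\mathbf x_0,t)$. Since this unitary commutes with $\hat{\mathbf P}$, it multiplies the momentum wavefunction by a pure phase $\eu^{-\iu \mathbf x_0\cdot\mathbf p/\hbar}$ (or $\eu^{-\iu\mathbf x_0\cdot\mathbf p}$ in the units used here), so $\rho_p(\mathbf p,t)=|\tilde\phi(\mathbf p,t)|^2$ is strictly unchanged. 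Then in the first line of \eqref{eq:Relative-entropy}, changing variables $\mathbf r\to\mathbf r+\mathbf x_0$ (Jacobian $1$) in the $\diff^3\mathbf r$ integral leaves the integrand's functional form identical, so $\entropyS_{\mathrm r}$ is invariant, $\entropyS_p$ is invariant trivially, and $-3\ln\hbar$ is a constant; hence $\entropyS$ is invariant. The argument for a translation by $\mathbf p_0$ along $p$ is the mirror image: $\rho_p(\mathbf p,t)\mapsto\rho_p(\mathbf p-\mathbf p_0,t)$, $\rho_{\mathrm r}$ picks up only a phase, and the substitution $\mathbf p\to\mathbf p+\mathbf p_0$ restores the integral.

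A cleaner way to present it, which I would actually use, is to work from the factored form $\entropyS=\entropyS_{\mathrm r}+\entropyS_p-3\ln\hbar$ in \eqref{eq:Relative-entropy}: the position translation affects only $\entropyS_{\mathrm r}$ (shift of a density under an integral over all of $\reals^3$) and the momentum translation affects only $\entropyS_p$, and each is invariant by translation-invariance of Lebesgue measure. One small point worth stating explicitly is why the momentum-space factor is untouched by a position translation — namely that $\hat T_P(-\mathbf x_0)$ is diagonal in the $\hat{\mathbf P}$ basis and only rephases $\tilde\phi$, leaving $|\tilde\phi|^2$ fixed — and symmetrically for the other translation.

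There is essentially no hard step here; the only thing that requires a line of care is confirming that conjugacy is preserved, i.e. that the translated pair of representations is still a genuine conjugate position/momentum pair (so that the entropy is still being evaluated on a legitimate phase-space description, as in Proposition \ref{proposition:change-coordinates-S}). This follows because $\hat T_P(-\mathbf x_0)$ and $\hat T_X(-\mathbf p_0)$ are unitary and the transformed operators $\hat X'=\hat X-x_0$, $\hat P'=\hat P$ (resp. $\hat X'=\hat X$, $\hat P'=\hat P-p_0$) still satisfy the canonical commutation relation. I would close by noting that the combined translation by $(x_0,p_0)$ is just the composition of the two, so invariance under the full two-parameter group of phase-space translations follows immediately.
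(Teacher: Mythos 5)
Your proposal is correct and follows essentially the same route as the paper's proof: each translation rigidly shifts one of the densities (handled by a unit-Jacobian change of variables) while only multiplying the other wavefunction by a phase that drops out of the modulus squared, so $\entropyS_{\mathrm{r}}$ and $\entropyS_{p}$ are each invariant and the constant $-3\ln\hbar$ is untouched. The only difference is organizational (you group the computation by which translation is applied, the paper groups it by which entropy term is being checked), plus your optional remark on preserved conjugacy, which the paper omits.
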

 \begin{proof}

 We start by showing that $\entropyS_{x}= -\kernBeforeIntegral\int_{-\infty}^{\infty} \diff x \, |\psi(x,t)|^2 \ln |\psi(x,t)|^2$ is invariant under:
\begin{enumerate}
    \item[(i)]  translations along $x$ by any $x_0$  because
    $|\psi(x,t)|^2$ becomes $|\psi(x+x_0,t)|^2 $, and
  \begin{align}
    \label{eq:3}
    \entropyS_{x+x_0}&=-\kernBeforeIntegral\int_{-\infty}^{\infty} \diff x\,  |\psi(x+x_0,t)|^2 \ln |\psi(x+x_0,t)|^2\\
             &=-\kernBeforeIntegral\int_{-\infty}^{\infty} \diff (x+x_0)\,  |\psi(x+x_0,t)|^2 \ln |\psi(x+x_0,t)|^2\\
    &= -\kernBeforeIntegral\int_{-\infty}^{\infty} \diff x' \, |\psi(x',t)|^2 \ln |\psi(x',t)|^2=\entropyS_{x}\,.
  \end{align}
     \item[(ii)]   translations along $p$ by any $p_0$, because applying
     $T_X(p_0)$ to $\psi(x,t)$, we get
 \begin{align}
   \psi_{p_0}(x,t) &=\bra{x}\hat T_X(p_0)\ket{\psi_t} = \int_{-\infty}^{\infty} \bra{x}\hat T_X(p_0)\ket{p}\bra{p}\ket{\psi_t} \diff p \\
   &=\int_{-\infty}^{\infty} \bra{x}\ket{p+p_0}\tilde{\phi}(p,t) \diff p\\
      &=\int_{-\infty}^{\infty}  \frac{1}{\sqrt{2\piu}}\eu^{\iu \, x\, (p+p_0)}\tilde{\phi}(p,t) \diff p
    = \psi(x,t) \,  \eu^{\iu \, x\, p_0},
 \end{align}
 implying
 \begin{equation}
   \label{eq:2}
   |\psi_{p_0}(x,t)|^2 = |\psi(x,t)|^2.
 \end{equation}
 \end{enumerate}
Thus, {$\entropyS_{x}$   is invariant under  translations along $p$}.

 Similarly, by applying both translations to $\entropyS_{\mathrm{p}}=- \kernBeforeIntegral\int_{-\infty}^{\infty} \diff p \, |\tilde{\phi}(p,t)|^2 \ln |\tilde{\phi}(p,t)|^2$ we conclude that $\entropyS_{\mathrm{p}}$ is invariant under them too.

 Therefore $\entropyS=\entropyS_{x}+\entropyS_{\mathrm{p}}-3\ln \hbar$ is invariant under  translations in both $x$ and  $p$.
 \end{proof}
The same result applies in QFT setting, where $\Psi_{k_0}(x,t)=\hat T_X(k_0) \Psi(x,t)=\Psi(x,t)\,  \eu^{\iu \, x\, k_0}$ with a Fourier transform $\Phi(k+k_0,t)$.

\subsection{Entropy Invariance Under CPT Transformations}
\label{subsec:cpt}
We now show that the entropy is invariant under the  three discrete symmetries   associated with Parity Change, Time Reversal, and Charge Conjugation.

We study these symmetries in a QFT setting, such as in QED, QCD, Weak Interactions, the Standard Model, or Wightman axiomatic formulation of QFT \cite{wightman1976hilbert}. We will be focusing on fermions, and thus on the Dirac spinors equation, though most of the ideas apply to bosons as well.
A brief review of these symmetries and some properties is given in  Appendix~\ref{sec:CPT}.

\begin{definition}[C,P,T-states]
  \label{def:cpt-states}
  We denote  the quantum fields $ \Psi^\mathrm{T}(\mathbf{r},-t)\eqdefA T\Psi^*(\mathbf{r},-t)$, $\Psi^\mathrm{P}(-\mathbf{r},t)\eqdefA P\Psi(-\mathbf{r},t)$, $\Psi^{\mathrm{C}}(\mathbf{r},t)\eqdefA C\overline{\Psi}^{\tran}(\mathbf{r},t)$,  $\psi^{\mathrm{CPT}}(-\mathbf{r},-t)\eqdefA CPT\overline{\psi}^{\tran}(-\mathbf{r},-t)$.
\end{definition}
 Each of the CPT operations is briefly reviewed in  Appendix~\ref{sec:CPT}.
\begin{lemma}[{Invariance of  the entropy under CPT-transformations}]
  \label{lemma:Entropy-invariants}
  Given a quantum field $\Psi(\mathbf{r},t)$, its Fourier transform $\Phi(\mathbf{k},t)$ and its entropy $\entropyS_t$,
the entropies of   $\Psi^{*}(\mathbf{r},t)$,   $\Psi^\mathrm{P}(-\mathbf{r},t)$, $\Psi^{\mathrm{C}}(\mathbf{r},t)$, $\Psi^\mathrm{T}(\mathbf{r},-t)$, and $\Psi^{\mathrm{CPT}}(-\mathbf{r},-t)$, and their corresponding Fourier transforms,   are all equal to $\entropyS_t$.
\end{lemma}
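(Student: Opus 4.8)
The plan is to show that each of the listed transformations acts on the position-space density $\rho_{\mathrm r}(\mathbf r,t)=|\Psi(\mathbf r,t)|^2$ (and on the momentum-space density $\rho_{\mathrm k}(\mathbf k,t)=|\Phi(\mathbf k,t)|^2$) only by (a) a unitary rotation of the internal spinor indices, which leaves the pointwise norm $|\Psi(\mathbf r,t)|^2=\Psi^\dagger\Psi$ unchanged, and/or (b) a measure-preserving change of the integration variable $\mathbf r\mapsto\pm\mathbf r$, $t\mapsto -t$. Since the entropy integral $\entropyS_t=\entropyS_{\mathrm r}+\entropyS_{\mathrm k}-3\ln\hbar$ depends on $\Psi$ only through these densities, invariance of each density (up to an orthogonal/Jacobian-one change of variables) immediately gives invariance of $\entropyS_t$. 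So the whole proof reduces to checking, case by case, that the defining expressions in Definition~\ref{def:cpt-states} have exactly this structure.

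The key steps, in order: (1) \emph{Complex conjugation $\Psi^*$.} Here $|\Psi^*(\mathbf r,t)|^2=|\Psi(\mathbf r,t)|^2$ pointwise, so $\entropyS_{\mathrm r}$ is literally unchanged; for the Fourier side, $\widehat{\Psi^*}(\mathbf k,t)=\Phi^*(-\mathbf k,t)$ up to a phase, so $|\widehat{\Psi^*}(\mathbf k,t)|^2=|\Phi(-\mathbf k,t)|^2$, and the substitution $\mathbf k\mapsto-\mathbf k$ (Jacobian $1$) shows $\entropyS_{\mathrm k}$ is unchanged. (2) \emph{Parity $\Psi^{\mathrm P}(-\mathbf r,t)=P\Psi(-\mathbf r,t)$.} The Dirac matrix $P=\gamma^0$ (or $\eta_P\gamma^0$) is unitary, so $|P\Psi(-\mathbf r,t)|^2=|\Psi(-\mathbf r,t)|^2$; the change of variable $\mathbf r\mapsto-\mathbf r$ then gives $\entropyS_{\mathrm r}$ invariant, and parity sends $\mathbf k\mapsto-\mathbf k$ in Fourier space, again a unit-Jacobian substitution. (3) \emph{Charge conjugation $\Psi^{\mathrm C}=C\overline\Psi^{\tran}=C\gamma^{0\tran}\Psi^*$.} The combined matrix $C\gamma^{0\tran}$ is unitary (this is a standard property of the charge-conjugation matrix, recalled in Appendix~\ref{sec:CPT}), so the pointwise norm is preserved, and then case (1) handles the residual complex conjugation and the Fourier transform. (4) \emph{Time reversal $\Psi^{\mathrm T}(\mathbf r,-t)=T\Psi^*(\mathbf r,-t)$.} $T$ is unitary (again from Appendix~\ref{sec:CPT}), so $|T\Psi^*(\mathbf r,-t)|^2=|\Psi(\mathbf r,-t)|^2$; at a fixed instant the entropy is a spatial integral, so evaluating it at time $-t$ and relabelling gives $\entropyS_{-t}$, and since the statement compares to $\entropyS_t$ at the corresponding (reversed) time the value matches — more carefully, the claim is that $\entropyS[\Psi^{\mathrm T}]$ at argument $-t$ equals $\entropyS[\Psi]$ at argument $t$, which is exactly what the relabelling yields. (5) \emph{$\mathrm{CPT}$.} Compose: each of $C$, $P$, $T$ contributes a unitary internal factor (their product is unitary) and the spacetime arguments are sent to $(-\mathbf r,-t)$, a unit-Jacobian substitution in $\mathbf r$ together with a time relabelling; the residual complex conjugation is absorbed as in (1). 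In every case one also checks the momentum/spatial-frequency density transforms compatibly, using that Fourier transform intertwines $\mathbf r\mapsto-\mathbf r$ with $\mathbf k\mapsto-\mathbf k$ and commutes with the internal unitaries.

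The main obstacle — really the only non-bookkeeping point — is establishing cleanly that the relevant Dirac-space matrices ($\gamma^0$ for $P$; $C\gamma^{0\tran}$ for $C$; the time-reversal matrix $T$; and their product for $\mathrm{CPT}$) are unitary, so that $|M\Psi|^2=|\Psi|^2$ pointwise in the spinor indices; this is what makes the internal (spin/polarization) degrees of freedom drop out of the entropy. I would invoke the standard representation-theoretic facts collected in Appendix~\ref{sec:CPT} rather than rederive them. A secondary subtlety is the QFT bookkeeping: the densities are vacuum expectation values $\bra 0\Psi^\dagger(\mathbf r,t)\Psi(\mathbf r,t)\ket 0$, so one must note that conjugating $\Psi$ by a c-number Dirac matrix $M$ replaces $\Psi^\dagger\Psi$ by $\Psi^\dagger M^\dagger M\Psi=\Psi^\dagger\Psi$ inside the expectation value, and that the antiunitary operations (which involve $\Psi^*$, i.e. reversing creation/annihilation roles or complex-conjugating the mode functions) still leave $|\Psi|^2$ invariant because $\ket 0$ is CPT-invariant; I would state this as a one-line remark. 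Beyond that the proof is a finite table-check, one row per transformation, each row an application of ``unitary internal factor $+$ unit-Jacobian coordinate flip $+$ (possibly) harmless complex conjugation,'' together with the parallel statement for the Fourier-transformed field, and finally citing the additivity $\entropyS=\entropyS_{\mathrm r}+\entropyS_{\mathrm k}-3\ln\hbar$ to conclude $\entropyS_t$ is unchanged in all five cases.
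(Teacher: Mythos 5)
Your proposal is correct and follows essentially the same route as the paper's proof: both arguments reduce each transformation to a unitary internal spinor factor (using $(\gamma^0)^\dagger\gamma^0 = C^\dagger C = T^\dagger T = \Identitymatrix$) composed with a unit-Jacobian flip of $\mathbf{r}$, $\mathbf{k}$, or $t$, so that every transformed density equals $\rho_{\mathrm{r}}(\mathbf{r},t)$ or $\rho_{\mathrm{k}}(\mathbf{k},t)$ up to a measure-preserving relabelling, and the entropy integrals coincide. Your added remarks on the Fourier side ($\widehat{\Psi^*}(\mathbf{k},t)=\Phi^*(-\mathbf{k},t)$) and on the vacuum-expectation bookkeeping are slightly more explicit than the paper's corresponding one-line observations, but they do not change the structure of the argument.
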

\begin{proof}
  The probability densities of  $\Psi^{*}(\mathbf{r},t)$,   $\Psi^\mathrm{T}(\mathbf{r},-t)$, $\Psi^\mathrm{P}(-\mathbf{r},t)$, $\Psi^{\mathrm{C}}(\mathbf{r},t)$, and  $\Psi^{\mathrm{CPT}}(-\mathbf{r},-t)$   are respectively,
 \begin{align}
   \label{eq:1}
   \rho^{*}_{\mathrm{r}}(\mathbf{r},t)&=\Psi^{\tran}(\mathbf {r},t) \Psi^*(\mathbf {r},t) =\Psi^{\dagger}(\mathbf {r},t)\Psi(\mathbf {r},t)=\rho(\mathbf{r},t)\,,
   \\
   \rho^{\mathrm{T}}_{\mathrm{r}}(\mathbf{r},-t)&=\Psi^{\tran}(\mathbf {r},t) T^{\dagger} T \Psi^*(\mathbf {r},t)  = \Psi^{\tran}(\mathbf {r},t)\Psi^*(\mathbf {r},t)= \rho_{\mathrm{r}}(\mathbf{r},t)\,,
   \\
   \rho^{\mathrm{P}}_{\mathrm{r}}(-\mathbf{r},t)&=\Psi^{\dagger}(\mathbf {r},t) (\gamma^0)^{\dagger} \gamma^0 \Psi(\mathbf {r},t)  = \Psi^{\dagger}(\mathbf {r},t)\Psi(\mathbf {r},t)= \rho_{\mathrm{r}}(\mathbf{r},t)\,,
   \\
   \rho^{\mathrm{C}}_{\mathrm{r}}(\mathbf{r},t)&=\left(\overline{\Psi}^{\tran}\right)^{\dagger}(\mathbf{r},t)  \CC^{\dagger} \CC  \overline{\Psi}^{\tran}(\mathbf{r},t)   = \overline{\Psi}^{*}(\mathbf{r},t)\overline{\Psi}^{\tran}(\mathbf{r},t) = \rho_{\mathrm{r}}(\mathbf{r},t)\,,
   \\
   \rho^{\mathrm{CPT}}_{\mathrm{r}}(-\mathbf{r},-t)&=\left(\overline{\Psi}^{\tran}\right)^{\dagger}(\mathbf {r},t)  (\CC P T)^{\dagger} (\CC P T) \overline{\Psi}^{\tran}(\mathbf {r},t)
   \\
   &
   = \overline{\Psi}^{*}(\mathbf {r},t)\overline{\Psi}^{\tran}(\mathbf {r},t)
   = \rho_{\mathrm{r}}(\mathbf{r},t)\,,
 \end{align}
 where we used  $  \CC^{\dagger} \CC= T^{\dagger}T=\Identitymatrix$. Note that because  the entropy requires an integration over the whole spatial volume, the densities $ \rho^{\mathrm{P}}(-\mathbf{r},t)$ and $\rho^{\mathrm{CPT}}(-\mathbf{r},-t)$ will be evaluated  over the same volume as $\rho(\mathbf{r},t)$. As  the densities are equal, so are the associated entropies.

 The same properties above are valid for the corresponding Fourier transform fields. Note that when Parity (P) is applied to a quantum field operator, both the spatial and momentum variables  behave as odd variables,  and when time reversal is applied, both spatial and momentum operators are time reversed. So all the derivations above are similarly valid for the Fourier  quantum fields. Thus, both entropies terms in  $\entropyS_t=\entropyS_r+\entropyS_k$ are invariant under all CPT transformations.
\end{proof}

Note that a proof that the entropy is invariant under  CPT and each of the  transformations can be readily achieved in quantum \mechanics, without the QFT framework.

\subsection{Entropy Invariance Under the Special Relativity's Lorentz Group}
\label{subsec:Lorentz-transformation}
We now investigate the behavior of the entropy when space and time are transformed according to the Lorentz transformations.  The natural setting for the Lorentz transformations is QFT, as space and time are treated similarly.
\begin{proposition}
The entropy  is a scalar in special relativity.
\end{proposition}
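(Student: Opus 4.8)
\emph{Proof proposal.} The plan is to reduce the claim to the case of a Lorentz boost and then combine two ingredients — the Lorentz invariance of the dimensionless phase‑space volume element, and the scalar character of the dimensionless phase‑space probability density — after which the argument runs parallel to the proof of Proposition~\ref{proposition:change-coordinates-S}. First I would observe that the proper orthochronous Lorentz group is generated by spatial rotations and boosts. A rotation acts on $\mathbf{r}$ as a linear point transformation with $\det J = 1$ and identically on $\mathbf{p}$, so Proposition~\ref{proposition:change-coordinates-S} already yields invariance of $\entropyS$ there (spatial and momentum translations being covered by Lemma~\ref{lemma:frame-invariance}, and time translation merely relabelling the instant at which $\entropyS$ is evaluated). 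Hence the only genuinely new case is a boost. I would work in the QFT setting, where $\rho_{\mathrm r}(\mathbf r,t)=\bra{0}\Psi^{\dagger}(\mathbf r,t)\Psi(\mathbf r,t)\ket{0}$ is the time component $j^0$ of the current $j^\mu=\overline{\Psi}\gamma^\mu\Psi$, with a Fourier‑space counterpart $\rho_{\mathrm k}(\mathbf k,t)$, and where the term $-3\ln\hbar$ is already absorbed into $\entropyS_k$.

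The first ingredient is the relativistic Liouville (phase‑space) theorem: $\tfrac{1}{\hbar^3}\diff^3\mathbf r\,\diff^3\mathbf p$ is a Lorentz invariant. I would derive it from the invariance of the on‑shell measure $\diff^3\mathbf p/E$ and of the particle count $f\,\diff^3\mathbf r\,\diff^3\mathbf p=f\,(E\,\diff^3\mathbf r)(\diff^3\mathbf p/E)$, where on a $t=\mathrm{const}$ slice $E\,\diff^3\mathbf r=p^\mu\diff\Sigma_\mu$ and $f$ is the scalar one‑particle distribution function; invariance of $f$, of $f\,\diff^3\mathbf r\,\diff^3\mathbf p$, and of $\diff^3\mathbf p/E$ then forces invariance of $\diff^3\mathbf r\,\diff^3\mathbf p$. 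The second ingredient is that the dimensionless density $\rho_{\mathrm r}(\mathbf r,t)\,\rho_{\mathrm k}(\mathbf k,t)$ transforms as exactly such a scalar $f$, once the boost is implemented as a pair of transformations $F:\mathbf r\mapsto\mathbf r'$ and $G:\mathbf k\mapsto\mathbf k'$ chosen to preserve the Fourier relation between $\Psi'$ and $\Phi'$ — precisely the freedom exploited in Proposition~\ref{proposition:change-coordinates-S} — so that $\rho'_{\mathrm{r}'}\rho'_{\mathrm{k}'}=\rho_{\mathrm r}\rho_{\mathrm k}$ and the infinitesimal probability $\rho'_{\mathrm{r}'}\rho'_{\mathrm{k}'}\,\diff^3\mathbf r'\,\diff^3\mathbf k'$ is invariant.

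With both ingredients in hand the conclusion mirrors Proposition~\ref{proposition:change-coordinates-S}: under the boost $\entropyS_{\mathrm r}\mapsto\entropyS_{\mathrm{r}'}+\langle\ln\det J_r\rangle_{\rho'_{\mathrm{r}'}}$ and $\entropyS_{\mathrm p}\mapsto\entropyS_{\mathrm{p}'}+\langle\ln\det J_p\rangle_{\rho'_{\mathrm{p}'}}$, while invariance of $\diff^3\mathbf r\,\diff^3\mathbf p$ pins down $G$ so that $\langle\ln\det J_r\rangle_{\rho'_{\mathrm{r}'}}+\langle\ln\det J_p\rangle_{\rho'_{\mathrm{p}'}}=0$; the constant $-3\ln\hbar$ is untouched, so $\entropyS'=\entropyS$. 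Equivalently, $\entropyS=-\!\int\tfrac{\diff^3\mathbf r\,\diff^3\mathbf p}{\hbar^3}\,(\rho_{\mathrm r}\rho_p\hbar^3)\ln(\rho_{\mathrm r}\rho_p\hbar^3)$ is the integral of an invariant density against an invariant measure, hence a scalar.

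The step I expect to be the main obstacle is legitimising the treatment of a boost as a phase‑space coordinate transformation in the first place: a boost mixes $\mathbf r$ with $t$ (and $\mathbf p$ with $E$), so one must show that the entropy computed on a $t=\mathrm{const}$ slice equals the entropy computed on the boosted $t'=\mathrm{const}$ slice. Rigorously this amounts to showing that $\int\rho_{\mathrm r}(\mathbf r,t)\,\diff^3\mathbf r$, and the corresponding $\ln$‑weighted integral, are independent of the spacelike hypersurface for the states in question — a conservation statement about the probability current — or, alternatively, to constructing an entropy four‑current $\entropyS^\mu$ with $\partial_\mu\entropyS^\mu=0$ along the (free or asymptotic) evolution and invoking the slice‑independence of $\entropyS=\int_\Sigma\entropyS^\mu\diff\Sigma_\mu$. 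Once this slice‑independence is secured, the remaining manipulations are routine.
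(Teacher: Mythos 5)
Your proposal is correct and follows essentially the same route as the paper's proof: both rest on the Lorentz invariance of the phase-space volume element $\diff^3\mathbf{r}\,\diff^3\mathbf{p}$ (obtained from the separate invariance of $\diff^3\mathbf{p}/E$ and of $E\,\diff^3\mathbf{r}$, with $E=\hbar\omega_{\mathrm{k}}$) together with the invariance of the product density $\rho_{\mathrm{r}}\rho_{\mathrm{k}}$, so that the entropy is an invariant density integrated against an invariant measure. The only substantive difference is that you explicitly reduce to boosts and flag the equal-time-slice subtlety, which the paper's proof leaves implicit.
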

{
\begin{proof}
The probability elements
\begin{align}
    \diff \pr(\mathbf{r},t) =\rho_{\mathrm{r}}(\mathbf{r},t)\diff^3\mathbf{r}\qquad  \text{and} \qquad \diff \pr(\mathbf{k},t) =\rho_{\mathrm{k}}(\mathbf{k},t)\diff^3\mathbf{k}
    \label{eq:probability-elements}
\end{align}
are invariant under special relativity since probabilities of events  do not depend on the frame of reference. Consider a slice of the phase space with a given frequency  $\omega_{\mathrm{k}}= \sqrt{ \mathbf{k}^2 c^2+ \left(\frac{m c^2}{\hbar}\right )^2}$. The volume elements $\frac{1}{\omega_{\mathrm{k}}} \diff^3\mathbf{k}$ and $\omega_{\mathrm{k}} \diff^3\mathbf{r}$,  are invariant under the Lorentz group~\cite{weinberg1995quantum1}, that is,
\begin{align}
\frac{1}{\omega_{\mathrm{k}}} \diff^3\mathbf{k}=\frac{1}{\omega_{\mathrm{k}'}} \diff^3\mathbf{k}'\qquad  & \qquad \omega_{\mathrm{k}} \diff^3\mathbf{r}=\omega_{\mathrm{k}'} \diff^3\mathbf{r}'
\\
 & \Downarrow
 \\
    \diff V \eqdefA  \diff^3\mathbf{k} \diff^3\mathbf{r} & = \diff^3\mathbf{k}' \diff^3\mathbf{r}' \eqdefA  \diff V '\,,
\end{align}
where $\mathbf{r}'  $, $\mathbf{k}', $ and $ \omega_{\mathrm{k}'}$ are the results of a Lorentz transformation applied to  $\mathbf{r}$, $\mathbf{k}$, and $\omega_{\mathrm{k}}$.
 Thus, $\frac{1}{\omega_{\mathrm{k}}}\rho_{\mathrm{r}}(\mathbf{r},t)$ and $\omega_{\mathrm{k}} \rho_{\mathrm{k}}(\mathbf{k},t)$ are invariant under the Lorentz group.
Thus, the phase space density $\rho_{\mathrm{r}}(\mathbf{r},t) \rho_{\mathrm{k}}(\mathbf{k},t)$ is an invariant under the Lorentz group. Therefore the entropy  is invariant in special relativity, that is, it is a scalar quantity under Lorentz transformations.
\end{proof}
Note that in QFT, one does scale the  operator  $\Phi(\mathbf{k},t)$ by $\sqrt{2\omega_{\mathrm{k}}}$, that is, one scales the creation and the annihilation operators ${\alpha}^{\dagger}(\mathbf{k})=\sqrt{\omega_{\mathrm{k}}}\, \mathbf{a}^{\dagger}(\mathbf{k})$ and ${\alpha}(\mathbf{k})=\sqrt{\omega}\,  \mathbf{a}(\mathbf{k})$. In this way,  the density operator $\Phi^{\dagger}(\mathbf{k},t)\Phi(\mathbf{k},t)$
scales with $\omega_{\mathrm{k}}$ and becomes  an invariant in  special relativity. Also, with such scaling, the  infinitesimal  probability of finding a particle with momentum $\mathbf{p}=\hbar \mathbf{k}$ in the original reference frame is invariant  under Lorentz transformation, though the same particle would be found with  momentum $\mathbf{p}'=\hbar \mathbf{k}'$.
}

\subsection{\QCurves and Entropy-Partition}
\label{subsec:entropy-partition}

We define a \QCurve to be a curve (or path) in Hilbert space representing the time evolution of all degrees of freedom of an initial state,  according to a Hamiltonian,  for a time interval.  In quantum \mechanics we can represent it by  a triple $\big(\ket{\psi_{0}}, U(t), [t_{0},t_{1}] \big)$  where $\ket{\psi_{0}}$ is the initial state at time $t_{0}$, $U(t)=\eu^{-\iu H t}$ is the evolution operator, and $[t_{0},t_{1}]$ the time interval of the evolution. Alternatively, we can represent the initial state of  a \QCurve in a phase space, via $\ket{\psi_{0}}\mapsto \bra{\mathbf{r}}\ket{\psi_{0}}, \bra{\mathbf{p}}\ket{\psi_{0}}$.

\QCurves representation can readily be expanded to a QFT setting where one defines the initial state as $\left ( \Psi(\mathbf{r},0)\ket{0}, \Phi(\mathbf{k},0)\ket{0}\right)$ or simply $\left (\Psi_0(\mathbf{r}), \Phi_0(\mathbf{k})\right) $ and the \QCurve evolution is described in the QFT phase space.
We will adapt here the quantum \mechanics representation but all the definitions in this section can be straightforwardly  written in the QFT setting.

For manipulations purposes we may also write the  \QCurve triple as $\big(\ket{\psi_{0}}, U(t), [0, t_{1}-t_{0}] \big)$ to have the starting value $t=0$, or if that simplifies,   $\big(\ket{\psi_{0}}, U(t), \delta t \big)$, where $\delta t= t_{1}-t_{0}$.
To allow unbounded evolutions, we could set $t_{1}=\infty$, and then $[t_{0},\infty]$ would stand for $[t_{0},\infty)$.

\begin{definition}[Partition of $\mathscr{E}$]
  \label{def:partition}
Let $\setAll$ to be the set of all \QCurves.  We define a partition of  $\setAll$ based on the entropy evolution into four blocks   as follows
\begin{enumerate}
 \item
$\setConstant$ is the set of the \QCurves for which the entropy is a constant.
 \item
 $\setDecreasing$ is the set of the \QCurves for which the entropy is decreasing, but it is not a constant.
 \item
 $\setIncreasing$ is the set of  the \QCurves for which the entropy is increasing, but it is not a constant.
 \item
$\setOscillating$ is the set of the remaining \QCurves.
 \end{enumerate}
 \end{definition}

 The \QCurves in $\setOscillating$ are oscillating, with the entropy strictly decreasing in some subinterval of $[t_{0},t_{1}]$ and strictly increasing in another subinterval of $[t_{0},t_{1}]$. Clearly, the \QCurves in $\setOscillating$ are concatenations of \QCurves in $\setDecreasing \bigcup \setIncreasing$ of shorter durations.

\begin{definition}[\QCurve Reflection]
  Given a  \QCurve $ \big(\ket{\psi_{0}}, U(t), [t_{0}-t_{1}] \big)$, a \QCurve $\big(\ket{\psi'_{0}}, U(t), [t'_{0},t'_{1}] \big)$ is a reflection  of it if and only if $t_1 - t_0 = t'_1 - t'_0 = \deltau t$ and for every $\tau \in [0, \deltau t]$, $\entropyS\big( \ket{\psi_{t_0 + \tau}} \big ) = \entropyS\left( \ket{\psi'_{t'_1-\tau}} \right )$.
\end{definition}
It is straightforward to verify  that if a  a \QCurve $e'_0$ is a reflection of a \QCurve $e_0$, then $e_0$ is a reflection of $e'_0$, that is the reflection of \QCurves is an involution.

Stated informally, if the  two \QCurves are shifted so that their evolution intervals are aligned, they are the reflections of each other at the midpoint of the evolution interval. Note also that if the entropy of the first curve is increasing, the entropy of the second curve is decreasing, and vice versa.

\begin{definition}[Set Reflection]
  \label{def:ReflectionSets}
We define a symmetric reflection relation on the subsets of $\setAll$.
 $\setAll_{2}$ is a reflection of $\setAll_{1}$ if and only if for every \QCurve $e_1\in\setAll_{1}$ there is a \QCurve  $e_2\in\setAll_{2}$ that is  a  reflection of $e_1$, and vice versa.
\end{definition}

\section{Dispersion Transform: The Entropy of Coherent States Increases with Time}
\label{sec:dispersion-coherent-states}

This section is focused on the spatial distribution of a state and not on its internal degrees of freedom. The section can be read in a  quantum \mechanics setting  or a QFT setting since both space and momentum are simply treated as Fourier  transforms of each other. The vector notation used here is of a quantum \mechanics setting, but to suggest the connections we use $\mathbf {k}=\frac{1}{\hbar}\mathbf {p}$ representation to emphasize the similarity of the two representations for this section, and so $\entropyS_{\mathrm{k}}=\entropyS_{\mathrm{p}}-3\ln \hbar$.

We now consider initial solutions that are localized in space, $\psi_{\mathrm{k}_0}(\mathbf {r}-\mathbf {r}_0)=\psi_0(\mathbf {r}-\mathbf {r}_0) \, \eu^{\iu \mathbf {k}_0\cdot \mathbf {r}}$, where $\mathbf{r}_0$ is the mean value of $\mathbf {r}$ according to the probability distribution $\rho_{\mathrm{r}}(\mathbf {r})=|\psi_0(\mathbf {r})|^2$, and the phase term $\eu^{\iu \mathbf {k}_0\cdot \mathbf {r}}$ gives the momentum shift of $\mathbf {k}_0$. Assume that the variance, $\sigma^2 =\int \diff^3\mathbf {r}\, (\mathbf {r}-\mathbf {r}_0)^2 \rho(\mathbf {r}) $, is finite.  $\psi_{\mathrm{k}_0}(\mathbf {r}-\mathbf {r}_0)$ evolves according to the given Hamiltonian with a dispersion relation $\omega(\mathbf {k})$. In a Cartesian representation, we can write the initial state in the momentum space as $ \phi_{\mathrm{r}_0}(\mathbf {k}-\mathbf {k}_0)= \phi_{0}(\mathbf {k}-\mathbf {k}_0) \, \eu^{-\iu (\mathbf {k}-\mathbf {k}_0)\cdot \mathbf {r}_0}$, where $\phi_{0}(\mathbf {k})$ is the Fourier transform of $\psi_0(\mathbf {r})$, and so
 $\rho_{\mathrm{k}}(\mathbf {k})=|\phi_{\mathrm{r}_0}(\mathbf {k}-\mathbf {k}_0)|^2$ also has a finite variance, $\sigma_{\mathrm{k}}^2$,  with the mean in the  momentum center
$\mathbf {k}_0$. Evolving the wave function in  momentum space according to the dispersion relation $\omega(\mathbf {k})$ and taking the inverse Fourier transform, we get
\begin{align}
 \psi(\mathbf {r}-\mathbf {r}_0,t) &={\frac {1}{({\sqrt {2\piu }})^{3}}}\int \Phi_{\mathrm{r}_0} (\mathbf {k}-\mathbf {k}_0) \, \eu^{-\iu \omega(\mathbf {k}) t} \eu^{\iu \mathbf {k} \cdot \mathbf {r} }\diff^{3}\mathbf {k}\,.
 \label{eq:time-evolution-psi-general}
\end{align}
As $\phi_{\mathrm{r}_0}(\mathbf {k}-\mathbf {k}_0)$ fades away exponentially from $\mathbf {k}=\mathbf {k}_0$, we can expand the dispersion formula in a Taylor series and approximate it as
\begin{align}
 \omega(\mathbf {k})   &\approx
 \mathbf{\mathbf{ v_p}}(\mathbf {k}_0) \cdot \mathbf {k}_0+\mathbf{v_g}(\mathbf {k}_0) \cdot (\mathbf {k}-\mathbf{k}_0)
+
 \frac{1}{2}(\mathbf {k}-\mathbf {k}_0)^{\tran} \hessian(\mathbf {k}_0)\,
 (\mathbf {k}-\mathbf {k}_0)\, , \,
  \label{eq:w-dispersion}
\end{align}
where $\mathbf{\mathbf{ v_p}}(\mathbf {k}_0)$, $\mathbf{ \mathbf{v_g}}(\mathbf {k}_0)$, and $ \hessian(\mathbf {k}_0) $ are the phase velocity, the group velocity, and the Hessian of the dispersion relation $\omega(\mathbf {k})$, respectively. Then after inserting~\eqref{eq:w-dispersion}    into~\eqref{eq:time-evolution-psi-general},
\begin{flalign}
 \psi_{\mathrm{k}_0}(\mathbf {r}-\mathbf {r}_0-\mathbf{v_g}(\mathbf {k}_0) t,t)  & \approx
 \frac{1}{Z_{\mathrm{r}}}\eu^{-\iu t \mathbf{\mathbf{ v_p}}(\mathbf {k}_0) \cdot \mathbf {k}_0} \,  \psi_{\mathrm{k}_0}(\mathbf {r}- (\mathbf {r}_0 +\mathbf{v_g}(\mathbf {k}_0) t))\\
 &\quad \ast \normalx{\mathbf {r}}{\mathbf{r}_0+\mathbf{v_g}(\mathbf {k}_0) t}{\iu t \hessian(\mathbf {k}_0) }\,,
 \\
 \Phi_{\mathrm{r}_0+\mathbf{v_g}(\mathbf {k}_0)\, t} (\mathbf {k}-\mathbf {k}_0, t) & \approx   \frac{1}{Z_k} \, \eu^{-\iu t \mathbf{\mathbf{ v_p}}(\mathbf {k}_0) \cdot \mathbf {k}_0} \, \Phi_{\mathrm{r}_0+\mathbf{v_g}(\mathbf {k}_0)\, t} (\mathbf {k}-\mathbf {k}_0)  \\
 &\quad \times \normalx{\mathbf {k}}{\mathbf{k}_0}{-\iu t^{-1} \hessian^{-1}(\mathbf {k}_0) }\, ,
 \label{eq:time-evolution-psi-dispersion}
\end{flalign}
where $\psi_{\mathrm{k}_0}(\mathbf {r}-\mathbf {r}_0-\mathbf{v_g}(\mathbf {k}_0) t,t)$ is the spatial Fourier transform of $\Phi_{\mathrm{r}_0+\mathbf{v_g}(\mathbf {k}_0)\, t} (\mathbf {k}-\mathbf {k}_0, t) $, $\ast$ denotes a convolution, $Z_{r,k}$ normalizes the amplitude, and $\mathscr{N}$ denotes a normal distribution. We can interpret this evolution as describing a wave moving with phase velocity $\mathbf{\mathbf{ v_p}}(\mathbf {k}_0)$, group velocity $\mathbf{v_g}(\mathbf {k}_0)$, and being blurred by a time varying  imaginary-valued symmetric matrix $\iu t \hessian(\mathbf {k}_0) $.

We refer to \eqref{eq:time-evolution-psi-dispersion}  as the \textit{quantum dispersion transform}.

The probability density functions associated with the probability amplitude functions in \eqref{eq:time-evolution-psi-dispersion} are
\begin{align}
 \rho_{\mathrm{r}}(\mathbf {r}-(\mathbf {r}_0+\mathbf{v_g}(\mathbf {k}_0) \, t),t) & = \frac{1}{Z_{\mathrm{r}}^2}
 | \psi_{\mathrm{k}_0}(\mathbf {r}- (\mathbf {r}_0 +\mathbf{v_g}(\mathbf {k}_0) \, t)) \\
 & \quad \quad \ast \normalx{\mathbf {r}}{\mathbf{r}_0+\mathbf{v_g}(\mathbf {k}_0) \, t}{\iu \, t\, \hessian(\mathbf {k}_0) }|^2\,,
 \\
\rho_{\mathrm{k}}(\mathbf {k}-\mathbf {k}_0,t) & = \frac{1}{Z_k^2}|\Phi_{\mathrm{r}_0+\mathbf{v_g}(\mathbf {k}_0)\, t} (\mathbf {k}-\mathbf {k}_0)|^2   \, .
 \label{eq:time-evolution-rho-dispersion}
\end{align}

\begin{lemma}[Dispersion Transform and Reference Frames]
\label{lemma:simplified-densities}
The entropy associated with the probability amplitudes \eqref{eq:time-evolution-psi-dispersion} and their respective probability densities \eqref{eq:time-evolution-rho-dispersion}  is the same as the entropy associated with the simplified probability densities
\begin{align}
 \rho^\entropyS_{\mathrm{r}}(\mathbf {r},t)&=  \frac{1}{Z^2}| \psi_{0}(\mathbf {r}) \ast \normalx{\mathbf {r}}{0}{\iu \, t\, \hessian(0) }|^2\,,
 \\
\rho^\entropyS_{\mathrm{k}}(\mathbf {k},t) & = \frac{1}{Z_k^2}|\Phi_{0} (\mathbf {k})|^2 =\rho^\entropyS_{\mathrm{k}}(\mathbf {k},t=0)\, .
 \label{eq:rho-dispersion-transform-simplified}
\end{align}
\end{lemma}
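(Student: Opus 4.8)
The entropy $\entropyS=\entropyS_{\mathrm{r}}+\entropyS_{\mathrm{k}}-3\ln\hbar$ depends on a state only through the moduli of its position and momentum amplitudes, and by Lemma~\ref{lemma:frame-invariance} it is unchanged under a rigid translation of the state along $\mathbf{r}$ or along $\mathbf{k}$. The plan is therefore to show that, at every $t$, the densities \eqref{eq:time-evolution-rho-dispersion} agree with the simplified densities \eqref{eq:rho-dispersion-transform-simplified} up to a rigid translation in $\mathbf{r}$ (for the position density) and a rigid translation in $\mathbf{k}$ (for the momentum density), together with multiplicative phase factors that disappear under $|\,\cdot\,|^{2}$.

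The momentum side is immediate: the only $t$- and $\mathbf{r}_0$-dependence in $\Phi_{\mathbf{r}_0+\mathbf{v_g}(\mathbf{k}_0)t}(\mathbf{k}-\mathbf{k}_0)$ is the pure phase $\eu^{-\iu(\mathbf{k}-\mathbf{k}_0)\cdot(\mathbf{r}_0+\mathbf{v_g}(\mathbf{k}_0)t)}$ (the blurring Gaussian of \eqref{eq:time-evolution-psi-dispersion} has purely imaginary covariance, so its modulus is constant in $\mathbf{k}$ and was already absorbed into $Z_k$), whence $\rho_{\mathrm{k}}(\mathbf{k}-\mathbf{k}_0,t)=\frac{1}{Z_k^{2}}|\Phi_{0}(\mathbf{k}-\mathbf{k}_0)|^{2}$. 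This is $t$-independent and is the rigid $\mathbf{k}_0$-translate of $\rho^{\entropyS}_{\mathrm{k}}(\mathbf{k},t=0)$, so $\entropyS_{\mathrm{k}}$ is unchanged by Lemma~\ref{lemma:frame-invariance}.

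The position side is the crux. Since $\psi_{\mathbf{k}_0}=\psi_{0}\,m$ up to a constant phase, with $m(\mathbf{r})=\eu^{\iu\mathbf{k}_0\cdot\mathbf{r}}$, \eqref{eq:time-evolution-rho-dispersion} says that $\rho_{\mathrm{r}}(\cdot,t)$ equals $\frac{1}{Z_{\mathrm{r}}^{2}}\bigl|(\psi_{0}\,m)\ast g\bigr|^{2}$ rigidly translated in $\mathbf{r}$, where $g(\mathbf{r})=\mathscr{N}(\mathbf{r}\mid\mathbf{0},\iu t\hessian(\mathbf{k}_0))$. The obstruction is that $m$ sits inside the convolution, so it does not simply factor out of the modulus. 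I would extract it with the identity $(\psi_{0}\,m)\ast g=m\cdot\bigl(\psi_{0}\ast(g\,\overline{m})\bigr)$, and then complete the square in $g\,\overline{m}$: because the covariance $\iu t\hessian(\mathbf{k}_0)$ is purely imaginary and $\hessian$ is real symmetric, the induced mean displacement $-(\iu t\hessian(\mathbf{k}_0))(\iu\mathbf{k}_0)=t\,\hessian(\mathbf{k}_0)\mathbf{k}_0$ is \emph{real}, so $g\,\overline{m}$ is a rigidly translated copy of $g$ times the pure phase $\eu^{-\frac{\iu t}{2}\mathbf{k}_0^{\tran}\hessian(\mathbf{k}_0)\mathbf{k}_0}$. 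Hence $\bigl|(\psi_{0}\,m)\ast g\bigr|^{2}$ is $\bigl|\psi_{0}\ast g\bigr|^{2}$ up to a rigid $\mathbf{r}$-translation, so $\rho_{\mathrm{r}}(\cdot,t)$ equals $\rho^{\entropyS}_{\mathrm{r}}(\cdot,t)=\frac{1}{Z^{2}}\bigl|\psi_{0}\ast\mathscr{N}(\mathbf{r}\mid\mathbf{0},\iu t\hessian)\bigr|^{2}$ up to a rigid $\mathbf{r}$-translation, and Lemma~\ref{lemma:frame-invariance} again gives the equality of $\entropyS_{\mathrm{r}}$.

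The one genuinely non-formal step is this square-completion: it converts the in-convolution modulation $m$ into a rigid translation plus a pure phase, and it works \emph{only} because $\hessian$ is real, so that an imaginary covariance produces a real displacement to which Lemma~\ref{lemma:frame-invariance} applies; I would flag this as the main obstacle. I would also remark that the blurring matrix in \eqref{eq:rho-dispersion-transform-simplified} is, by the construction \eqref{eq:w-dispersion}, the second-order Taylor coefficient of $\omega$ at the centre of the state's momentum distribution, so once that centre is placed at the origin it is naturally written $\hessian(0)$. The remaining points — unimodularity of the imaginary-covariance Gaussians, discarding pure phases, and $Z_{\mathrm{r}}=Z$ together with the matching $Z_k$ (a rigid translation preserves the integral of a density) — are routine bookkeeping.
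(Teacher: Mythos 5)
Your proof is correct and follows the same basic strategy as the paper's: exhibit the densities \eqref{eq:time-evolution-rho-dispersion} as rigid translates (in $\mathbf{r}$ and in $\mathbf{k}$ respectively) of the simplified densities, then invoke Lemma~\ref{lemma:frame-invariance}. The difference is one of completeness: the paper's entire proof is the two sentences ``translate the position by $\mathbf{r}_0+\mathbf{v_g}(\mathbf{k}_0)t$ and the momentum by $\mathbf{k}_0$, apply Lemma~\ref{lemma:frame-invariance},'' and it never confronts the point you correctly identify as the crux --- that the modulation $\eu^{\iu\mathbf{k}_0\cdot\mathbf{r}}$ sits \emph{inside} the convolution with the complex Gaussian, so a momentum-frame translation does not remove it by mere inspection. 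Your identity $(\psi_0\,m)\ast g=m\cdot\bigl(\psi_0\ast(g\,\overline{m})\bigr)$ followed by completing the square, with the observation that the purely imaginary covariance $\iu t\hessian(\mathbf{k}_0)$ turns the modulation into the \emph{real} displacement $t\,\hessian(\mathbf{k}_0)\mathbf{k}_0$ plus a pure phase, is exactly the missing step that makes the paper's one-line claim true; the computation checks out (the leftover constant $-\tfrac{\iu t}{2}\mathbf{k}_0^{\tran}\hessian(\mathbf{k}_0)\mathbf{k}_0$ is indeed a phase, and the residual displacement, though not equal to $\mathbf{v_g}(\mathbf{k}_0)t$ for the Dirac dispersion, is harmless since any rigid translation is covered by Lemma~\ref{lemma:frame-invariance}). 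Your reading of $\hessian(0)$ in \eqref{eq:rho-dispersion-transform-simplified} as the Hessian at the translated momentum centre (i.e.\ really $\hessian(\mathbf{k}_0)$ in the original frame) is also the only interpretation under which the lemma holds as stated, and is worth flagging as you do.
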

\begin{proof}

Consider the probability densities \eqref{eq:time-evolution-rho-dispersion}. If the frame of reference is translating the position by $\mathbf {r}_0'=\mathbf {r}_0+\mathbf{v_g}(\mathbf {k}_0) \, t$ and the momentum by $k_0$, that will yield the  simplified density functions \eqref{eq:rho-dispersion-transform-simplified}.

By Lemma~\ref{lemma:frame-invariance}, the  entropy in position and momentum is  invariant under translations of the position $\mathbf{r}$ and  the momentum $\mathbf{k}$.
\end{proof}

The time invariance of the density $\rho^\entropyS_{\mathrm{k}}(\mathbf {k},t)$, and therefore of $\entropyS_{\mathrm{p}}$, reflects  the conservation law of momentum for free particles.

We now consider coherent states, $\ket{\alpha}$, the eigenstates of the annihilator operator used to define quantum fields. In position and momentum space they are
\begin{align}
 \psi_{\mathrm{k}_0}(\mathbf {r}-\mathbf {r}_0)& = \bra{\mathbf {r}}\ket{\alpha}=\frac{1}{2^3 \piu^{\frac{3}{2}}(\det \matrixsym{\Sigmabold})^{\frac{1}{2}}}\, \normalx{\mathbf {r}}{\mathbf {r}_0}{\matrixsym{\Sigmabold} }\, \eu^{\iu \mathbf {k}_0\cdot \mathbf {r}}\,,
 \\
 \Phi_{\mathrm{r}_0} (\mathbf {k}-\mathbf {k}_0) &=\bra{\mathbf {k}}\ket{\alpha}=\frac{1}{2^3 \piu^{\frac{3}{2}}(\det \matrixsym{\Sigmabold}^{-1})^{\frac{1}{2}}}\, \normalx{\mathbf {k}}{\mathbf {k}_0}{\matrixsym{\Sigmabold}^{-1} }\, \eu^{\iu (\mathbf {k}-\mathbf {k}_0)\cdot \mathbf {r}_0}\,,
 \label{eq:coherent-state-3D}
\end{align}
where   $ \matrixsym{\Sigmabold} $ is the spatial covariance matrix. They are neither eigenstates of the  free-particle \Schroedinger equation nor of the Dirac equation.

\begin{theorem}
\label{thm:coherent-entropy}
A \QCurve with  an initial coherent state and  evolving according to  a free-particle \Schroedinger or Dirac
Hamiltonian is in \setIncreasing.
\end{theorem}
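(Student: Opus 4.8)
The plan is to write $\entropyS=\entropyS_{\mathrm r}+\entropyS_{p}-3\ln\hbar$ and show that $\entropyS_{p}$ stays constant along the evolution while $\entropyS_{\mathrm r}$ strictly increases, which forces the \QCurve into $\setIncreasing$. Constancy of $\entropyS_{p}$ is immediate: in the momentum representation free evolution acts by a unitary that commutes with multiplication by functions of $\mathbf k$ (a scalar phase $\eu^{-\iu\omega(\mathbf k)t}$ for the \Schroedinger case, a Hermitian bispinor factor $\eu^{-\iu H(\mathbf k)t}$ for the Dirac case), so $\rho_{\mathrm k}(\mathbf k,t)=\rho_{\mathrm k}(\mathbf k,0)$; this is exactly the statement $\rho^{\entropyS}_{\mathrm k}(\mathbf k,t)=\rho^{\entropyS}_{\mathrm k}(\mathbf k,0)$ of Lemma~\ref{lemma:simplified-densities}. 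Hence $\entropyS_{p}$ is time-independent and only $\entropyS_{\mathrm r}$ is in play.

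For $\entropyS_{\mathrm r}$ I would invoke the quantum dispersion transform \eqref{eq:time-evolution-psi-dispersion} together with Lemma~\ref{lemma:simplified-densities}: modulo a translation of the quantum reference frame, which does not change the entropy, the spatial amplitude is $\psi_0(\mathbf r)\ast\normalx{\mathbf r}{0}{\iu\, t\, \hessian(0)}$, where $\hessian(0)$ is the Hessian of the dispersion relation at the packet center. For an initial coherent state \eqref{eq:coherent-state-3D}, $\psi_0$ is a Gaussian amplitude with spatial covariance $\Sigmabold$, and the convolution with the imaginary-covariance Gaussian kernel is the standard spreading computation: carrying it out (a purely algebraic manipulation of complex Gaussians) gives that $\rho^{\entropyS}_{\mathrm r}(\mathbf r,t)=|\psi(\mathbf r,t)|^{2}$ is again Gaussian, with real symmetric positive-definite covariance
\begin{align}
\matc(t)=\Sigmabold+\tfrac14\,t^{2}\,\hessian(0)\,\Sigmabold^{-1}\,\hessian(0)\,.
\end{align}
For the free-particle \Schroedinger Hamiltonian $\hessian(0)=\frac{\hbar}{m}\Identitymatrix$ and this is exact, since $\omega(\mathbf k)$ is exactly quadratic and a coherent state stays exactly Gaussian under a quadratic Hamiltonian. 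For the Dirac Hamiltonian $\hessian(0)$ is the Hessian of $\omega_{\mathbf k}=\sqrt{\mathbf k^{2}c^{2}+(mc^{2}/\hbar)^{2}}$, which is positive definite (it equals $\tfrac{c^{2}}{\omega_{\mathbf k_0}}\big(\Identitymatrix-\tfrac{c^{2}\mathbf k_0\mathbf k_0^{\tran}}{\omega_{\mathbf k_0}^{2}}\big)$, a positive multiple of a positive-definite matrix because $m>0$). Using the differential entropy of a $3$-dimensional Gaussian, $\entropyS_{\mathrm r}(t)=\tfrac12\ln\!\big((2\piu\eu)^{3}\det\matc(t)\big)$.

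It remains to show $\det\matc(t)$ is strictly increasing for $t\ge 0$. Since $\hessian(0)$ is symmetric and invertible, $\hessian(0)\Sigmabold^{-1}\hessian(0)=\hessian(0)^{\tran}\Sigmabold^{-1}\hessian(0)$ is a congruence of the positive-definite matrix $\Sigmabold^{-1}$ and hence positive definite; therefore for $0\le t_1<t_2$,
\begin{align}
\matc(t_2)-\matc(t_1)=\tfrac14\,(t_2^{2}-t_1^{2})\,\hessian(0)\,\Sigmabold^{-1}\,\hessian(0)\succ 0\,,
\end{align}
so $\matc(t)$ is strictly increasing in the Loewner order, starting at $\matc(0)=\Sigmabold$. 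Because the determinant is strictly monotone on positive-definite matrices in the Loewner order, $\det\matc(t)$, and with it $\entropyS_{\mathrm r}(t)$ and $\entropyS(t)=\entropyS_{\mathrm r}(t)+\entropyS_{p}-3\ln\hbar$, are strictly increasing; in particular $\entropyS$ is not constant, so the \QCurve lies in $\setIncreasing$.

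The delicate point is the Dirac case. Unlike the \Schroedinger equation, the Dirac dispersion is not exactly quadratic and Dirac evolution couples the bispinor components, so \eqref{eq:time-evolution-psi-dispersion} is only the leading-order description and, as stated, the conclusion for Dirac is established within that approximation; I would support it by noting---consistently with this section's restriction to the spatial envelope---that for a coherent packet the bispinor is transported as an essentially fixed polarization, contributing only an $\mathbf r$- and $\mathbf k$-independent factor to the densities that cancels in the entropy, and that the neglected higher-order terms of $\omega(\mathbf k)$ only add further spreading. Quantifying ``essentially fixed'' (bounding the Zitterbewegung-type corrections to the spinor and to the envelope) is the part that would need the most care; the remaining ingredients---convergence of the Gaussian convolution, the Gaussian entropy formula, and Loewner monotonicity of $\det$---are routine.
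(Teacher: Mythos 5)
Your proposal takes essentially the same route as the paper's own proof: apply the quantum dispersion transform, use Lemma~\ref{lemma:simplified-densities} to reduce to a time-independent momentum density and a spatial density given by the Gaussian $\normalx{\mathbf r}{0}{\Sigmabold}$ blurred by $\normalx{\mathbf r}{0}{\iu\, t\, \hessian(0)}$, and read off monotonicity from the determinant of the resulting time-dependent covariance. If anything, your closing step is tighter than the paper's, which ends by observing only that $\det\left(\Identitymatrix + t^{2}(\Sigmabold^{-1}\hessian)^{2}\right)>0$ whereas what is actually needed is that this determinant increases with $t$ --- precisely what your Loewner-order argument supplies --- and your explicit caveat that the Dirac case holds only to the order of the Taylor expansion of $\omega(\mathbf k)$ flags an approximation the paper's proof shares but does not acknowledge.
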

\begin{proof}
To describe the evolution of the initial states \eqref{eq:coherent-state-3D}, we apply the quantum dispersion transform \eqref{eq:time-evolution-psi-dispersion}.
According to Lemma~\ref{lemma:simplified-densities},
the entropy associated with the resulting probability densities is the same as the one associated with the simplified probability densities
\begin{align}
\rho^{\text{S}}_{\mathrm{r}}(\mathbf {r},t)& = \frac{1}{Z^2}| \normalx{\mathbf {r}}{0}{\matrixsym{\Sigmabold} } \ast \normalx{\mathbf {r}}{0}{\iu \, t\, \hessian(0) }|^2 \\
 &=\frac{1}{Z_2^2} \normalx{\mathbf {r}}{0}{\matrixsym{\Sigmabold} +\iu t \hessian} \normalx{\mathbf {r}}{0}{\matrixsym{\Sigmabold} -\iu t \hessian}
                                               = \normalx{\mathbf {r}}{0}{\frac{1}{2}\matrixsym{\Sigmabold}(t)}\,,
 \\
 \rho^{\text{S}}_{\mathrm{k}}(\mathbf {k},t) & = \normalx{\mathbf {k}}{0}{\matrixsym{\Sigmabold}^{-1} }\,,
\end{align}
where $\matrixsym{\Sigmabold}(t)=\matrixsym{\Sigmabold} + t^2 \hessian \matrixsym{\Sigmabold}^{-1} \hessian$.
Then
\begin{align}
 \entropyS & =\entropyS_{\mathrm{r}}+\entropyS_{\mathrm{k}}
 \\
            &= {-\kernBeforeIntegral\int_{\Omega} \normalx{\mathbf {r}}{0}{\frac{1}{2}\matrixsym{\Sigmabold}(t)} \ln \normalx{\mathbf {r}}{0}{\frac{1}{2}\matrixsym{\Sigmabold}(t)}\diff^3\mathbf {r}} \\
  & \qquad {}-\kernBeforeIntegral\int_{\Omega} \normalx{\mathbf {k}}{0}{\matrixsym{\Sigmabold}^{-1}} \ln \normalx{\mathbf {k}}{0}{2\matrixsym{\Sigmabold}^{-1}}\diff^3\mathbf {k}
 \\
 &= 3 (1+  \ln \piu )  + \frac{1}{2}\ln \det\left ( \Identitymatrix+ t^2 (\matrixsym{\Sigmabold}^{-1} \hessian)^2\right)\,  .
 \label{eq:entropy-coherent}
\end{align}
It is easy to see that $\det\left ( \Identitymatrix+ t^2 (\matrixsym{\Sigmabold}^{-1} \hessian)^2\right)>0$,
and therefore the entropy increases over time.
\end{proof}
Note that at $t=0$ coherent states have minimum entropy.
The theorem suggests that quantum theory has an inherent mechanism to increase entropy for free particles, the root cause being the position dispersion property of the Hamiltonian.

One possible thought at this point is the following.
Coherent states are composed of superposition of stationary states, and
as we show in Proposition~\ref{prop:stationaryStatesAreInC}, stationary solutions have constant entropy.
One may then attempt to draw a parallel between mixing stationary states of a Hamiltonian in quantum \mechanics  and mixing different ideal gases in equilibrium in  statistical \mechanics.  It may appear that in both scenarios, if the states are not mixed the entropy  is constant over time; but, after mixing, the entropy increases. However, as we will show,
in quantum physics the entropy may  decrease during state evolution.

\section{Entropy Oscillation}
\label{sec:oscillation}

This section addresses several scenarios when \QCurves are in \setOscillating.  We revisit Fermi's golden rule foundations and present an exact derivation of the coefficients used for the rule. We then  explore time reversals accompanied by time translations.  We conclude by studying the oscillation induced by a collision of two particles, each one described by a coherent state, and we show that the oscillation arises due to the interference of the two particles, that is their entanglement.

\subsection{Fermi's Golden Rule Revisited}
\label{subsec:Fermi}
The notation of this section is in the quantum \mechanics setting, but it can be easily adapted to the QFT setting.
Let us consider stationary states $\ket{\psi_t}=\ket{\psi_E} \ee^{-\iu \omega t} $ with $\omega={E}/{\planckbar}$, where  $E$ is the energy eigenvalue of the Hamiltonian, and $\ket{\psi_E}$ is the time-independent eigenstate of the Hamiltonian  associated with energy $E$.
\begin{proposition}
\label{prop:stationaryStatesAreInC}
All  stationary states  are in  $\setConstant$.
\end{proposition}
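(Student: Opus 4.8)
The plan is to show that for a stationary state the spatial and momentum probability densities are both time-independent, whence the entropy is constant in time, placing the \QCurve in $\setConstant$. First I would write the stationary state as $\ket{\psi_t}=\ket{\psi_E}\eu^{-\iu\omega t}$ with $\omega = E/\hbar$, and compute the position-space wave function $\psi(\mathbf{r},t)=\bra{\mathbf{r}}\ket{\psi_t}=\psi_E(\mathbf{r})\,\eu^{-\iu\omega t}$. Taking the modulus squared, the global phase $\eu^{-\iu\omega t}$ has unit modulus and drops out, so $\rho_{\mathrm{r}}(\mathbf{r},t)=|\psi_E(\mathbf{r})|^2$ is independent of $t$.

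Next I would do the same in momentum space: since the Fourier transform is linear and the time dependence is an overall scalar, $\tilde\phi(\mathbf{p},t)=\tilde\phi_E(\mathbf{p})\,\eu^{-\iu\omega t}$, hence $\rho_p(\mathbf{p},t)=|\tilde\phi_E(\mathbf{p})|^2$ is likewise $t$-independent. Substituting both into the entropy definition~\eqref{eq:Relative-entropy}, $\entropyS = \entropyS_{\mathrm{r}} + \entropyS_p - 3\ln\hbar$, every ingredient of the integrand is constant in $t$, so $\entropyS$ is constant along the evolution. By Definition~\ref{def:partition}, a \QCurve with constant entropy lies in $\setConstant$, which is the claim. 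I would also remark that the argument is identical in the QFT setting, since there the field operator acquires the same overall phase $\eu^{-\iu\omega t}$ and the densities $\rho^{\QFT}_{\mathrm{r}}$, $\rho^{\QFT}_{\mathrm{k}}$ are again manifestly $t$-independent.

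There is essentially no hard step here — the whole content is that the time evolution of a stationary state is a pure global phase, which is invisible to any probability density and hence to the entropy. The only point requiring a word of care is that the claim must hold for \emph{every} time interval $[t_0,t_1]$ in the \QCurve triple, but since the densities do not depend on $t$ at all, constancy is immediate on any interval. If one wanted to be thorough, one could note that $\ket{\psi_E}$ being a genuine eigenstate (rather than a generalized/scattering state) guarantees $\psi_E\in L^2$ so that $\rho_{\mathrm{r}}$ and $\rho_p$ are bona fide normalized densities and the entropy integrals are the same objects appearing in Proposition~\ref{proposition:lower-bound-S}; this is implicit in calling it a stationary state.
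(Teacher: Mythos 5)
Your proof is correct and follows essentially the same route as the paper's: project the stationary state onto the position and momentum bases, observe that the global phase $\eu^{-\iu\omega t}$ drops out of both $\rho_{\mathrm{r}}$ and $\rho_p$, and conclude the entropy is time-invariant. The extra remarks on the QFT setting and normalizability are harmless additions but not needed.
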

\begin{proof}
We evaluate
\begin{align}
  \rho_{\mathrm{r}}(\mathbf{r},t)&=|\bra{\mathbf{r}}\ket{\psi_t} |^2 = |\bra{\mathbf{r}}\ket{\psi_{E_i}} \ee^{-\iu \omega_i t}|^2=  |\bra{\mathbf{r}}\ket{\psi_{E_i}}|^2=\rho_{\mathrm{r}}(\mathbf{r},0) \,,
  \\
   \rho_{\mathrm{k}}(\mathbf{k},t)&=|\bra{\mathbf{p}}\ket{\psi_t} |^2 =  |\bra{\mathbf{p}}\ket{\psi_{E_i}} \ee^{-\iu \omega_i t}|^2= |\bra{\mathbf{p}}\ket{\psi_{E_i}}|^2=\rho_{\mathrm{k}}(\mathbf{k},0) \,. \end{align}
 Both are time invariant, and thus the entropy  is time invariant.
\end{proof}

\begin{theorem}[Fermi's golden rule coefficients for two states]
\label{theorem:FermiCoefficientsExactTwoState}
Consider a  particle in  an eigenstate  $\ket{\psi_{E_1}}$ of  a Hamiltonian $H^0$ that has only two eigenstates $\ket{\psi_{E_1}}$ and $ \ket{\psi_{E_2}}$  with eigenvalues $E_1=\hbar \omega_1$ and $E_2=\hbar \omega_2$, respectively. Let this particle interact with an external field (such as the impact of a  Gauge Field), requiring an additional Hamiltonian term $H^{\mathrm{I}}$ to describe the evolution of this system.

The evolution of this particle is described by $\ket{\psi_{t}}=\ee^{-\iu \frac{(H^0+H^{\mathrm{I}})}{\hbar} t} \ket{\psi_{E_1}}=\alpha_1(t)\ket{\psi_{E_1}}+\alpha_2(t)\ket{\psi_{E_2}}$, with $\alpha_1(0)=1$ and $\alpha_2(0)=0$.

Let $\omega_{11}\eqdefA  \omega_1 + \omega_{11}^{\mathrm{I}}$, $\omega_{22}\eqdefA  \omega_1 + \omega_{22}^{\mathrm{I}}$, and   $\omega^{\mathrm{I}}_{i,j}\eqdefA \frac{1}{\hbar}\bra{\psi_{E_i}}H^{\mathrm{I}}\ket{\psi_{E_j}}$. The probability of this particle to be in  state $\ket{\psi_{E_2}}$ at time $t$ is given by

$|\alpha_2(t)|^2 = \frac{4(\omega_{12}^{\mathrm{I}})^2  }{(\omega_1+\omega_{11}^{\mathrm{I}}-\omega_2-\omega_{22}^{\mathrm{I}})^2+4(\omega_{12}^{\mathrm{I}})^2} \, \sin^2 \frac{(\lambda_2-\lambda_1) t}{2} $,

where $\lambda_{1,2} = \frac{\omega_1+\omega_{11}^{\mathrm{I}}+\omega_2+\omega_{22}^{\mathrm{I}} \pm \sqrt{(\omega_1+\omega_{11}^{\mathrm{I}}-\omega_2-\omega_{22}^{\mathrm{I}})^2+4 (\omega_{12}^{\mathrm{I}})^2}}{2}$.
\end{theorem}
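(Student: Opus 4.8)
The plan is to observe that the whole evolution stays inside the two-dimensional subspace $\mathcal{H}_2=\mathrm{span}\{\ket{\psi_{E_1}},\ket{\psi_{E_2}}\}$, so the problem reduces to diagonalizing a $2\times2$ Hermitian matrix and reading off one entry of its exponential. First I would write, in the orthonormal basis $(\ket{\psi_{E_1}},\ket{\psi_{E_2}})$, the matrix of $\tfrac1\hbar(H^0+H^{\mathrm I})$ as
\[
M=\begin{pmatrix}\omega_1+\omega^{\mathrm I}_{11}&\omega^{\mathrm I}_{12}\\[2pt]\omega^{\mathrm I}_{21}&\omega_2+\omega^{\mathrm I}_{22}\end{pmatrix},
\]
using that $H^0$ is diagonal on this basis with entries $\hbar\omega_1,\hbar\omega_2$ (so that the constant $\omega_1$ in the stated definition of $\omega_{22}$ should read $\omega_2$), that Hermiticity of $H^{\mathrm I}$ forces $\omega^{\mathrm I}_{21}=\overline{\omega^{\mathrm I}_{12}}$, and that the claimed identity involves only $|\omega^{\mathrm I}_{12}|^2$, so I may take $\omega^{\mathrm I}_{12}$ real (or keep $|\cdot|$ throughout). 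Then $\ket{\psi_t}=\eu^{-\iu Mt}\ket{\psi_{E_1}}$ and $\alpha_2(t)=\matrixel{\psi_{E_2}}{\eu^{-\iu Mt}}{\psi_{E_1}}$ is exactly the $(2,1)$ entry of $\eu^{-\iu Mt}$.

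Next I would exponentiate $M$ in closed form. Write $M=\bar\omega\,\Identitymatrix+\vec n\cdot\vec\sigma$ with $\bar\omega=\tfrac12(\omega_1+\omega^{\mathrm I}_{11}+\omega_2+\omega^{\mathrm I}_{22})$ and $\vec n\cdot\vec\sigma=\tfrac12\Delta\,\sigma_z+\omega^{\mathrm I}_{12}\,\sigma_x$, where $\Delta=(\omega_1+\omega^{\mathrm I}_{11})-(\omega_2+\omega^{\mathrm I}_{22})$. The eigenvalues are $\lambda_{1,2}=\bar\omega\pm\|\vec n\|$ with $\|\vec n\|=\tfrac12\sqrt{\Delta^2+4|\omega^{\mathrm I}_{12}|^2}=\tfrac12|\lambda_2-\lambda_1|$, which matches the displayed $\lambda_{1,2}$. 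From $\eu^{-\iu Mt}=\eu^{-\iu\bar\omega t}\big(\cos(\|\vec n\|t)\,\Identitymatrix-\iu\,\tfrac{\sin(\|\vec n\|t)}{\|\vec n\|}\,\vec n\cdot\vec\sigma\big)$ the $(2,1)$ entry is $-\iu\,\eu^{-\iu\bar\omega t}\,\tfrac{\sin(\|\vec n\|t)}{\|\vec n\|}\,\omega^{\mathrm I}_{12}$. Equivalently, one diagonalizes $M$ by hand, expands $\ket{\psi_{E_1}}$ in its eigenbasis, propagates each component by $\eu^{-\iu\lambda_jt}$, and projects onto $\ket{\psi_{E_2}}$; the Pauli-matrix route simply packages that bookkeeping.

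Finally I would take the modulus squared, $|\alpha_2(t)|^2=\tfrac{|\omega^{\mathrm I}_{12}|^2}{\|\vec n\|^2}\sin^2(\|\vec n\|t)$, and substitute $\|\vec n\|^2=\tfrac14(\Delta^2+4|\omega^{\mathrm I}_{12}|^2)$ together with $\|\vec n\|t=\tfrac12(\lambda_2-\lambda_1)t$ (the sign being irrelevant since $\sin^2$ is even), obtaining
\[
|\alpha_2(t)|^2=\frac{4|\omega^{\mathrm I}_{12}|^2}{(\omega_1+\omega^{\mathrm I}_{11}-\omega_2-\omega^{\mathrm I}_{22})^2+4|\omega^{\mathrm I}_{12}|^2}\,\sin^2\frac{(\lambda_2-\lambda_1)t}{2},
\]
which is the assertion (with $\omega^{\mathrm I}_{12}$ real). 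As consistency checks I would verify that $\alpha_1(0)=1$, $\alpha_2(0)=0$ hold automatically and that $|\alpha_1(t)|^2+|\alpha_2(t)|^2=1$. There is no genuine obstacle beyond careful bookkeeping; the only real decisions are how to handle a possibly complex $\omega^{\mathrm I}_{12}$ — assume it real, as the statement implicitly does, or carry $|\omega^{\mathrm I}_{12}|^2$ everywhere — and flagging the apparent typo $\omega_{22}\eqdefA\omega_1+\omega^{\mathrm I}_{22}$, which must be $\omega_2+\omega^{\mathrm I}_{22}$ for the eigenvalue formula to be consistent.
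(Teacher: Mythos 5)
Your proposal is correct and follows essentially the same route as the paper: both reduce the problem to exponentiating the $2\times 2$ Hermitian matrix of $\tfrac{1}{\hbar}(H^0+H^{\mathrm I})$ in the eigenbasis of $H^0$ and read off the $(2,1)$ entry of $\eu^{-\iu M t}$, the paper doing the diagonalization via an explicit rotation angle $\theta$ (with $\sin^2 2\theta$ producing the prefactor) where you use the equivalent Pauli-vector exponential formula. You are also right on both side points: the paper's own proof takes $\omega^{\mathrm I}_{12}$ real (imposing $\omega^{\mathrm I}_{21}=\omega^{\mathrm I}_{12}$), and its displayed matrix for $H^0+H^{\mathrm I}$ confirms that $\omega_{22}\eqdefA\omega_1+\omega^{\mathrm I}_{22}$ in the theorem statement is a typo for $\omega_2+\omega^{\mathrm I}_{22}$.
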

\begin{proof}
The Hamiltonian can be written in the basis $\ket{\psi_{E_1}}, \ket{\psi_{E_2}}$ as
\begin{align}
    H^0 = \hbar \begin{pmatrix} \omega_1 & 0 \\ 0 & \omega_2 \end{pmatrix}\qquad \text{and} \qquad  H^{\mathrm{I}} = \hbar \begin{pmatrix} \omega_{11}^{\mathrm{I}} & \omega_{12}^{\mathrm{I}} \\ \omega_{12}^{\mathrm{I}} & \omega_{22}^{\mathrm{I}} \end{pmatrix}\,,
\end{align}
where the real values satisfy  $\omega_{21}^{\mathrm{I}}=\omega_{12}^{\mathrm{I}}$ to assure that $H^{\mathrm{I}}$ is Hermitian.
The time evolution of the state $ \ket{\psi_{E_1}}$ is described by
\begin{align}
     \ket{\psi_t}&=\ee^{-\iu \frac{(H^0+H^{\mathrm{I}})}{\hbar} t} \ket{\psi_{E_1}}
      =  \sum_{k=1}^2 \alpha_k(t) \ket{\psi_{E_k}}
    \\
    & \Downarrow \qquad \text{projecting on \, } \bra{\psi_{E_j}}
    \\
    \alpha_j(t)   &=  \bra{\psi_{E_j}}\ee^{-\iu \frac{(H^0+H^{\mathrm{I}})}{\hbar} t} \ket{\psi_{E_1}}\,.
\end{align}
As $H^0+H^{\mathrm{I}}$ is symmetric, the matrix of the orthonormal eigenvectors $\bf{v_1}$ , $\bf{v_2}$ can be described by
\begin{align}
    \begin{pmatrix} \bf{v_1} & \bf{v_2}\end{pmatrix}=R(\theta) =\begin{pmatrix}
        \cos \theta &  -\sin \theta
        \\
         \sin \theta &  \cos \theta
    \end{pmatrix}\,,
\end{align}
where $R(\theta)$ is a rotation matrix,  and
\begin{align}
\label{eq:theta-definition}
    \sin 2\theta &=\frac{2\omega_{12}^{\mathrm{I}}}{\sqrt{(\omega_{11}-\omega_{22})^2+4(\omega_{12}^{\mathrm{I}})^2}}\,,
    \\
    \cos 2\theta &=\frac{\omega_{11}-\omega_{22}}{\sqrt{(\omega_{11}-\omega_{22})^2+4(\omega_{12}^{\mathrm{I}})^2}}\,,
\end{align}
with eigenvalues
\begin{align}
\label{eq:eigenvalues-two-state}
    \lambda_{1,2} &= \frac{\omega_{11}+\omega_{22} \pm \sqrt{(\omega_{11}-\omega_{22})^2+4 (\omega_{12}^{\mathrm{I}})^2}}{2}\,.
\end{align}
Therefore,
\begin{align}
   H^0+H^{\mathrm{I}} &= \hbar \begin{pmatrix} \omega_{11} & \omega_{12}^{\mathrm{I}} \\ \omega_{12}^{\mathrm{I}} & \omega_{22} \end{pmatrix}
   =\begin{pmatrix}
        \cos \theta &  -\sin \theta
        \\
         \sin \theta &  \cos \theta
    \end{pmatrix} \begin{pmatrix}
      \hbar \lambda_{1} &  0
        \\
         0 &  \hbar\lambda_{2}
    \end{pmatrix} \begin{pmatrix}
        \cos \theta &  \sin \theta
        \\
         -\sin \theta &  \cos \theta
    \end{pmatrix}\, .
    \\
    & \Downarrow
    \\
   \ee^{-\iu \frac{(H^0+H^{\mathrm{I}})}{\hbar} t} &=\begin{pmatrix}
        \cos \theta &  -\sin \theta
        \\
         \sin \theta &  \cos \theta
    \end{pmatrix} \begin{pmatrix}
     \ee^{-\iu  \lambda_1 t} &  0
        \\
         0 &  \ee^{-\iu  \lambda_2 t}
    \end{pmatrix} \begin{pmatrix}
        \cos \theta &  \sin \theta
        \\
         -\sin \theta &  \cos \theta
    \end{pmatrix}
    \\
    &=\begin{pmatrix}
         \ee^{-\iu  \lambda_1 t}\cos^2 \theta  + \ee^{-\iu  \lambda_2 t}\sin^2 \theta & \,  \frac{\ee^{-\iu  \lambda_1 t} - \ee^{-\iu  \lambda_2 t}}{2} \sin 2 \theta
        \\
         \frac{\ee^{-\iu  \lambda_1 t} - \ee^{-\iu  \lambda_2 t}}{2} \sin 2 \theta & \,   \ee^{-\iu  \lambda_1 t}\sin^2 \theta  + \ee^{-\iu  \lambda_2 t}\cos^2 \theta
    \end{pmatrix}\,.
\end{align}

Thus, the coefficients are
\begin{align}
\label{eq:Fermi-coefficient-two-state}
     \begin{pmatrix}
    \alpha_1(t)
    \\
    \alpha_2(t)
    \end{pmatrix} &=  \ee^{-\iu \frac{(H^0+H^{\mathrm{I}})}{\hbar} t}   \begin{pmatrix}
    1
    \\
    0
    \end{pmatrix}=\begin{pmatrix}
         \cos^2 \theta \, \ee^{-\iu  \lambda_1 t}  + \sin^2 \theta \,  \ee^{-\iu  \lambda_2 t}
        \\
         \sin 2 \theta \,  \left(\frac{\ee^{-\iu  \lambda_1 t} - \ee^{-\iu  \lambda_2 t}}{2} \right)
    \end{pmatrix}
    \\
    & \Downarrow
    \\
    \begin{pmatrix}
    |\alpha_1(t)|^2
    \\
    |\alpha_2(t)|^2
    \end{pmatrix} &= \begin{pmatrix}
     1-\frac{1}{2} \sin^2 2\theta \left (1-  \cos (\lambda_2-\lambda_1) t\right)
    \\
  \frac{1}{2}  \sin^2 2 \theta \, \left (1-  \cos (\lambda_2-\lambda_1) t\right)
    \end{pmatrix}\,.
\end{align}
Noting that $1-  \cos (\lambda_2-\lambda_1) t=2\, \sin^2 \frac{(\lambda_2-\lambda_1) t}{2}$ we see that the probability of being in  state $\ket{\psi_{E_2}}$ at time $t$  is $|\alpha_2(t)|^2=\sin^2 2 \theta \, \sin^2 \frac{(\lambda_2-\lambda_1) t}{2}$.
\end{proof}

If $\omega_{1} \gg \omega_{11}$, $\omega_{2} \gg \omega_{22}$, and $|\omega_{1}- \omega_{2}| \gg \omega_{12}$, then $\lambda_{1,2}= \frac{\omega_{11}+\omega_{22} \pm \sqrt{(\omega_{11}-\omega_{22})^2+4 (\omega_{12}^{\mathrm{I}})^2}}{2}\approx \omega_{1,2}$, and the coefficient of transition becomes $|\alpha_2(t)|^2\approx \frac{4(\omega_{12}^{\mathrm{I}})^2  }{(\omega_{1}-\omega_{2})^2} \, \sin^2 \frac{(\omega_2-\omega_1) t}{2}$, which is the usual approximation made when using Fermi's golden rule.

Our derivation is  quite different from the ones found in the literature, and in particular we start from an exact calculation of the coefficient and when approximating there is no need to restrict to the case where the time interval is small.

Also note that it is straightforward to derive the final state if the initial state is in  a  superposition  $\ket{\psi_0}=\alpha_1^0\ket{\psi_{E_1}}+\alpha_2^0 \ket{\psi_{E_2}}$, with $1=|\alpha_1^0|^2+|\alpha_2^0|^2$. In this case,
\begin{align}
     \ket{\psi_t}&=\ee^{-\iu \frac{(H^0+H^{\mathrm{I}})}{\hbar} t} \ket{\psi_0}
      =  \sum_{k=1}^2 \alpha_k(t) \ket{\psi_{E_k}}
    \\
    & \Downarrow
    \\
   \begin{pmatrix}
    \alpha_1(t)
    \\
    \alpha_2(t)
    \end{pmatrix} &=\ee^{-\iu \frac{(H^0+H^{\mathrm{I}})}{\hbar} t}  \begin{pmatrix}
    \alpha_1^0
    \\
    \alpha_2^0
    \end{pmatrix}= \begin{pmatrix}
         \alpha_1^0 (\cos^2 \theta \, \ee^{-\iu  \lambda_1 t}  + \sin^2 \theta \,  \ee^{-\iu  \lambda_2 t})+\alpha_2^0 \sin 2 \theta \,  \left(\frac{\ee^{-\iu  \lambda_1 t} - \ee^{-\iu  \lambda_2 t}}{2} \right)
        \\
         \alpha_1^0 \sin 2 \theta \,  \left(\frac{\ee^{-\iu  \lambda_1 t} - \ee^{-\iu  \lambda_2 t}}{2} \right)+\alpha_2^0 \left (\ee^{-\iu  \lambda_1 t}\sin^2 \theta  + \ee^{-\iu  \lambda_2 t}\cos^2 \theta \right)
    \end{pmatrix}\,,
\end{align}
where Theorem~\ref{theorem:FermiCoefficientsExactTwoState} is applied to the case of $\alpha_1^0=1$ and $\alpha_2^0=0$.

For the general case of $N$ eigenstates, we can compute the transition from $\ket{\psi_{E_1}}$ to any state $\ket{\psi_{E_j}}$ at time $t$. First one must retrieve the eigenvalues $\lambda_i, i=1,\hdots, N$ and normalized eigenstates $v_i, i=1,\hdots, N$ of the symmetric $N\times N$  matrix $ H^0+H^{\mathrm{I}}$. One method requires finding one eigenvalue and eigenstate and then uses a Gram-Schmidt process to derive the other eigenstates. Then the following simple procedure recovers all the coefficients and their probability of transitions from $\ket{\psi_{E_1}}$ to any state $\ket{\psi_{E_j}}$
\begin{align}
\alpha_j & = \bra{\psi_{E_j}}\ee^{-\iu \frac{(H^0+H^{\mathrm{I}})}{\hbar} t}\ket{\psi_{E_1}}
=\sum_{i=1}^N \ee^{-\iu \lambda_i t} v_{ij} v_{i1}
\\
&\Downarrow
\\
|\alpha_j|^2 &=\sum_{i=1}^N \sum_{k=1}^N  \ee^{-\iu \lambda_i t} v_{ij} v_{i1}\ee^{\iu \lambda_k t} v_{kj} v_{k1}
\\
&=\sum_{i=1}^N \left [  v^2_{ij} v^2_{i1} + 2 \sum_{k>i}^N  v_{ij} v_{i1} v_{kj} v_{k1} \cos (\lambda_i-\lambda_k) t \right]\,.
\end{align}
This is an exact evaluation of the transition coefficients. Approximations to obtain Fermi's golden rule can then be made similarly to the approximations we made above for the case of two states.

\begin{theorem}[Entropy Oscillations]
\label{theorem:oscillations}
Consider a  particle in  an eigenstate $\ket{\psi_{E_1}}$ of  a Hamiltonian $H^0$ that has only two eigenstates, $\ket{\psi_{E_1}} $ and $  \ket{\psi_{E_2}}$, with eigenvalues $E_1$ and $E_2$ respectively. Let this particle interact with an external field (such as a Gauge Field), requiring an additional Hamiltonian term $H^{\mathrm{I}}$ to describe the evolution of this system.

Then, for a time interval such that  $\deltau t > \frac{\piu }{|\lambda_1-\lambda_2|}$, where $\lambda_{1,2}$ are the two eigenvalues of $\frac{H^0+H^{\mathrm{I}}}{\hbar}$, the \QCurve  $ \big(\psi_1(\mathbf{r}), U(t)=\ee^{-\iu \frac{(H^0+H^{\mathrm{I}})}{\hbar} t}, \deltau t \big)$ is in $\setOscillating$.
\end{theorem}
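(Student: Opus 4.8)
The plan is to reduce the entropy $\entropyS(t)$ of this \QCurve to a strictly concave function of the single real variable
$u(t)\eqdefA 1-\cos\big((\lambda_2-\lambda_1)t\big)=2\sin^2\tfrac{(\lambda_2-\lambda_1)t}{2}$, and then read off the oscillation from the elementary behavior of $u(t)$. By Theorem~\ref{theorem:FermiCoefficientsExactTwoState}, $\ket{\psi_t}=\alpha_1(t)\ket{\psi_{E_1}}+\alpha_2(t)\ket{\psi_{E_2}}$ with the $\alpha_i(t)$ given explicitly there. Writing $\psi(\mathbf r,t)=\alpha_1(t)\psi_{E_1}(\mathbf r)+\alpha_2(t)\psi_{E_2}(\mathbf r)$ and expanding $\rho_{\mathrm r}(\mathbf r,t)=|\psi(\mathbf r,t)|^2$, a short computation from those formulas shows that $|\alpha_1(t)|^2$, $|\alpha_2(t)|^2$, and $\operatorname{Re}\big(\alpha_1^{*}(t)\alpha_2(t)\big)$ are each affine in $u(t)$; using that the bound eigenstates may be chosen with a common phase (so $\psi_{E_1}^{*}\psi_{E_2}$ is real, which is the situation relevant to the applications), only that real part enters, and hence
\begin{align}
\rho_{\mathrm r}(\mathbf r,t)=a(\mathbf r)+u(t)\,d_{\mathrm r}(\mathbf r),\qquad
\rho_{\mathrm k}(\mathbf k,t)=\tilde a(\mathbf k)+u(t)\,d_{\mathrm k}(\mathbf k),
\end{align}
with $a(\mathbf r)=|\psi_{E_1}(\mathbf r)|^2$ and likewise in momentum space. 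The slope densities satisfy $\int d_{\mathrm r}\,\diff^3\mathbf r=\int d_{\mathrm k}\,\diff^3\mathbf k=0$, since the normalization of $\rho_{\mathrm r},\rho_{\mathrm k}$ is $t$-independent and $\braket{\psi_{E_1}|\psi_{E_2}}=0$ (distinct eigenvalues of a Hermitian operator), and they are not identically zero because $\psi_{E_1}$ and $\psi_{E_2}$ are linearly independent.

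Consequently $\entropyS(t)=\entropyS_{\mathrm r}(t)+\entropyS_{\mathrm k}(t)-3\ln\hbar=g(u(t))$, and differentiating under the integral sign,
\begin{align}
g''(u)=-\kernBeforeIntegral\int\frac{d_{\mathrm r}(\mathbf r)^2}{\rho_{\mathrm r}(\mathbf r,u)}\,\diff^3\mathbf r
-\kernBeforeIntegral\int\frac{d_{\mathrm k}(\mathbf k)^2}{\rho_{\mathrm k}(\mathbf k,u)}\,\diff^3\mathbf k<0,
\end{align}
so $g$ is strictly concave on $[0,2]$; equivalently $g'$ is strictly decreasing there, so $g'(0^{+})>g'(2^{-})$. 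Meanwhile $u(t)$ is strictly increasing on $[0,\tfrac T2]$ from $0$ to $2$ and strictly decreasing on $[\tfrac T2,T]$ from $2$ to $0$, where $T\eqdefA\tfrac{2\piu}{|\lambda_1-\lambda_2|}$, so that $\tfrac T2=\tfrac{\piu}{|\lambda_1-\lambda_2|}$ is exactly the stated threshold $\deltau t$.

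Finally I combine the two facts through $\deriv{\entropyS}{t}=g'(u(t))\,u'(t)$, where $u'$ is positive on $(0,\tfrac T2)$ and negative on $(\tfrac T2,T)$. If $g'$ changes sign at some $u^{\star}\in(0,2)$, then already inside $(0,\tfrac T2)$ the entropy is strictly increasing before the time where $u=u^{\star}$ and strictly decreasing after it. If instead $g'$ keeps one sign on $(0,2)$, then $\entropyS$ is strictly monotone of that sign on $(0,\tfrac T2)$ and strictly monotone of the opposite sign on $(\tfrac T2,\tfrac T2+\varepsilon)$, and since $\deltau t>\tfrac T2$ both subintervals lie in $[0,\deltau t]$. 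In every case $[0,\deltau t]$ contains a subinterval of strict increase and a subinterval of strict decrease of $\entropyS$, so the \QCurve lies in $\setOscillating$. The main obstacle is the second step: confirming that $\rho_{\mathrm r},\rho_{\mathrm k}$ depend on $t$ only through $u(t)$ (which requires the relative phase of the two eigenstates to be constant, automatic for the real bound eigenstates of interest but needing care in general), and that the slope densities $d_{\mathrm r},d_{\mathrm k}$ are genuinely nonzero — it is precisely this nonvanishing, through the \emph{strict} concavity of $g$, that excludes the degenerate alternatives of constant entropy ($\setConstant$) or purely monotone entropy ($\setIncreasing$) and forces membership in $\setOscillating$. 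One also uses implicitly that $\omega_{12}^{\mathrm I}\neq0$; otherwise there is no transition and $\lambda_1=\lambda_2$, making the hypothesis vacuous.
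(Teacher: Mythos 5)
Your reduction of $\entropyS(t)$ to a strictly concave function of $u(t)=2\sin^2\frac{(\lambda_2-\lambda_1)t}{2}$ is an attractive idea, and it is the only argument here that would genuinely explain why the half-period threshold $\deltau t>\piu/|\lambda_1-\lambda_2|$ suffices (since $u$ reverses direction exactly there). But the step everything rests on --- that \emph{both} densities depend on $t$ only through $u(t)$ --- fails for the momentum-space density, and the parenthetical fix you offer does not repair it. The cross term in $\rho_{\mathrm k}$ is $2\Re\big(\alpha_1^*(t)\alpha_2(t)\,\phi_1^*(\mathbf k)\phi_2(\mathbf k)\big)=2\Re(\alpha_1^*\alpha_2)\Re(\phi_1^*\phi_2)-2\Im(\alpha_1^*\alpha_2)\Im(\phi_1^*\phi_2)$, and a direct computation from Theorem~\ref{theorem:FermiCoefficientsExactTwoState} gives $\Im\big(\alpha_1^*(t)\alpha_2(t)\big)=-\tfrac{1}{2}\sin 2\theta\,\sin\big((\lambda_1-\lambda_2)t\big)$, which is not a function of $u$. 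Choosing the eigenstates of $H^0$ real kills $\Im(\psi_{E_1}^*\psi_{E_2})$ in position space, but the Fourier transform of a real function is not real, so $\Im(\phi_1^*\phi_2)$ is generically nonzero; it is odd in $\mathbf k$, which does not help because the entropy is a nonlinear functional of $\rho_{\mathrm k}$. Worse, in the very case motivating the theorem --- $\psi_{E_1}$ parity-even and $\psi_{E_2}$ parity-odd, as for the $1s$ and $2p$ hydrogen states --- $\phi_1^*\phi_2$ is \emph{purely imaginary}, so the entire momentum cross term is carried by $\sin\big((\lambda_1-\lambda_2)t\big)$ and your $d_{\mathrm k}$ misses it completely. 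This is precisely why the paper's own expansion of the densities retains the terms $A_3(\mathbf r)\sin((\lambda_2-\lambda_1)t)$ and $B_3(\mathbf k)\sin((\lambda_2-\lambda_1)t)$ alongside the $\sin^2$ terms: once $B_3\not\equiv 0$ your function $g(u)$ does not exist, and the concavity argument collapses with it.

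The paper instead argues from periodicity and nonconstancy of the entropy over a period, so if you want to salvage your write-up you should either add $\Im(\phi_1^*\phi_2)\equiv 0$ as an explicit hypothesis (a real restriction, as the hydrogen example shows) or fall back on that route --- noting that the common period of $\sin((\lambda_2-\lambda_1)t)$ and $\sin^2\frac{(\lambda_2-\lambda_1)t}{2}$ is $2\piu/|\lambda_1-\lambda_2|$, so reaching the stated threshold $\piu/|\lambda_1-\lambda_2|$ still requires an extra argument. Separately, your justification that the slope densities are not identically zero (``because $\psi_{E_1}$ and $\psi_{E_2}$ are linearly independent'') is too quick: $d_{\mathrm r}\equiv 0$ can occur, e.g.\ when $\cos 2\theta=0$ and $|\psi_{E_1}|=|\psi_{E_2}|$ pointwise. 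The paper's corresponding assertion that ``the entropy is not constant'' is equally unargued, so this last point is a shared weakness rather than one specific to your proposal, but the momentum-space cross term is a genuine gap in yours alone.
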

\begin{proof}
From~\eqref{eq:Fermi-coefficient-two-state}, the evolution of the state $\ket{\psi_{E_1}}$ can be described as
\begin{align}
    \ket{\psi_t}&=   \alpha_1(t) \ket{\psi_{E_1}} + \alpha_2(t) \ket{\psi_{E_2}}\,,
\end{align}
where $\alpha_1(t)=\cos^2 \theta \, \ee^{-\iu  \lambda_1 t}  + \sin^2 \theta \,  \ee^{-\iu  \lambda_2 t}$, $\alpha_2(t)=\sin 2 \theta \,  \left(\frac{\ee^{-\iu  \lambda_1 t} - \ee^{-\iu  \lambda_2 t}}{2} \right)$, the parameter $\theta$ is defined by~\eqref{eq:theta-definition},  the eigenvalues $\lambda_1,\lambda_2$ by~\eqref{eq:eigenvalues-two-state}, and all are functions of the six energy parameters defining $H^0$ and $H^1$.  We can then compute the position and momentum probability amplitudes  of the state,
$\psi(\mathbf{r},t)= \bra{\mathbf{r}}\ket{\psi_t}$ and $\phi(\mathbf{k},t)= \bra{\mathbf{k}}\ket{\psi_t}$, and then the resulting probability density functions are
\begin{align}
     \rho_{\mathrm{r}}(\mathbf{r},t)  &= |\alpha_1(t)\psi_1(\mathbf{r})+ \alpha_2(t) \psi_2(\mathbf{r})|^2
     \\
     &=  A_1(\mathbf{r})+A_2(\mathbf{r}) \sin^2  \frac{(\lambda_2-\lambda_1) \, t}{2} +A_3(\mathbf{r}) \sin  (\lambda_2-\lambda_1) \, t\,,
     \\
     \rho_{\mathrm{k}}(\mathbf{k},t) &= |\alpha_1(t)\phi_1(\mathbf{k})+ \alpha_2(t) \phi_2(\mathbf{k})|^2
     \\
     &=B_1(\mathbf{k})+B_2(\mathbf{k}) \sin^2  \frac{(\lambda_2-\lambda_1) \, t}{2}+ B_3(\mathbf{k}) \sin  (\lambda_2-\lambda_1)\, t\,,
\end{align}
where
\begin{align}
    A_1(\mathbf{r}) &=  |\psi_1(\mathbf{r})|^2 \,,
    \\
    A_2(\mathbf{r})&=\sin^2 2\theta\, \left ( |\psi_2(\mathbf{r})|^2+2\cos2 \theta \, \Re(\psi_1^*(\mathbf{r})\psi_2(\mathbf{r})) -|\psi_1(\mathbf{r})|^2\right)\,,
    \\
    A_3(\mathbf{r})&= \sin 2 \theta \, \Im (\psi_1^*(\mathbf{r})\psi_2(\mathbf{r}))\,,
    \\
     B_1(\mathbf{k}) &=  |\phi_1(\mathbf{k})|^2\,,
    \\
    B_2(\mathbf{k})&=\sin^2 2\theta\, \left ( |\phi_2(\mathbf{k})|^2+2\cos2 \theta \, \Re(\phi_1^*(\mathbf{k})\phi_2(\mathbf{k})) -|\psi_1(\mathbf{r})|^2\right)\,,
    \\
   B_3(\mathbf{k})&= \sin 2 \theta \, \Im (\phi_1^*(\mathbf{k})\phi_2(\mathbf{k}))\,.
\end{align}
The entropy will then vary in time according to the two densities $\rho_{\mathrm{r}}(\mathbf{r},t)$ and $\rho_{\mathrm{k}}(\mathbf{k},t)$.  Clearly, the entropy will return to the same value  at every period $T=\frac{\pi}{|\lambda_2-\lambda_1|}$, which is  a period where $  \sin  (\lambda_2-\lambda_1) \, t$ and  $\sin^2  \frac{(\lambda_2-\lambda_1) \, t}{2}$ simultaneously return to the same values. As the entropy is not constant, it must oscillate during this period. Therefore the \QCurve $ \big(\psi_1(\mathbf{r}), U(t)=\ee^{-\iu (H^0+H^{\mathrm{I}}) t}, \deltau t \big)$ is in $\setOscillating$.
\end{proof}
The theorem elucidates  that for the simple case of two states  the probability of transitions induces an entropy oscillation, without requiring   any  approximation.

We also have shown before that the method used to derive the coefficient $\alpha_2(t)$ of transitions to the second state can be expanded to multiple states. However, for many states,  the sum over all frequencies  $\lambda_k-\lambda_i$ may cancel the oscillations unless some frequencies dominate the sum, such as the transition to the ground state being much larger than other transitions. Thus, proving the entropy oscillation in the presence of multiple transitions may require the use of Fermi's golden rule approximations.

\subsection{Time Reflection}
\label{subsec:time-reflection}
The notation of this section is in the QFT setting, but it can be easily adapted to the quantum \mechanics setting.
Consider a neutral particle following a \QCurve in \setOscillating, such  that for a period of time $[0,t_c]$ it is in \setIncreasing and then it enters the decreasing interval at the critical time $t_c$. We consider the Hamiltonian to be time independent, with energy conservation over time translation, and investigate the discrete symmetries C and P and propose the time reflection to be augmented with time translation, say by $\deltau t$. We refer to this as time reflection,  because as we vary $t$ from $0$ to $\deltau t$, $-t+\deltau t$ varies as a  reflection from $\deltau t$ to $0$.
Time Reflection is described by the classical mapping $ t \mapsto -t + \deltau t$,  which is obviously  an involution.

 We  define the Time Reflection  quantum field
\begin{align}
\label{eq:time-reflection-T}
\Psi^{\mathrm{T}_{\deltau}}(\mathbf{r},-t+\deltau t)\eqdefA \mathscr{T} \Psi(\mathbf{r},t)= T \Psi^*(\mathbf{r},t)\,.
\end{align}
Time Reflection is an antilinear  and an antiunitary  transformation. It is an involution on the probability density $\rho(\mathbf{r},t)=\Psi^{\dagger}(\mathbf{r},t)\Psi(\mathbf{r},t)$.
Note that  in contrast to the case of time reversal, here  we have  $\Psi^{\mathrm{T}_{\deltau}}(\mathbf{r},t')=\mathscr{T} \Psi(\mathbf{r},-t'+\deltau t)$ and the entropies associated with $\Psi(\mathbf{r},t)$ and  $\Psi^{\mathrm{T}_{\deltau}}(\mathbf{r},t)$  are generally not equal. This means that an instantaneous time reflection transformation will cause entropy changes.

We next consider a composition of the  three transformation, Charge Conjugation, Parity Change, and Time Reflection.
\begin{definition}[$\Psi^{\CPT_{\deltau}}$]
  \label{def:cpt-deltau-quantum-field}
  We define the $\mathrm{CPT}_\deltau$ quantum field as
\begin{align}
\label{eq:CPT-delta-t}
    \Psi^{\mathrm{CPT}_{\deltau}}(-\mathbf{r},-t+{\deltau t})& \eqdefA \eta_{\mathrm{CPT}_{\deltau}}\,  CPT\,  \overline{\Psi}^{\tran}(\mathbf{r},t)=\eta_{\mathrm{CPT}_{\deltau}} \gamma^5 \, (\Psi^{\dagger})^{\tran}(\mathbf{r}, t)\,,
\end{align}
where $\eta_{\mathrm{CPT}_{\deltau}}$ is the product of the phases of each operation (and can be assumed to be 1),
 and $\gamma^5\eqdefA\iu \gamma^0\gamma^1\gamma^2\gamma^3$.
\end{definition}
The QFT invariance under CPT and energy conservation (to guarantee time translation invariance)  will guarantee $\CPT_{\deltau}$ invariance for the same QFT, that is,  if $\Psi(\mathbf{r},t)$ is a solution to a QFT then $\Psi^{\mathrm{CPT}_{\deltau }}(-\mathbf{r},-t+{\deltau t})$ is  also a solution to that QFT. CPT$_{\deltau}$  is an involution since
\begin{align}
   (\Psi^{\mathrm{CPT}_{\deltau}})^{\mathrm{CPT}_{\deltau}} (-(-\mathbf{r}),-(-t+{\deltau t})+{\deltau t}) &= \eta_{\mathrm{CPT}}\, \gamma^5 \left ( \left (\eta_{\mathrm{CPT}}\, \gamma^5 (\Psi^{\dagger})^{\tran})(\mathbf{r},t) \right) ^{\dagger}\right)^{\tran}
   \\
   &\Downarrow
   \\
  (\Psi^{\mathrm{CPT}_{\deltau}})^{\mathrm{CPT}_{\deltau}}(\mathbf{r},t) &=\eta_{\mathrm{CPT}_{\deltau}}\, \gamma^5 \eta_{\mathrm{CPT}_{\deltau}}^* \gamma^{5*} \, \Psi(\mathbf{r},t)
=\Psi(\mathbf{r},t)\,.
\label{eq:CPTdeltaInvolution}
\end{align}
\begin{definition}[$Q_{\CPT_{\deltau}}$]
  \label{def:cpt-transformation}
  We define $Q_{\CPT_{\deltau}}$, an instantaneous \QCurve transformation, to be $Q_{\CPT_{\deltau}}:\big(\psi(\mathbf{r},t_0), U(t), [t_0,t_1] \big)  \mapsto  \big(\psi^{\CPT_{\deltau}}(-\mathbf{r},t_0), U(t),  [t_0,t_1]  \big) $ with $\deltau t=t_0+t_1$.
\end{definition}
Note that $\psi^{\CPT_{\deltau}}(-\mathbf{r},t_0)\eqdefA \eta_{\mathrm{CPT}_{\deltau}} \gamma^5 \, (\Psi^{\dagger})^{\tran}(\mathbf{r}, -t_0+\deltau t)$  allows time reflections (time translations and time reversal), so that $\psi^{\CPT_{\deltau}}(-\mathbf{r},t_0)=\eta_{\CPT_{\deltau}}\, \gamma^5\, (\Psi^{\dagger})^{\tran}(\mathbf{r},t_1)$ for $\deltau t=t_1+t_0$, and so $t_1=-t_0+\deltau t$.

\begin{figure}
 \centering
 \includegraphics[scale=0.55]{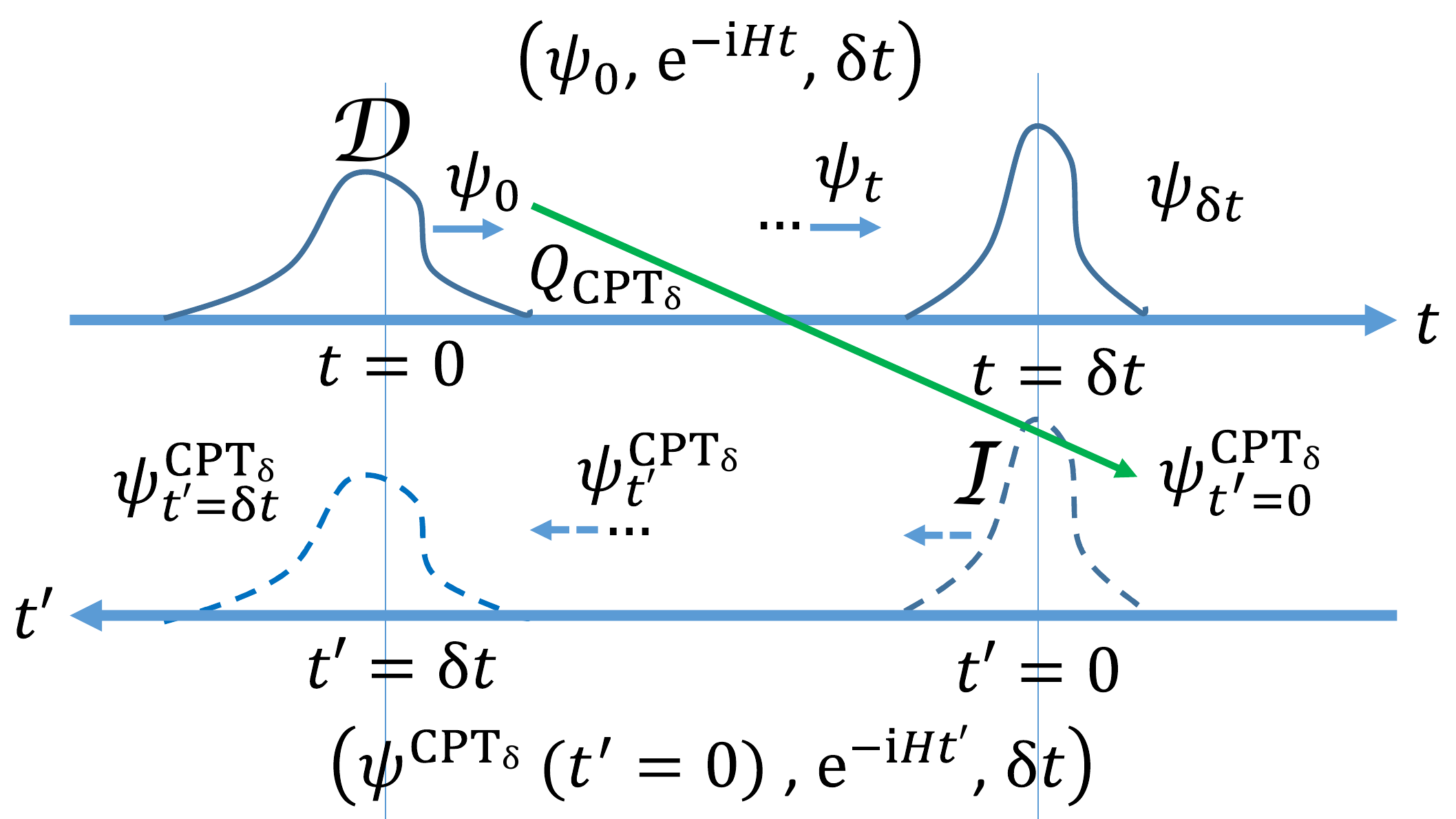}
 \caption{The scenario of Theorem~\ref{thm:CPT-QFT}. (i) Top axis $t$: a depiction of \QCurve $e_0=\big (\psi_0(\mathbf{r}), \eu^{-\iu H t}, [0,\deltau t] \big)$. (ii) Bottom axis $t' =\deltau t-t$: The antiparticle \QCurve  is created as  $e_1=Q_{{\CPT}_{\deltau}}(e_0)=\big (
     \psi^{{\CPT}_{\deltau}}(-\mathbf{r},t'=0)
     , \eu^{-\iu H t'}, [0,\deltau t] \big)$. The axis $t'$ shows the evolution as going forward in time $t'$.  Theorem~\ref{thm:CPT-QFT} can be visualized  by the evolution of state $\psi^{{\CPT}_{\deltau}}(-\mathbf{r},t')=\eta_{\mathrm{CPT}_{\deltau}} \gamma^5 \, (\Psi^{\dagger})^{\tran}(\mathbf{r}, \deltau t - t')$ as mirroring the evolution of state $\psi(\mathbf{r},t)$ with $t=t'$ evolving from $0$ to $\deltau t$. If  $e_0 \in \setDecreasing $, then  $e_1 \in \setIncreasing$.
}
 \label{fig:QCPT}
\end{figure}

\begin{theorem}
 \label{thm:CPT-QFT}
 Consider a $\mathrm{CPT}$ invariant quantum field theory (QFT) with energy conservation, such as Standard Model or Wightman axiomatic QFT. Let $e_0=\left (\psi(\mathbf{r},t_0), U(t), [t_0,t_1]\right)$ be a  \QCurve  solution to  such QFT.  Then, $e_{1} = Q_{\CPT_{\deltau}}(e_{0})$  is (i) a solution to such QFT, (ii) if $e_{1}$ is  in
 \setConstant,  \setDecreasing , \setOscillating, \setIncreasing then $e_{2}$ is respectively in \setConstant,  \setIncreasing , \setOscillating, \setDecreasing, making \setConstant,  \setIncreasing , \setOscillating, \setDecreasing reflections of \setConstant,  \setDecreasing , \setOscillating, \setIncreasing, respectively.
\end{theorem}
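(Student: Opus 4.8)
The plan is to establish the two assertions in turn, first normalizing — using time‑translation invariance of the theory — to $t_0=0$, $t_1=\deltau t$, so that $e_0=\big(\psi(\mathbf{r},0),U(t),[0,\deltau t]\big)$ evolves as the field $\psi(\mathbf{r},t)$ and, by Definition~\ref{def:cpt-transformation}, $e_1=\big(\psi^{\CPT_{\deltau}}(-\mathbf{r},0),U(t'),[0,\deltau t]\big)$. The idea for part~(i) is to exhibit $e_1$ as a composition of two solution‑preserving operations; the idea for part~(ii) is to identify, time‑slice by time‑slice, the state traversed by $e_1$ with the pointwise $\CPT$‑image of a state traversed by $e_0$, and then to read the entropies off Lemma~\ref{lemma:Entropy-invariants}.

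For part~(i): because the \QFT\ is $\CPT$‑invariant, the pointwise transform $\Psi\mapsto CPT\,\overline{\Psi}^{\tran}$ of Definition~\ref{def:cpt-states} carries solutions to solutions (with the time coordinate reflected), and because energy is conserved the Hamiltonian is time‑independent, so the time translation $\Xi(\mathbf{r},t)\mapsto\Xi(\mathbf{r},t+\deltau t)$ also carries solutions to solutions. The field $\Psi^{\CPT_{\deltau}}$ characterized by \eqref{eq:CPT-delta-t} is precisely this composition, so $t'\mapsto\psi^{\CPT_{\deltau}}(-\mathbf{r},t')$ is again a solution of the same \QFT; and since $\CPT$ is antiunitary while the translation is unitary, the generator of the new evolution is once more $H$, i.e.\ the evolution operator is $U(t')=\eu^{-\iu H t'}$. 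Hence $e_1$ is a genuine \QCurve\ solution of the \QFT, and \eqref{eq:CPTdeltaInvolution} serves as a consistency check since $Q_{\CPT_{\deltau}}$ is an involution.

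For part~(ii): using part~(i), the state of $e_1$ at internal time $\deltau t-\tau$ has amplitude $\psi^{\CPT_{\deltau}}(-\mathbf{r},\deltau t-\tau)$, and putting $-t+\deltau t=\deltau t-\tau$, i.e.\ $t=\tau$, into \eqref{eq:CPT-delta-t} gives $\psi^{\CPT_{\deltau}}(-\mathbf{r},\deltau t-\tau)=\eta_{\CPT_{\deltau}}\,\gamma^5\,(\Psi^{\dagger})^{\tran}(\mathbf{r},\tau)$, which is exactly the pointwise $\CPT$‑transform of the state of $e_0$ at time $\tau$ (in the sense of Definition~\ref{def:cpt-states}, up to the phase $\eta_{\CPT_{\deltau}}$, which does not affect $|\cdot|^2$). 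Lemma~\ref{lemma:Entropy-invariants} says that both the position and the spatial‑frequency parts of the entropy are invariant under the $\CPT$‑transform, so $\entropyS\big(\ket{\psi^{(1)}_{\deltau t-\tau}}\big)=\entropyS\big(\ket{\psi^{(0)}_{\tau}}\big)$ for every $\tau\in[0,\deltau t]$ — which is precisely the condition that $e_1$ be a reflection of $e_0$ in the sense of the \QCurve reflection definition. By the elementary remarks recorded there, a reflection turns an increasing‑entropy \QCurve\ into a decreasing one and vice versa, while fixing the constant and the oscillating classes; this yields the stated block correspondence between $e_0$ and $e_1$, and, since $Q_{\CPT_{\deltau}}$ — within a $\CPT$‑invariant \QFT, by part~(i) — produces such a reflection for every \QCurve\ and is an involution, it exhibits $\setConstant,\setIncreasing,\setOscillating,\setDecreasing$ as the set‑reflections (Definition~\ref{def:ReflectionSets}) of $\setConstant,\setDecreasing,\setOscillating,\setIncreasing$, respectively.

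I expect the main obstacle to be the time‑argument bookkeeping in part~(ii): $Q_{\CPT_{\deltau}}$ superposes a parity flip, a time reversal and a translation by $\deltau t=t_0+t_1$, and obtaining a genuine \emph{reflection} rather than a mere relabeling of the same evolution hinges on tracking the sign in $-t+\deltau t$ and on reconciling $\deltau t=t_0+t_1$ with the interval length $t_1-t_0$; normalizing to $t_0=0$ at the outset is what makes this manageable. The other non‑routine point is part~(i) — verifying that the $\CPT$‑transform maps the solution space of the \QFT\ onto itself with the evolution still generated by $H$ — which is exactly where the $\CPT$‑invariance and energy‑conservation hypotheses are spent (invoking, for full rigor, the $\CPT$ theorem as recalled in Appendix~\ref{sec:CPT}); once that is granted, the entropy equality is immediate from Lemma~\ref{lemma:Entropy-invariants}.
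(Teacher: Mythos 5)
Your proposal is correct and follows essentially the same route as the paper's proof: part (i) via the composition of the $\CPT$ transform (licensed by $\CPT$ invariance) with a time translation (licensed by energy conservation), and part (ii) via the substitution $t'=\deltau t-t$ identifying each state of $e_1$ with the pointwise $\CPT$-image of the corresponding state of $e_0$, so that Lemma~\ref{lemma:Entropy-invariants} gives the time-reversed entropy profile and hence the stated block correspondence. Your version is somewhat more explicit than the paper's in normalizing $t_0=0$ and in tying the conclusion formally to the \QCurve-reflection and set-reflection definitions, but these are presentational refinements rather than a different argument.
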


\begin{proof}
Consider a $t$-reflected time parameter
$t'=t_1 - (t - t_0)=\deltau t - t$, with $\deltau t=t_1+t_0$.
The \QCurve~$e_1$ describes  the evolution  $\psi^{\CPT_{\deltau}}(-\mathbf{r},t')$ (see  \eqref{eq:CPT-delta-t})  during  the period $[t_0,t_1]$, where $\deltau t= t_1+t_0$, so that $\psi^{\CPT_{\deltau}}(-\mathbf{r},t'=t_0)$ at $t'=\deltau t - t_1$.

To prove (i) we just use the assumption that  $e_0$ is a solution to a QFT that is $\mathrm{CPT}$ invariant and due to the energy conservation is alto time translation invariant and $e_1$ is the result of such transformations.

To prove (ii) we  apply $t'=\deltau t - t$ to $\psi^{\CPT_{\deltau}}(-\mathbf{r},\deltau t-t)$
 and we get
\begin{align}
   \psi^{\CPT_{\deltau}}(-\mathbf{r},t')& =CPT (\Psi^{\dagger})^{\tran}(\mathbf{r},t_1-(t'-t_0))\,.
     \label{eq:time-reversal-evolution}
\end{align}
In particular, for $t'=t_1 $,  $\psi^{\CPT_{\deltau}}(\mathbf{r}, t_1) =U(t_1-t_0)\psi^{\CPT_{\deltau}}(\mathbf{r}, t_0)$, and  one can reach the  transformed initial state of $e_0$ by starting with the transformed final state of $e_0$ and evolving it through the QFT for the time interval $t_1-t_0$. The evolution of $\psi(\mathbf{r},t)$ as $t$ evolves from $t_0$ to $t_1$, by Lemma~\ref{lemma:Entropy-invariants}, has the same entropies as $\psi^{\CPT_{\deltau}}(-\mathbf{r}, t_1-(t-t_0))$.  Then,  $e_1$  produces the time evolution states $\psi^{\CPT}(-\mathbf{r}, t_1-(t-t_0))$, which traverse the same path  as $\psi^{\CPT}(-\mathbf{r}, t)$, but in the opposite time directions.

For a \QCurve in \setIncreasing (or in \setDecreasing) the  $\entropyS_t$ increasing (or decreasing) from $t_0$ to $t_1$. By reversing time and starting from the end $\entropyS_{t_1}$ to its beginning $\entropyS_{t_0}$, $\entropyS_t$ will then be decreasing (or increasing), and thus in \setDecreasing (or in \setIncreasing).

For a \QCurve in \setConstant the entropy is constant from $t_0$ to $t_1$. By reversing the time from $t_1$ to $t_0$, the entropy will also  be constant, thus in \setConstant.

For a \QCurve in \setOscillating the entropy is oscillating from $t_0$ to $t_1$. By reversing the time from $t_1$ to $t_0$, the entropy will also  be oscillating, thus in \setOscillating.

Time  reversal for the period $[t_0,t_1]$ starting from $t_1$ and ending at $t_0$ is equivalent to  time reflection by an amount $\deltau t= t_1+t_0$. Thus, we conclude that \setConstant,  \setIncreasing , \setOscillating, \setDecreasing are reflections of \setConstant,  \setDecreasing , \setOscillating, \setIncreasing, respectively.
 \end{proof}

A visualization of the ideas of this theorem is given in Figure~\ref{fig:QCPT}.

\begin{theorem}
    \label{thm:bijection}
  Let the four blocks for the partition of $\setAll$ be $\setAll_{i}$, $i = 1, 2, 3, 4$. Let $\setAll_{k}$ be the reflection of  $\setAll_{j}$.
Then  the restriction of $Q_{\CPT}$ to $\setAll_{j}$,
  $Q_{\CPT}\big|_{\setAll_{j}}$, is a bijection between $\setAll_{j}$ and $\setAll_{k}$.
\end{theorem}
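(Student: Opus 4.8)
The plan is to show that $Q_{\CPT}\big|_{\setAll_{j}}$ is well-defined into $\setAll_{k}$, injective, and surjective, relying almost entirely on Theorem~\ref{thm:CPT-QFT} and the involution properties established for $\Psi^{\mathrm{CPT}_{\deltau}}$ and for the \QCurve reflection. First I would record the well-definedness: by Theorem~\ref{thm:CPT-QFT}(i), if $e_{0}\in\setAll_{j}$ is a \QCurve solution to the QFT, then $Q_{\CPT_{\deltau}}(e_{0})$ is again a solution, and by Theorem~\ref{thm:CPT-QFT}(ii), since $\setAll_{k}$ is by hypothesis the reflection of $\setAll_{j}$ (matching the pairing $\setConstant\leftrightarrow\setConstant$, $\setIncreasing\leftrightarrow\setDecreasing$, $\setOscillating\leftrightarrow\setOscillating$), the image lands in $\setAll_{k}$. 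So the map is a well-defined function $\setAll_{j}\to\setAll_{k}$.

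Next, injectivity and surjectivity both follow from the fact that $\mathrm{CPT}_{\deltau}$ is an involution on quantum fields, which was verified in \eqref{eq:CPTdeltaInvolution}: $(\Psi^{\mathrm{CPT}_{\deltau}})^{\mathrm{CPT}_{\deltau}}(\mathbf{r},t)=\Psi(\mathbf{r},t)$. I would lift this to the level of \QCurves: applying $Q_{\CPT_{\deltau}}$ twice, one must check that the time-translation parameter $\deltau t$ is consistent so that the second application undoes the first. Starting from $e_{0}=\big(\psi(\mathbf{r},t_{0}),U(t),[t_{0},t_{1}]\big)$, the first application produces $e_{1}=\big(\psi^{\CPT_{\deltau}}(-\mathbf{r},t_{0}),U(t),[t_{0},t_{1}]\big)$ with $\deltau t=t_{0}+t_{1}$; the second application uses the \emph{same} interval $[t_{0},t_{1}]$ and hence the same $\deltau t$, and by \eqref{eq:CPTdeltaInvolution} together with the fact that the spatial reflection $\mathbf{r}\mapsto-\mathbf{r}$ and the time reflection $t\mapsto-t+\deltau t$ are each their own inverse, we get $Q_{\CPT_{\deltau}}\circ Q_{\CPT_{\deltau}}=\mathrm{id}$ on \QCurves. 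Thus $Q_{\CPT}\big|_{\setAll_{j}}$ has a two-sided inverse, namely $Q_{\CPT}\big|_{\setAll_{k}}$ (which is well-defined into $\setAll_{j}$ because the reflection relation is symmetric, Definition~\ref{def:ReflectionSets}, and reflection of \QCurves is an involution, as noted after the \QCurve Reflection definition). A map with a two-sided inverse is a bijection.

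The main obstacle I anticipate is purely bookkeeping around the time parameter: making sure that $\deltau t=t_{0}+t_{1}$ is preserved under the second application of $Q_{\CPT_{\deltau}}$ and that the reflected evolution interval in Definition~\ref{def:cpt-transformation} is exactly $[t_{0},t_{1}]$ again (since $t_{1}=-t_{0}+\deltau t$ as noted after that definition, the interval is genuinely invariant), so that the composition is literally the identity rather than merely conjugate to it. Once that is pinned down, the statement is immediate. I would close by remarking that this theorem is essentially a restatement of Theorem~\ref{thm:CPT-QFT} in the language of set maps: part (i) gives well-definedness, part (ii) with the symmetric reflection relation gives that both $Q_{\CPT}\big|_{\setAll_{j}}$ and $Q_{\CPT}\big|_{\setAll_{k}}$ are well-defined in the two directions, and the involution \eqref{eq:CPTdeltaInvolution} gives that they are mutually inverse.
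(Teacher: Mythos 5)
Your proposal is correct and follows essentially the same route as the paper: both arguments rest on Theorem~\ref{thm:CPT-QFT} to place the image of $\setAll_{j}$ in its reflected block and on the involution property \eqref{eq:CPTdeltaInvolution} to obtain invertibility. The only cosmetic difference is that you exhibit $Q_{\CPT}\big|_{\setAll_{k}}$ directly as a two-sided inverse, whereas the paper deduces surjectivity by noting that the four disjoint ranges $Q_{\CPT}(\setAll_{i})$ must together cover all of $\setAll$; your explicit bookkeeping of $\deltau t = t_0 + t_1$ under the double application is in fact slightly more careful than what the paper records.
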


\begin{proof}
 By Theorem~\ref{thm:CPT-QFT} $Q_{\CPT}$ is an injection of the set into its reflection.
 As $Q_{\CPT}$ is an involution (see \eqref{eq:CPTdeltaInvolution}), the union of the four ranges
$\bigcup_{1}^{4} Q_{\CPT}(\setAll_{i})$ is $\setEvolutions$. As  the ranges are disjoint, each
 $Q_{\CPT}(\setAll_{i})$,
 $i=1,2,3,4$ is surjective and therefore also a bijection. \qedhere
\end{proof}

\subsection{Two-Particle System}
\label{subsec:two-particle-system}

The notation of this section is in the quantum \mechanics setting, but it can be easily adapted to the QFT setting.
Consider a  two-fermions or a two-bosons system.  The state is then a ray element of the product of two Hilbert spaces and the evolution of a two-particle system made of  fermions (f) or of bosons (b) is given by
 \begin{align}
  \label{eq:two-particle-state}
   \ket{\psi_t^{\fer,\bos}}&=\frac{1}{\sqrt{C_t}}  \left (\ket{\psi^1_t}\ket{\psi^2_t}\mp \ket{\psi^2_t}\ket{\psi^1_t}
  \right )\,,
  \end{align}
 where $C_t$ is the normalization constant that may evolve over time and the signs ``$\mp$'' represent the fermions (``$-$'') and bosons (``+''). When the two states, $\ket{\psi^1_t}$ and $ \ket{\psi^2_t}$,  are orthogonal to each other,  $C_t=2$.
 Projecting on $\bra{\mathbf {r}_1}\bra{\mathbf {r}_2}$ and on $\bra{\mathbf {p}_1}\bra{\mathbf {p}_2}$ we get
 \begin{align}
     \psi^{\fer,\bos}(\mathbf {r}_1,\mathbf {r}_2 ,t)&=\frac{1}{\sqrt{C_t}}
     \left ( \psi_1(\mathbf {r}_1,t) \psi_2(\mathbf {r}_2,t)\mp \psi_1(\mathbf {r}_2,t) \psi_2(\mathbf {r}_1,t)  \right )\,,
      \\
     \psi^{\fer,\bos}(\mathbf {p}_1,\mathbf {p}_2 ,t)&=\frac{1}{\sqrt{C_t}}
     \left ( \phi_1(\mathbf {p}_1,t) \phi_2(\mathbf {p}_2,t)\mp \phi_1(\mathbf {p}_2,t) \phi_2(\mathbf {p}_1,t)  \right )\,.
 \end{align}
 The probability density functions are then
  \begin{align}
  \label{eq:density-two-particles}
     \rho_{\mathrm{r}}^{\fer,\bos}(\mathbf {r}_1,\mathbf {r}_2 ,t)& =|\psi^{\fer,\bos}(\mathbf {r}_1,\mathbf {r}_2 ,t)|^2
     \\
     &=\frac{1}{C_t}\left [ \rho_{\mathrm{r}}^1(\mathbf {r}_1,t) \rho_{\mathrm{r}}^2(\mathbf {r}_2 ,t)+\rho_{\mathrm{r}}^1(\mathbf {r}_2,t) \rho_{\mathrm{r}}^2(\mathbf {r}_1 ,t) \right .
     \\
     & \qquad \left . \mp 2 \Re \left (
     \psi_1(\mathbf {r}_1,t) \psi_2^*(\mathbf {r}_1,t) \psi_2(\mathbf {r}_2,t)
     \psi_1^*(\mathbf {r}_2,t)  \right) \right ]\,,
 \\
  \rho_{\mathrm{p}}^{\fer,\bos}(\mathbf {p}_1,\mathbf {p}_2 ,t)& =|\phi^{\fer,\bos}(\mathbf {p}_1,\mathbf {p}_2 ,t)|^2
     \\
     &=\frac{1}{C_t}\left [ \rho_{\mathrm{p}}^1(\mathbf {p}_1,t) \rho_{\mathrm{p}}^2(\mathbf {p}_2 ,t)+\rho_{\mathrm{p}}^1(\mathbf {p}_2,t) \rho_{\mathrm{p}}^2(\mathbf {p}_1 ,t) \right .
     \\
     & \qquad \left . \mp  2 \Re \left (
     \phi_1(\mathbf {p}_1,t) \phi_2^*(\mathbf {p}_1,t) \phi_2(\mathbf {p}_2,t)
     \phi_1^*(\mathbf {p}_2,t)  \right) \right ]\,.
 \end{align}
 The interference terms (the third terms) characterize the entanglement of the particles.
 The entropy of the two-particle system, from \eqref{eq:entropy-many-particles},  is then
  \begin{align}
     \label{eq:entropy-two-particles}
     S^{\fer,\bos}\left (\ket{\psi^1_t},\ket{\psi^2_t}\right )      &= - \int \! \diff^3 \mathbf {r}_1 \int \diff^3 \mathbf {r}_2\,   \rho_{\mathrm{r}}^{\fer,\bos}(\mathbf {r}_1,\mathbf {r}_2 ,t)
     \ln \rho_{\mathrm{r}}^{\fer,\bos}(\mathbf {r}_1,\mathbf {r}_2 ,t)
     \\
     & \quad - \int \! \diff^3 \mathbf {p}_1 \int \diff^3 \mathbf {p}_2\,   \rho_{\mathrm{p}}^{\fer,\bos}(\mathbf {p}_1,\mathbf {p}_2 ,t)
     \ln \rho_{\mathrm{p}}^{\fer,\bos}(\mathbf {p}_1,\mathbf {p}_2 ,t)
     \\
     &\quad - 6 \ln \hbar\, .
  \end{align}
The collision of two particles is of interest.  Let us  analyze  two coherent wave packet  solutions moving towards each other along the $x$-axis, and  with momenta $p_1$ and $-p_1$, starting at centers $c_1$ and $c_2$  and with the same position variance $\sigma^2$. They can be represented in position and momentum space as
\begin{align}
\Psi_{1}(x,t) &=\frac{\eu^{-\iu p_1 v_p(p_1) \, t}}{Z_1}\,  \normalx{x}{c_1 +v_g(p_1) \, t}{\sigma^2  +\iu \, t\, {\cal H}(p_1) }\eu^{\iu p_1 x} \,,
 \\
\Psi_{2}(x,t) &=\frac{\eu^{-\iu p_1 v_p(p_1) \, t}}{Z_1}\,\,  \normalx{x}{c_2 -v_g(p_1) \, t}{\sigma^2 +\iu \, t\, {\cal H}(-p_1) } \eu^{-\iu p_1 x}\, ,
\\
 \Phi_1(k, t) & =   \frac{\eu^{-\iu t\, v_p(p_1) p_1} }{Z_{p_1}} \, \, \normalx{k}{p_1}{(\sigma^2  +\iu \, t\, {\cal H}(p_1))^{-1} }\, \eu^{\iu (k-p_1)\left (c_1+v_g(p_1) \, t\right)}\,,
\\
 \Phi_2(k, t) & =   \frac{\eu^{-\iu t\, v_p(p_1) p_1} }{Z_{p_1}} \, \, \normalx{k}{-p_1}{(\sigma^2  +\iu \, t\, {\cal H}(-p_1))^{-1} }\, \eu^{\iu (k+p_1) \left (c_2-v_g(p_1) \, t\right)}\, .
\label{eq:two-coherent-states}
\end{align}

\begin{figure}
 \centering
\includegraphics[scale=0.3]{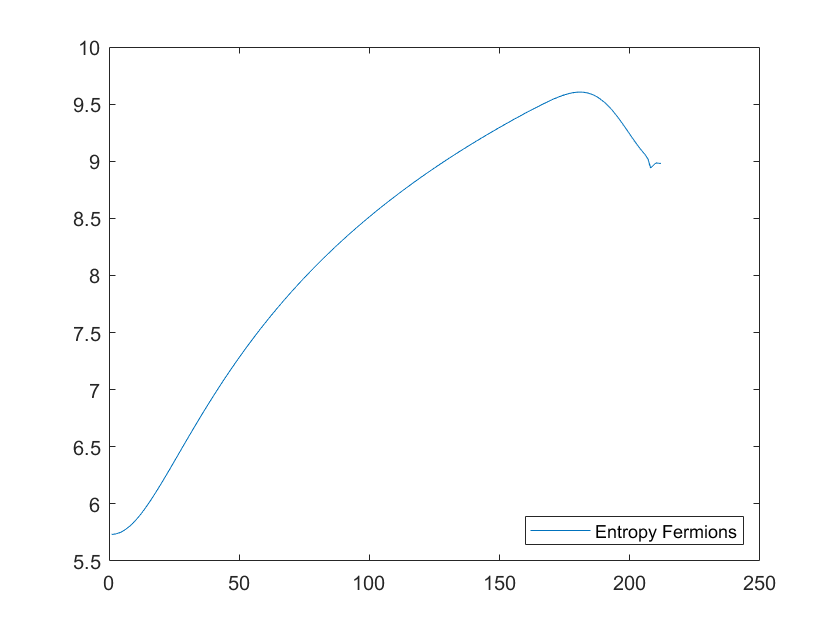}\includegraphics[scale=0.25]{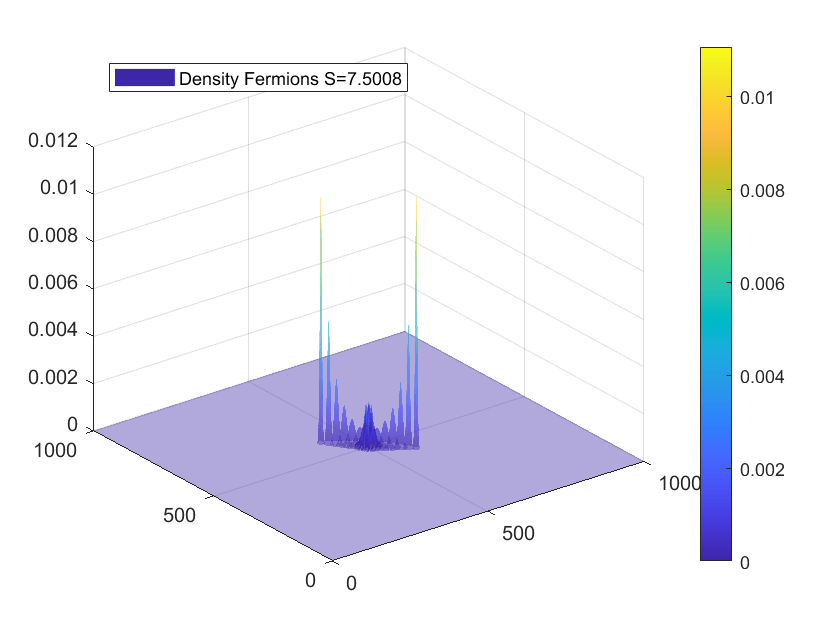}
 \\
 \quad  (a) $p_1=1, \frac{\hbar}{m}=1$ : Entropy versus time   \qquad  \qquad  $\rho_{x}^{\fer}(x_1,x_2 ,t)$  overlayed over time
 \\
 \includegraphics[scale=0.3]{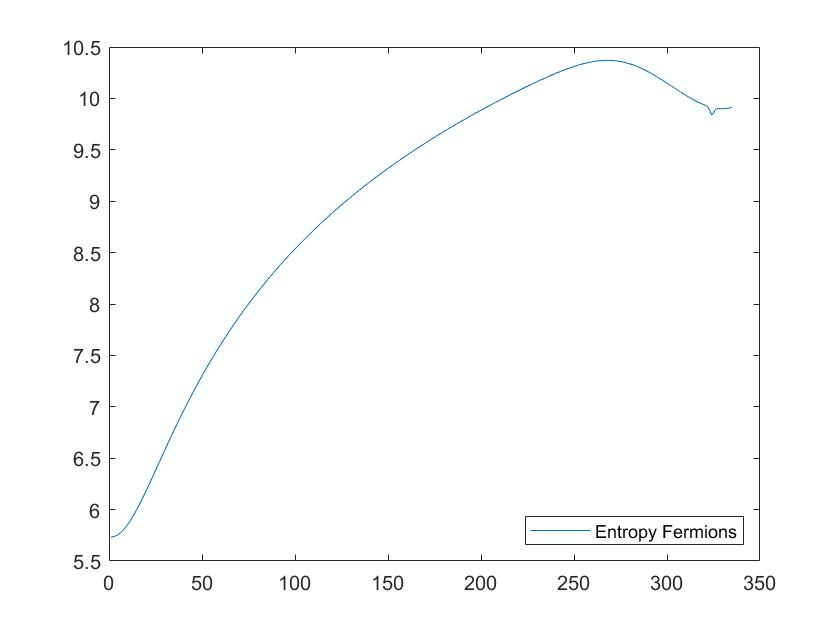}\includegraphics[scale=0.25]{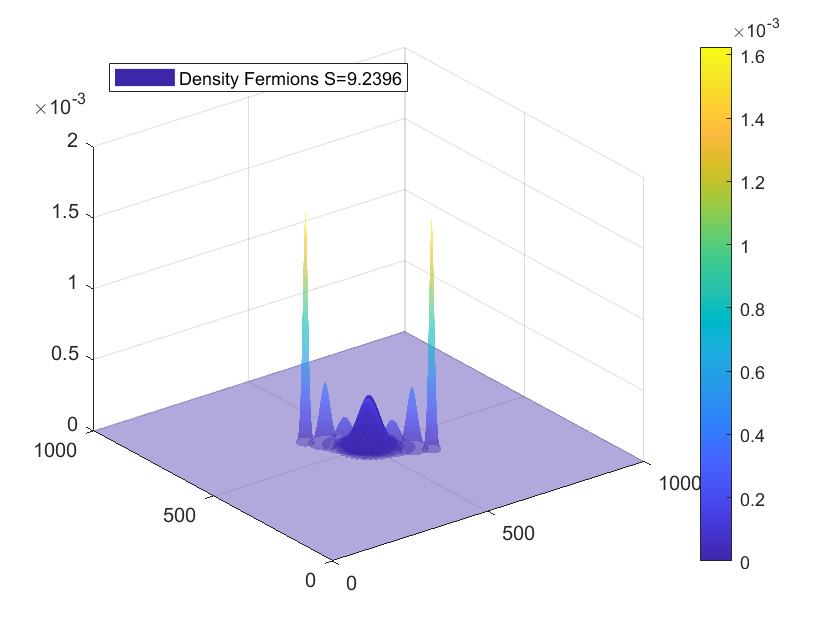}
 \\
 \quad (b) $p_1=1, \frac{\hbar}{m}=0.5$ : Entropy versus time  \qquad  \qquad  $\rho_{x}^{\fer}(x_1,x_2 ,t)$  overlayed over time
 \\
 \includegraphics[scale=0.3]{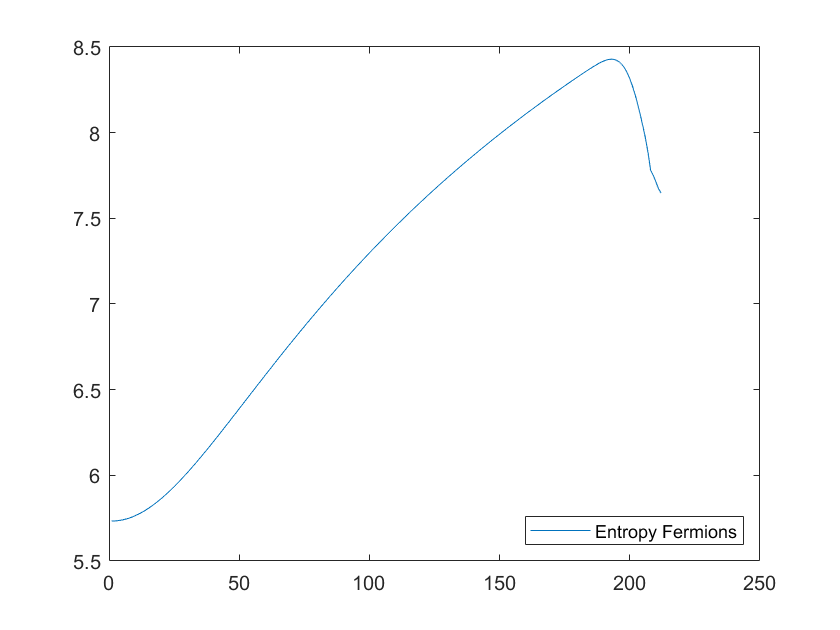}\includegraphics[scale=0.25]{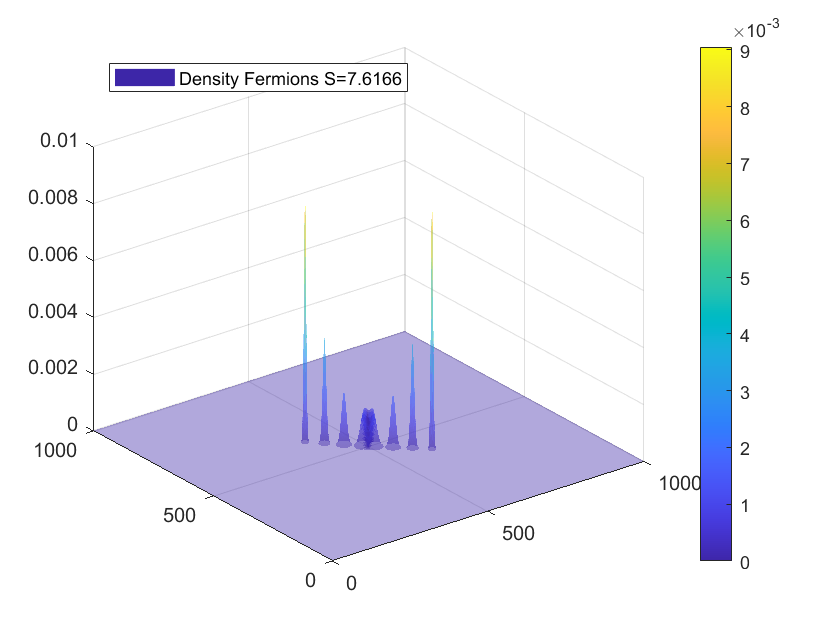}
 \\
  \quad  (c)  $p_1=2, \frac{\hbar}{m}=0.5$ : Entropy versus time  \qquad  \qquad   $\rho_{x}^{\fer}(x_1,x_2 ,t)$  {overlayed over time}
 \caption{{Collision of two fermions described by probability densities \eqref{eq:density-two-particles}, individual probability amplitudes \eqref{eq:two-coherent-states}, and  common parameters  $c_2=-c_1=150$, with speed of light $c=1$, a grid of $1\,000$ points for $x_1, x_2,k_1,k_2$. The three graphs reflect changes in $p_1$ and $\frac{\hbar}{m}$. The left column graphs show entropy values vs time. The right column graphs show snapshots of the density function at intervals of 20 time units,  overlaid on  a single plot. The $z$-axis represents the density values while the $x$-$y$ axes represent $x_1$-$x_2$ values. As the particles approach each other, the individual densities disperse, the maximum values are reduced, and the entropy increases. Only when the particles  are close to each other, the interference effect reduces the total entropy.}  }
 \label{fig:two-particle-collision-entropy}
\end{figure}

Figure~\ref{fig:two-particle-collision-entropy} show plots of the entropy versus time when such two particles in coherent states are moving towards each other. Next to each entropy plot vs time, there is a graph that shows snapshots of the density function $\rho_{x}^{\fer}(x_1,x_2 ,t)$ at intervals of 20 time units apart,  overlaid on a single plot.   When  the two particles  are far apart, the total entropy of the system is close to the sum of the two individual  entropies, and each one is increasing over time. As they come closer to each other, the overlap of the individual probabilities increases, with the resulting interference tending to lower the total entropy.
Such a competition between the two factors, the increase of the entropy of the individual particles and the decrease of entropy due to interference (entanglement), will result in an oscillation and the decrease in total entropy when the two particles come close together. The three graphs reflect changes in the parameters $p_1$ and $\frac{\hbar}{m}$, with $v_g(p_1)$ derived in \eqref{eq:Fourier-group-velocity} and ${\cal H}(p_1)$ derived in \eqref{eq:Fourier-group-Hessian}.  Comparing Figures~\ref{fig:two-particle-collision-entropy}(a) and (b): the larger is the particle mass (the smaller is $\frac{\hbar}{m}$) the slower is the speed $v_g(p_1)$, the slower is the collision ($x$-axis values), and the later  entropy  starts  decreasing. Thus, light, fast particles will enter the oscillation period sooner. Comparing Figures~\ref{fig:two-particle-collision-entropy}(b) and (c): The larger is the speed $v_g(p_1)$, the sooner is the collision ($x$-axis values), and  the sooner the entropy starts decreasing. Thus, it is possible that colliding particles will be in  \setOscillating, even if each one alone has a dispersion property that increases their entropy.

\section{The Entropy Law}
\label{sec:entropy-law}
In classical statistical \mechanics, the entropy provides a time arrow through the second law of thermodynamics \cite{clausius1867mechanical}.  We have  shown that due to the dispersion property of the fermions Hamiltonian some states in quantum \mechanics, such as  coherent states,  already obey such a law.  However, current quantum theory is time reversible.

To provide a time arrow for all particles, we propose  a law analogous to the second law of thermodynamics

\begin{postulate}[The Entropy Law]
 \label{postulate:1}
 The entropy of a physical evolution of a quantum system is an increasing function of time.
\end{postulate}

This law can be incorporated into the path integral formalism in its kernel propagation formulation:
\begin{align}
    \psi(x,t) = \int \diff y \int_{ {\cal P}_{y,x} \, , \, \entropyS^{\uparrow}} \diff {\cal P}_{y,x} \eu^{\iu {\cal A}_{t_0}^t({\cal P}_{y,x})}\, \psi(y,t_0)\,,
\end{align}
where ${\cal P}_{y,x}$ is a path from position $x$ to position $y$, ${\cal A}_{t_0}^t({\cal P}_{y,x})$ is the action associated with this path from $t_0$ to $t$, and $\entropyS^{\uparrow}$ represents the constraint that entropy must increase throughout this path, that is,  the integral is restricted  to only those paths in which the entropy increases. An alternative formulation would consider a standard QFT
Lagrangian and add to it a Lagrange multiplier term with the time varying entropy  being non-negative, obtaining  $ \mathscr{ L}(t) +\lambda  \partial_t \entropyS$.

This law aims to treat quantum \physics an irreversible  statistical theory, and thus to create the time arrow as the arrow of information loss (increase in entropy). Reversibility occurs for stationary states as the entropy $\entropyS$ remains constant.
Concerning other states,
\QCurves in the set \setDecreasing would  be ruled out and \QCurves in set \setOscillating would not be allowed to complete their evolution. The entropy law can be used to establish causes for physical phenomena for which such causes have not been known. We will  focus on \QCurves in \setOscillating and to what happens to particles when they can not complete the oscillations.

{In QFT, the operators $\Psi(\mathbf{r},t)$ and $\Phi(\mathbf{k},t)$ are expressed in terms of the creation and the annihilation operator and relate to each other via the Fourier transform. These operators act on a state $\ket{n}$ with $n$ particles producing new states with a  creation  or destruction of  a particle or anti-particles.   These operators are thought to act instantaneously and must satisfy conservation laws, e.g. in the hydrogen atom,  a photon created upon the jump of an electron from an excited state to the ground state must have the energy equal to the reduction of the electron's energy.  }

The cause for such instantaneous transformations (creation and annihilation) is  speculated to follow from the entropy law, as we examine and discuss next. The instant in the oscillation period when the entropy starts to decrease is when the  transitions of electrons in atoms occur with photons being emitted or absorbed, oscillations and decay of neutral particles happen, and when collisions of particles lead to their transformations into other particles.
For all such transformations  the conservation laws must be preserved, but it is the entropy law that  triggers  those transformations.

We now consider some scenarios of interest. What happens at the instance when particles are transformed into other particles or new particles are created, such as emissions of photons when electrons transition from excited states to the ground state in an atom?
These are instantaneous events, so could the entropy be decreased in such an event?
After all, the creation of new particles may require an increase in information, and if so, is there a source of information in the vacuum for reducing the entropy?
We do not have answers to such questions. However, if such instantaneous transitions with entropy decreaing are possible, then we exploit in Section~\ref{subsec:kaons} some consequences accounting for particle and antiparticle transformations.

\subsection{The Stability of the Ground State of the Hydrogen Atom  }
\label{subsec:stab-hydr-ground}

This section we start with the QED formulation for the hydrogen atom using a QFT setting. However, the solutions for the states of the electron are obtained from the \Schroedinger approximation, that is, they are solutions using the quantum \mechanics representation.

The QED Hamiltonian  for the hydrogen atom is
\begin{align}
    H(p,r,q) & = \sum_{i=1}^3 \frac{\left( p^i-\frac{\ee}{c} A^i(q)\right )^2}{2 m} - \frac{\ee^2}{r}+\sum_{\lambda=1}^2  \planckbar \omega_{q} \, a^{\dagger }_{\lambda} (q)\, a_{\lambda}(q)\,,
\end{align}
where   photon's helicity  $\lambda$ is $1$ or $2$,  $\omega_q=|q|c$,  the creation and the annihilation operators  of photons satisfy
 $[a_{\lambda}(p), a^{\dagger }_{\lambda'} (q)]=\delta_{\lambda, \lambda'}\delta(p-q)$, and $A= (A^1, A^2, A^3)$.  The electromagnetic vector potential is
 \begin{align}
  \tilde A^i(q) & = \sqrt{2\piu  \planckbar c^2}\sum_{\lambda=1}^2 \frac{1}{\sqrt{\omega_q}} \left ( \epsilon^i_{\lambda}(q)\, a_{\lambda}(q)  + \epsilon^{*\iu }_{\lambda}(q) \, a^{\dagger}_{\lambda} (q)\right)\,,
 \end{align}
 and in  the Coulomb Gauge ($\nabla \cdot  A=0$), for $q=|q| ( \sin \theta_q \cos \phi_q, \sin \theta_q \sin  \phi_q, \cos \theta_q)$, the polarizations satisfy  $\epsilon_{1}(q) = ( \cos \theta_q \cos \phi_q, \cos \theta_q \sin  \phi_q, \sin \theta_q)$ and $\epsilon_{2}(q)  = (-\sin \phi_q, \cos \phi_q, 0)$.

 The state of the atom can be described by $\ket{\electron}\ket{\photon}=\ket{n,l,m}\ket{q,\lambda}$,
where  $n,l,m$ represent quantum numbers of the electron and $q,\lambda$ represent  the momentum and helicity of the photon.
In the absence of radiation or emission from  the atom, the photon state is the vacuum  $\ket{0}$. We next consider the  Lyman-alpha transition, $\ket{n=2,l=1,m=0}\ket{0} \, \rightarrow \, \ket{n=1,l-0,m=0}\ket{q,\lambda}$ with the emission of a photon with wavelength
 $\lambda\approx \SI{121.567d-9}{\metre}$.

 We first evaluate the electron's entropy at both states $\ket{n=2,l=1,m=0},\ket{n=1,l-0,m=0}$. For simplicity, we can consider the \Schroedinger approximation to describe the electron state with the energy change in this transition of $\Updelta E_{n=2\rightarrow n=1}\approx-\left(\frac{1}{2^2} - 1\right) \times \SI{13.6}{\electronvolt} = \SI{10.2}{\electronvolt}$.
\begin{enumerate}
    \item[(i)]The position probability amplitudes described in \cite{bransden2003physics} and the associated entropies are
    \begin{align}
        \psi_{2, 1, 0}(\rho,\theta,\phi) &= \frac{1}{\sqrt{32\piu}} \left(\frac{1}{a_{0}}\right)^{\frac{3}{2}}\: \rho \ee^{-\frac{\rho}{2}} \cos(\theta)\: \rightarrow \,  \entropyS_{\mathrm{r}}(\psi_{2, 1, 0})\approx \num{6.120}+\ln \piu+3 \ln a_0\,,
        \\
        \psi_{1, 0,0}(\rho,\theta,\phi) &= \frac{1}{\sqrt{\piu}} \left(\frac{1}{a_{0}}\right)^{\frac{3}{2}}\: \ee^{-\rho} \:  \rightarrow \,  \entropyS_{\mathrm{r}}(\psi_{1, 0,0})\approx\num{3.000}+\ln \piu+3 \ln a_0 \,,
    \end{align}
     where $a_0 \approx \SI{5.292d-11}{\metre}$ is the Bohr radius and $\rho={r}/{a_0}$.
    \item[(ii)] The momentum probability amplitudes described in \cite{bransden2003physics} and the associated entropies are
     \begin{align}
        \Phi_{2, 1, 0}(p, \theta_p, \phi_p)  &= \sqrt{\frac{128^2}{2\piu p_0^3}} \, \frac{p}{p_0} \, \left (1+\left(2\frac{p}{p_0}\right)^2\right )^{-3} \,\cos (\theta_p) \,,\\
        & \hspace*{-2em} \rightarrow \hspace{0.75em} \entropyS_p(\Phi_{2, 1, 0})\approx \num{0.042}+3 \ln p_0\,,
        \\
        \Phi_{1, 0, 0}(p, \theta_p, \phi_p) &= \sqrt{\frac{32}{\piu \, p_0^3}} \left ( 1+ \left(\frac{p}{p_0}\right)^2\right )^{-2} \,, \\
        & \hspace*{-2em}  \rightarrow \hspace{0.75em}  \entropyS_p(\Phi_{1, 0, 0})\approx \num{2.422}+3 \ln p_0\,  ,
    \end{align}
     where $p_0={\si{\planckbar}}/{a_0}$.
     \item[(iii)] Therefore, $\Updelta \entropyS_{2,1,0\rightarrow 1,0,0}=\entropyS_{\mathrm{r}}(\psi_{1, 0, 0})+ \entropyS_p(\Phi_{1, 0, 0}) -\entropyS_{\mathrm{r}}(\psi_{2, 1, 0})- \entropyS_p(\Phi_{2, 1, 0}) \approx \num{-0.740}\,.$
\end{enumerate}

Thus, the entropy of the electron is actually reduced by {approximately} $\num{0.740}$ when it moves $\ket{n=2,l=1,m=0} \, \rightarrow \, \ket{n=1,l-0,m=0}$.

The momentum wave function  derived  by \cite{lombardi2020hydrogen} is given by  \begin{align}
  \label{eq:9}
 \Phi'_{210}(p, \theta_p, \phi_p) &= \frac{2 (-1+\iu)}{\pi^2}\sqrt{2}\frac{1}{(i-2p)^3} \frac{(\theta_p \cos\theta_p-\sin(\theta_p))}{\theta_p^2} \frac{(\eu^{2 \iu \pi \phi_p}-1)}{\phi_p}\,,\\
\Phi'_{100}(p, \theta_p, \phi_p)     &= \frac{-1+\iu}{2\pi^2}\sqrt{2}\frac{1}{(i-p)^2} \frac{\sin(\theta_p)}{\theta_p} \frac{(\eu^{2 \iu \pi \phi_p}-1)}{\phi_p}\,,
\end{align}
where  a point transformation from the Cartesian coordinate system to the spherical coordinate system using the conjugate momentum operator \eqref{eq:conjugate-momentum-DeWitt} was considered. This yields the entropies $\entropyS'_p(\phi'_{210}) \approx 0.556$ and $\entropyS'_p(\phi'_{100})\approx 2.667$ respectively. These values show a change in the entropy of $\Updelta \entropyS'_{2,1,0\rightarrow 1,0,0}=\entropyS_{\mathrm{r}}(\psi_{1, 0, 0})+ \entropyS'_{\mathrm{p}}(\Phi'_{1, 0, 0}) -\entropyS_{\mathrm{r}}(\psi_{2, 1, 0})- \entropyS'_{\mathrm{p}}(\Phi'_{2, 1, 0}) \approx \num{-1.009}$.

Thus the entropy of the electron, according to the  momentum wave function  derived  by \cite{lombardi2020hydrogen}, is actually reduced by {approximately} $\num{1.009}$ when it moves $\ket{n=2,l=1,m=0} \, \rightarrow \, \ket{n=1,l-0,m=0}$.

Note that the expected value of the kinetic energy,  ${p^2}/{2\mu}$, for the ground state is larger than for the excited states
(scaling by a factor ${1}/{n^2}$ \cite{bransden2003physics}), which is consistent with the momentum entropy being larger for the ground state. The expected value of $r^2$ (the square of the distance from the center) increases with $n^4$ and also depends on $l$ \cite{bransden2003physics}, which is consistent with the position entropy being larger for the excited state.

We next evaluate the photon's entropy.  Due to energy conservation, the  energy must satisfy $|q| c\approx \SI{10.2}{\electronvolt}$, where $c$ is the speed of light. The
uncertainty in the value of $|q| \approx \num{10.2}/c\, \si{\electronvolt} $ is very small,
following from the uncertainty in the energy change. The main uncertainty for the photon is in specifying the direction of emission.   The  electron in the initial state $\ket{n=2,l=1,m=0}$ has  total angular momentum $l=1$, with no angular momentum ($m=0$) along $z$. In the final state $\ket{n=1,l=0,m=0}$, the total angular momentum of the electron is zero with no angular momentum ($m=0$) along $z$.  The spin $1$ of the photon conserves the total angular momentum of the system and it is along the motion of the photon.  Thus, the photon must be moving perpendicularly to the $z$ axis, that is, $\theta_q=\frac{\piu}{2}$ and so the polarization vectors  must be $   \epsilon_{1}(q) = ( 0, 0, 1) $ and $\epsilon_{2}(q)  = (-\sin \phi_q, \cos \phi_q, 0)$. The uncertainty is in the angle $\phi_q$, completely unknown,  therefore giving a momentum entropy $\entropyS_q=\ln 2\piu\approx \num{1.838}$. Similarly, the freedom in the momentum direction $\phi_q$ and the lack of freedom along $\theta_q=\frac{\piu}{2}$ leads to freedom in the photon direction $\phi$, and the $z$-angle must be $\theta =\frac{\piu}{2}$, while its speed is constant $c$. Therefore, the radius $r=|\mathbf{r}|$ is also always fully determined.  Thus the position entropy is also $\entropyS_{\mathrm{r}}=\ln 2\piu=\num{1.838}$.
The entropy is larger from the small uncertainties in the other variables, and by Proposition~\ref{proposition:lower-bound-S}, the photon's entropy should be larger.

Therefore, indeed, the entropy increases, because
\begin{align}
     \Updelta \entropyS_{\ket{n=2,l=1,m=0}\ket{0} \, \rightarrow \, \ket{n=1,l-0,m=0}\ket{q,\lambda}} &>  2\times \num{1.838} - \num{0.740}
     =\num{2.936}\, .
\end{align}
The increase in entropy also follows from the momentum wave function derived by  \cite{lombardi2020hydrogen}, where the entropy  increases by more than $2.667$.

According to QED, and due to photon fluctuations of the vacuum, the state of the electron in an excited state is in a superposition with the ground state, and from Theorem~\ref{theorem:oscillations} the entropy would  decrease in a time interval longer than ${\piu}/{|\omega_{n=2,l=1,m=0}-\omega_{n=1, l=0, m=0}|}$. Thus, according to the proposed entropy law, the electron must jump to the ground state and emit a photon.

Consider now  an apparent time-reversing situation, where a machine sends photons to strike a hydrogen atom with the electron in the ground state, where the  photons have energy $E_{\gamma}=\hbar |\omega_{n=2,l=1,m=0}-\omega_{n=1, l=0, m=0}|$. For the atom to absorb such a photon, the photon must have  followed a precise direction  towards the  atom, and a very small uncertainty in the direction implies low entropy of the photon. Once the atom absorbs the photon,  the electron in the ground state has the energy required to jump to the excited state. Note that the entropy law  is again verified because, as we showed above, the entropy of the excited state is larger than the entropy of the ground state and the photon entropy must have been a very small addition to the ground state entropy. Thus this transition to an excited state, which does occur,  also satisfies the entropy law.

\subsection{Neutral Particle Oscillations and Decay}
\label{subsec:kaons}

We next speculate on the mechanism for the oscillations and decay of a neutral K meson (kaon  $\mathrm{K}^0$, containing a down quark and a strange antiquark)  \cite{PhysRevLett.13.138}.  We speculate that  the \QCurve of $\mathrm{K}^0$, $e_0=(\psi_0(\mathbf{r}), U(t), \frac{2\piu}{\Updelta w})$, is in $\setOscillating$. Then, a  $\mathrm{K}^0$ particle in state $\psi_0(\mathbf{r})$ evolves in time and enters a decreasing period
at $T=\frac{\piu}{\Updelta w}$ when the  the remaining segment of \QCurve $e_T=(\psi_T(\mathbf{r}), U(t), \frac{\piu}{\Updelta w})$ is in \setDecreasing. In order  to prevent such a decrease, an instantaneous transformation takes place, when quarks exchange bosons to transform $\mathrm{K}^0 \mapsto \bar{\mathrm{K}}^0$ (down antiquark and a strange quark), to create an antiparticle \QCurve $e_1$ in \setIncreasing.

One possible candidate to describe this instantaneous transformation is $Q_{\CPT_{\deltau}}$ of Theorem~\ref{thm:CPT-QFT} that brings a particle's \QCurve $e_T$ to its new antiparticle \QCurve $e_1=Q_{\CPT_{\deltau}}(e_T)$, such that \QCurve $e_1 \in \setIncreasing$, without a decrease of entropy.  There is however one observation to be made: the entropy of $\psi(\mathbf{r},T)$ would be higher then the entropy of the initial state of \QCurve $e_1=Q_{\CPT_{\deltau}}(e_T)$. We wonder whether for the case of kaons, given that such transformation is instantaneous,  the entropy could instantaneously decrease?

We speculate that this happens to $\mathrm{K}^0$ entering $\setDecreasing$, during oscillations of the quarks,  then transforming into its antiparticle $\bar{\mathrm{K}}^0$. Then after a  period of ${\piu}/{\Updelta w}$,  $\bar{\mathrm{K}}^0$ would again enter $\setDecreasing$ when the entropy would start to decrease, and then a transformation back to $\mathrm{K}^0$  or a decay to mesons $\piu$ would occur,   where  we also speculate that the entropy would increase.

We speculate that free neutrinos when in a superposition of states with different masses are in states in  $\setOscillating$ and can exist in such a superposition only during an time interval when the entropy is increasing. If they move along the momentum direction corresponding to the eigenvalue $\lambda_1^D$ in \eqref{eigenvalues-Hamiltonian}, then the superposition will transition to the heavier mass state, corresponding to a larger dispersion. Otherwise, in the other two orthogonal directions the superposition will transition to the lighter mass state.  However, if neutrinos are Majorana particles, as speculated \cite{albert2014search}, then  a transformation $Q_{\CPT}: \ket{\psi} \mapsto \ket{\phi}=\eu^{\iu H \frac{\piu}{\Updelta w} }\ket{\psi^*}$ could be triggered before the entropy start decreasing. Such a process would be similar to the one described for kaons above.

\subsection{Particle Collisions and Oscillations}
\label{subsec:collisions}

We simulated two colliding fermions, each described by a coherent state. When far apart, the entropy of each is increasing but when getting closer to each other, the entropy is oscillating, demonstrating that the system is in \setOscillating, as shown in Figure~\ref{fig:two-particle-collision-entropy}.

We speculate that some physical phenomena, such as high-speed collision $\ee^+ + \ee^- \rightarrow 2 \gamma$, produce new particles when the entropy is about to decrease, and a transformation must occur according to the  entropy law proposed. Thus the law restricts which outcomes from particle collisions can take place. Of course, a more quantitative analysis is needed to verify that this law predicts those type of physical phenomena.

\section{Conclusions}
\label{sec:conclusion}

We proposed an entropy and an entropy law to govern quantum laws of \physics,  providing the time arrow as the arrow of information loss. Physical phenomena  reported as spontaneous transitions or decays or outcomes of new particles from particle collisions may be caused by the entropy law when systems that are in oscillatory states must  transition to other states to assure an increase entropy while satisfying conservation laws.  Some transition of particles into antiparticles can also be governed by this law.  Other explorations of this law would be to study quantitatively the entropy in beta decay processes and particle collisions.
This law applied to oscillatory behaviors of particles suggests that  instantaneous events of creation and annihilation of particles  occur and are irreversible.

We are left wondering what happens at the instance when particles are transformed into other particles or new particles are created. Could the entropy decrease in those instantaneous events?
After all, the creation of new particles may require an increase in information, and if so, is there a source of information in the vacuum for decreasing the entropy?
If so,  after one of those transformations,  moving backward in time for shorter periods of time to increase entropy can occur during oscillations. Also, could it be that the so called ``collapse of the wave function'' after measurements are made, a much controversial subject with different schools of thought, is also a result of the entropy law and not just a decoherence effect?

While our focus of development here was for fermions, a study of entropy evolution for bosons could also be of interest. Also, the study of retarded potentials, solutions to Maxwell equations or other Gauge fields that travel backwards in time,  may become clearer when studied under the entropy law.

The proposed entropy law governs  a one-particle system and thus it is an intrinsic property of particles.  Quantum \mechanics is thought to become classical \mechanics via one or more of the following transformations: the limit  $ \planckbar \rightarrow 0$,   Ehrenfest theorem, or the WKB approximation or  decoherence, though no complete proof exists. In a system of multiple particles, such approximations may  lead to statistical \mechanics. In that case, the proposed entropy  may lead to the  Gibbs entropy (up to the Boltzmann constant), and quantum effects may account for the blur needed to prove the H-theorem of Gibbs. Then,  the second law of thermodynamics  may follow from the entropy law proposed here.

\section{Acknowledgement}
This material is partially based upon work supported by both the National Science Foundation under Grant No.~DMS-1439786 and the Simons Foundation Institute Grant Award ID 507536 while the first author was in residence at the Institute for Computational and Experimental Research in Mathematics in Providence, RI, during the spring 2019 semester ``Computer Vision''  program.

 \appendix
 \section{Dirac Spinors}
 \label{sec:quantum-review}

We now review the dispersion properties of the \Schroedinger and Dirac equations. The motivation is to show that these two Hamiltonians have intrinsic properties to disperse  any localized initial fermion  distribution,  as we study in Section~\ref{sec:dispersion-coherent-states}.

The free particle Hamiltonians are $H^{\mathscr{S}} ={\frac {-\planckbar ^{2}}{2m }}\nabla ^{2}$ and $H^{\mathscr{D}} = -\iu \planckbar\gamma^0 \vec{\gamma} \cdot \nabla+ m c \gamma^0 $, for the \Schroedinger equation (superscript ${\mathscr{S}}$) and the Dirac equation (superscript ${\mathscr{D}}$), respectively. We refer to the Dirac spinor solution by $\psi(\mathbf{r},t)$.

These are descriptions in position-time space, and we can also write them in Fourier space. Both Hamiltonians are functions of the momentum operator and therefore can be diagonalized in the
spatial Fourier domain $\ket{\mathbf {k}}$ basis,  to obtain respectively
\begin{align}
\omega^{\cal S}(\mathbf{k}) & =\frac{\planckbar}{2m} \matrixsym{k}^2\,,
\\
\omega^{\cal D}(\mathbf{k})& =\pm c \sqrt{ \matrixsym{k}^2+\frac{m^2}{\planckbar^2} c^2}\,,
\label{eq:Fourier-Hamiltonians}
\end{align}
where $\omega^{\cal S,D}(\mathbf{k})$ are the frequency components of the respective Hamiltonians.  The group velocity becomes respectively
 \begin{align}
\mathbf{v_g}^{\cal S}(\mathbf{k})&=\nabla_{\mathrm{k}} \omega^{\cal S}(\mathbf{k})=\frac{\planckbar}{m} \mathbf {k} \,,
\\
\mathbf{v_g}^{\cal D}(\mathbf{k})&=\nabla_{\mathrm{k}} \omega^{\cal D}(\mathbf{k})=\pm \frac{\planckbar}{m}\frac{\mathbf {k}}{\sqrt{1+(\frac{\planckbar \matrixsym{k}}{m c})^2}}\,.
\label{eq:Fourier-group-velocity}
 \end{align}
In  \eqref{eq:w-dispersion} we use the Taylor expansions of~\eqref{eq:Fourier-Hamiltonians} up to the second order, thus  requiring  the Hessians $\hessian_{ij}^{\cal S,D}(\mathbf{k}) =\frac{\partial^2 \omega^{\cal S,D}(\mathbf{k}) }{\partial \matrixsym{k}_i \partial \matrixsym{k}_j}$ with
\begin{align}
\hessian_{ij}^{\cal S}(\mathbf{k}) &=\frac{\planckbar}{m} \deltau_{i,j} \,,
\\
\hessian_{ij}^{\cal D}(\mathbf{k})&= \pm \frac{\planckbar}{m} \left (1+\left (\frac{\planckbar \matrixsym{k}}{m c}\right)^2\right)^{-\frac{3}{2}}\left [ \deltau_{i,j}\left (1+\left (\frac{\planckbar \matrixsym{k}}{m c}\right)^2 \right) -\left (\frac{\planckbar \matrixsym{k}_i}{m c}\right) \left (\frac{\planckbar \matrixsym{k}_j}{m c}\right)\right ]\, .
\label{eq:Fourier-group-Hessian}
\end{align}

A Hessian gives a measure of dispersion of the wave. For the \Schroedinger equation,  the larger is the mass of the particle, the smaller is the dispersion. The eigenvalues for the Dirac equation are
\begin{align}
\label{eigenvalues-Hamiltonian}
\lambda_1^D&=\pm \frac{\planckbar}{m}\left (1+\left (\frac{\planckbar \matrixsym{k}}{m c}\right)^2\right)^{-\frac{3}{2}}=\pm \planckbar \frac{m^2}{\left (m^{2}+\mu^2(\matrixsym{k})\right)^{\frac{3}{2}}}\,,
\\
\lambda_{2,3}^D& =\pm \frac{\planckbar}{m} \left (1+\left (\frac{\planckbar \matrixsym{k}}{m c}\right)^2\right)^{-\frac{1}{2}}=
\pm \planckbar \frac{1}{(m^2+\mu^2(\matrixsym{k}))^{\frac{1}{2}} }\, ,
\end{align}
where $\mu(\matrixsym{k})={\planckbar \matrixsym{k}}/{c}$ is a measure of the kinetic energy in mass units, and the second eigenvalue has multiplicity two. Thus, for both equations the Hessian is positive definite for positive energy solutions. For $\lambda_{2,3}^\mathscr{D}$, the larger is the mass of the particle, the smaller are the eigenvalues and the dispersion. However, for $\lambda_1^\mathscr{D}$ and  $m \le \sqrt{2} \mu(\matrixsym{k})$,
the eigenvalues and dispersion  increase as the mass increases.

\section{CPT  Transformations}
\label{sec:CPT}

In Section~\ref{subsec:cpt} we show the entropy to be invariant under CPT transformations and in  Section~\ref{subsec:time-reflection} we investigate transformation in a QFT that map \QCurves in \setDecreasing to \QCurves in \setIncreasing. A brief review of discrete symmetries that could be present in a QFT may then be useful. More specifically, we  briefly review the  operations of Parity Change, Charge Conjugation, and Time Reversal associated with possibly three discrete symmetries of a QFT. Our focus is on Dirac spinors (bispinors), but  the main concepts  hold for scalar fields and  bosons as well. We start with the motion equation
\begin{align}
    \mathrm{i} \planckbar {\frac {\partial }{\partial t}} \Psi(\mathbf{r},t) =H \Psi(\mathbf{r},t)\,,
\end{align}
and we use Dirac's Hamiltonian $H^{\mathscr{D}} = -\iu \planckbar\gamma^0 \vec{\gamma} \cdot \nabla+ m c \gamma^0 $ to specify the fermions.

The results below apply also to a  QFT  where the Hamiltonian  is given by
\begin{align}
    {\cal H}^{\mathscr{D}}=\int \diff^3 \mathbf{r} \, \Psi^{\dagger}(\mathbf{r},t) \left ( -\iu \planckbar\gamma^0 \vec{\gamma} \cdot \nabla+ m c \gamma^0  \right )\Psi(\mathbf{r},t)\,.
\end{align}

To extend the results for the gauge field $A^{\mu}=(\phi, \vec A)^{\tran}$ we can replace the derivative $\partial_{\mu} \mapsto D_{\mu}=\partial_{\mu} -\iu \frac{e}{c} A_{\mu}$ in the motion equation. Then,  we obtain the well-known properties of those operators acting on photons. We have $PA^{\mu}(\mathbf{r},t)P^{-1}=-A^{\mu}(-\mathbf{r},t)$, $C A^{\mu}(\mathbf{r},t) C^{-1}=-A^{\mu}(\mathbf{r},t)$, $T A^{\mu}(\mathbf{r},t) T^{-1}=-A^{\mu}(\mathbf{r},t)$. Similarly, one can extend the covariant derivative to include all gauge fields and to obtain the properties of these symmetries acting on such fields.

For $\psi^{\mathrm{P}}$, $\psi^{\mathrm{C}}$, and $\psi^{\mathrm{T}}$, used in the following discussion, see Definition~\ref{def:cpt-states},

\subsection{Parity  Change }

The  operator $P$  effects the transformation $\mathbf{r} \mapsto -\mathbf{r}$. The main property $P$ must satisfy  is
\begin{align}
  P H_{-r} P^{-1} &= H  \qquad \text{(invariance constraint to solve for $P$)}\,,
\label{eq:parity-change-P}
\end{align}
where
\begin{align}
H_{-r} &=  \iu \planckbar\gamma^0 \vec{\gamma} \cdot \nabla_{-\mathbf{r}}+ m c \gamma^0\,.
\end{align}
Thus, \eqref{eq:parity-change-P} yields
\begin{align}
    P  \gamma^0 P^{-1} &= \gamma^0 \qquad {\rm and} \qquad P  \gamma^0 \vec{\gamma} P^{-1}=-\gamma^0 \vec{\gamma}
    \\
    P=\gamma^0 & \qquad \text{( unitary solution up to a sign).}
\end{align}
and $\Psi^{\mathrm{P}}(-\mathbf{r},t)$  is also a solution of the motion equation.

\subsection{Charge Conjugation}

Charge conjugation  transforms particles $\Psi(\mathbf{r},t)$ into antiparticles  $\overline{\Psi}^\tran(\mathbf{r},t)= (\Psi^{\dagger}\gamma^0)^\tran(\mathbf{r},t)$. We focus on Dirac spinors where charge conjugation is described by  an operator $\mathscr{C}=C \hat K^\mathrm{C}$, where $\hat K^\mathrm{C}$ acts on an operator $O$ as follows $\hat K^C O=\overline{O}^{\tran}= \left((\gamma^0 O)^{\dagger}\right)^{\tran}$. Then, the main property  $C$ must satisfy is
  \begin{align}
    -C\left( \gamma^0 H \gamma^0\right)^{\tran} C^{-1} &= H\,.
  \label{eq:charge-conjugation-C}
\end{align}
Thus, $ C \gamma^{\mu } C^{-1}= -  \gamma^{\mu \tran}$ and  $\Psi^{\mathrm{C}}(\mathbf{r},t)$ is  also  a solution to the motion equation.  In the standard representation $C=\iu \gamma^2\gamma^0$, up to a phase.

\subsection{Time Reversal}

The transformation effects $t \mapsto -t$ and is carried by the operators $\mathscr{T}=T \hat K$, where $\hat K$ applies conjugation, so $\hat K$ is an antilinear and antiunitary operator, while $T$ is a linear unitary operator which  acts on the spinor structure.  Thus $\mathscr{T}$ is an antilinear and an antiunitary   operator.  Then, the main property  $T$ must satisfy is
\begin{align}
  T H^*T^{-1}  &= H
  \label{eq:time-reversal-T}
\end{align}
and $\Psi^\mathrm{T}(\mathbf{r},-t)$ is  also  a solution to the motion equation. For fermions
\begin{align}
    T  \gamma^{0*} T^{-1} &= \gamma^0 \qquad {\rm and} \qquad T  \vec{\gamma}^* T^{-1}= \vec{\gamma}
    \\
    T=\iu \gamma^1\gamma^3 & \qquad \text{(standard representation, up to a phase)}.
\end{align}
Note that $[H,\mathscr{T}]=0$ and $\mathscr{T}^2= T \hat K T \hat K=T T^* \hat K \hat K=T T^*=-\Identitymatrix$. This implies that Kramers degeneracy applies.

\bibliographystyle{abbrv}
\bibliography{gk01}
\end{document}